\newtheorem{theorem}{Theorem}[section]
\newtheorem{lemma}[theorem]{Lemma}
\newtheorem{corollary}[theorem]{Corollary}
\newtheorem{conjecture}[theorem]{Conjecture}
\theoremstyle{definition}
\newtheorem{mechanism}{Mechanism}
\newcommand{\hx}{\hat{x}}
\newcommand{\hy}{\hat{y}}
\newcommand{\crit}{\text{crit}}
\newcommand{\arc}[1]{(#1)}
\newcommand{\carc}[1]{[#1]}
\newcommand{\pair}[1]{\langle #1 \rangle}
\newcommand{\bigO}{\mathcal{O}}
\newcommand{\bx}{\mathbf{x}}
\renewcommand{\exp}{\mathbb{E}}
\newcommand{\mc}{\text{mc}}
\renewcommand{\sc}{\text{sc}}
\newcommand{\opt}{\text{OPT}}
\renewcommand{\(}{\left(}
\renewcommand{\)}{\right)}
\newcommand{\cost}{\text{cost}}
\newcommand{\lrm}{\text{lrm}}
\newcommand{\rc}{\text{rc}}
\newcommand{\cen}{\text{cen}}
\newcommand{\h}[1]{\hat{#1}}%
\newcommand{\xopt}{x^*_i}
\newcommand{\xnewopt}{\bx^*}
\newcommand{\half}{\frac{1}{2}}
\newcommand{\halfx}[1]{\frac{#1}{2}}
\newcommand{\quartx}[1]{\frac{#1}{4}}
\begin{document}

\title{Strategyproof Approximation Mechanisms\\ for Location on Networks}
\author{
Noga Alon\thanks{
	Microsoft Israel R\&D Center,
	13 Shenkar Street, Herzeliya 46725, Israel,
	and Schools of Mathematics and Computer Science,  
	Tel Aviv University, Tel Aviv, 69978, Israel,
	Email: \texttt{nogaa@tau.ac.il}.
	Research supported in part by a USA Israeli BSF grant, by a grant from
the Israel Science Foundation, by an ERC advanced grant and by the Hermann Minkowski Minerva Center for Geometry at Tel Aviv University.
}
\and
Michal Feldman\thanks{
	Microsoft Israel R\&D Center,
	13 Shenkar Street, Herzeliya 46725, Israel,
	and School of Business Administration and Center for the Study of Rationality, 
	The Hebrew University of Jerusalem, Jerusalem 91904, Israel.
	Email: \texttt{mfeldman@huji.ac.il} 
}
\and
Ariel D. Procaccia\thanks{
	Microsoft Israel R\&D Center,
	13 Shenkar Street, Herzeliya 46725, Israel. 
	Email: \texttt{arielpro@cs.huji.ac.il}.} 
\and 
Moshe Tennenholtz \thanks{
	Microsoft Israel R\&D Center,
	13 Shenkar Street, Herzeliya 46725, Israel,
	and Technion, IIT, 
	Haifa 32000, Israel.  
	Email: \texttt{moshet@microsoft.com}}
}
\date{}
\maketitle

\begin{abstract}
We consider the problem of locating a facility on a network, represented by a graph. A set of strategic agents have different ideal locations for the facility; the cost of an agent is the distance between its ideal location and the facility. A \emph{mechanism} maps the locations reported by the agents to the location of the facility. Specifically, we are interested in social choice mechanisms that do not utilize payments. We wish to design mechanisms that are \emph{strategyproof}, in the sense that agents can never benefit by lying, or, even better, \emph{group strategyproof}, in the sense that a coalition of agents cannot all benefit by lying. At the same time, our mechanisms must provide a small approximation ratio with respect to one of two optimization targets: the social cost or the maximum cost.

We give an almost complete characterization of the feasible truthful approximation ratio under both target functions, deterministic and randomized mechanisms, and with respect to different network topologies. Our main results are: We show that a simple randomized mechanism is group strategyproof and gives a $(2-2/n)$-approximation for the social cost, where $n$ is the number of agents, when the network is a circle (known as a \emph{ring} in the case of computer networks); we design a novel ``hybrid'' strategyproof randomized mechanism that provides a tight approximation ratio of 3/2 for the maximum cost when the network is a circle; and we show that no randomized SP mechanism can provide an approximation ratio better than $2-o(1)$ to the maximum cost even when the network is a tree, thereby matching a trivial upper bound of two.
\end{abstract}

\setcounter{page}{0} \thispagestyle{empty} 
\newpage

\section{Introduction}

We consider a setting which consists of a network (represented by a graph) and a set of strategic agents. We would like to locate a facility on the network, in a way that is based on the preferences of the agents. However, the agents may disagree on the ideal location for the facility; in particular, the \emph{cost} of an agent given a facility location is the distance of the facility from the agent's ideal location. 

This abstract setting has many natural interpretations. For example, a typical social choice scenario concerns a network of roads in a town. The mayor would like to choose a location for a public facility, such as a library or a fire station. Some citizens would like the facility to be close to their homes whereas others may wish it to be adjacent to their workplace, but they all have (presumably different) ideal locations. 

A second, prominent example is obtained by supposing that the network is a telecommunications network, such as a local computer network or the Internet. In this case, the agents are the network users or service providers, and the facility can be, e.g., a filesharing server or a router. This interpretation also motivates attention to specific, common network topologies such as tree networks (also known as hierarchical networks), star networks (which are, graph-theoretically speaking, a special case of trees), and ring networks.

Finally, Schummer and Vohra~\cite{SV04} point out that the network can also be \emph{virtual} rather than physical. For instance, a group of people may wish to schedule a daily meeting or task (such as a file backup). In this case, the network is simply a single cycle around the face of a (24 hour) clock; a person whose ideal time is 23:59 should be indifferent between 23:57 and 00:01.  

A \emph{mechanism} in the above setting is a function that receives the ideal locations of the agents as input, and returns the facility location. Our game-theoretic goal is to design mechanisms that are truthful or \emph{strategyproof} (SP), in the sense that an agent cannot benefit by misreporting its ideal location, regardless of the reports of the other agents. Moreover, we would like our mechanisms to be approximately optimal with respect to a target function, where approximation is defined in the usual sense by looking at the worst-case ratio between the cost of the mechanism's solution and the cost of the optimal solution. Specifically, we are interested in two target functions: the social cost (the sum of agents' distances to the facility) and the maximum cost (the largest distance to the facility). In the road network example minimizing the social cost makes more sense if the facility is a library, whereas one would wish to minimize the maximum cost if the facility were a fire station.

Crucially, if one allows the mechanism to make payments, that is, if money is available, then optimal truthful mechanisms can be obtained by using generic mechanism design techniques, such as augmenting the optimal solution with Vickrey-Clarke-Groves (VCG) payments (see, e.g.,~\cite{Nis07}). However, in many settings money cannot be used to induce truthfulness, due to ethical or legal considerations (see, e.g.,~\cite{SV07}). Moreover, in Internet (or, in general, telecommunications) environments making payments is often technically infeasible due to security and banking issues. Finally, mechanisms like VCG are known to suffer from a host of disadvantages and paradoxes~\cite{Roth07}. Therefore, we will be interested in moneyless mechanisms whose very definition precludes payments. Such mechanisms commonly appear in the social choice literature, where they are sometimes known as \emph{social choice rules}. It is important to note that we do not use approximation to circumvent computational hardness (as our optimization problems are tractable), but rather approximation is employed to obtain strategyproofness (at the expense of the optimality of the solution) without resorting to payments.

\paragraph{Related work.} 
The agenda of \emph{approximate mechanism design without money}, namely the study of \emph{moneyless} SP approximation mechanisms for optimization problems, was recently explicitly advocated by Procaccia and Tennenholtz~\cite{PT09}, but can be traced back to work on incentive compatible learning by Dekel et al.~\cite{DFP08}. The basic setting of Procaccia and Tennenholtz is a very special case of ours, where the network is, very simply, a line. It is trivial that if the target function is the social cost, there is an optimal (moneyless) SP mechanism. However, if one wishes to minimize the maximum cost, the best deterministic SP mechanism has an approximation ratio of two, whereas the best randomized SP mechanism has a ratio of 3/2. These straightforward results are summarized in the left column of Table~\ref{tab:results}. Procaccia and Tennenholtz focus on two extensions of the basic setting: location of two facilities on a line, and location of one facility on a line when each agent controls multiple points. All their results hold only when the agents are located on a line, and they do not consider (more general) networks. 

The model investigated by Schummer and Vohra~\cite{SV04} is identical to ours; their goal is to identify the deterministic SP mechanisms in this setting. Schummer and Vohra give a characterization of all the SP mechanisms when the underlying network is a tree. Furthermore, the authors demonstrate that if the network contains a cycle, any SP mechanism is almost dictatorial, in the sense that one fixed agent dictates the location of the facility. The exact notion of dictatorship is discussed in the sequel. This result is analogous to the celebrated Gibbard-Satterthwaite impossibility theorem~\cite{Gib73,Sat75}, but holds for the case where the preferences of the agents are restricted by the topology of the network. Note that Schummer and Vohra do not consider optimizations problems and approximation, nor do they study randomized mechanisms.  

A significant body of work (see, e.g.,~\cite{HT81,Lab85,BL86}) deals with a model practically identical to ours, and considers several variations on the following question: how bad can a Condorcet point be in terms of its social cost? A Condorcet point is a location in a network that is preferred by more than half the agents to any other location. The quality of the Condorcet point is measured by bounding the ratio between its social cost and the social cost of the optimal location. In this respect the results are reminiscent of approximation, but the approach is descriptive rather than algorithmic. In addition, this line of work does not directly deal with incentives. 

Taking the algorithmic perspective, our work is related to the literature on approximation algorithms for $k$-median and $k$-center problems (see, e.g.,~\cite{BE96,ARR98}). However, these problems are tractable when only one facility must be located (i.e., $k=1$), and this is the case we deal with. As mentioned above, the complexity of our problem (and, hence, the need for approximation) stems from game-theoretic considerations rather than the combinatorial richness of the problem.

\paragraph{Our results and techniques.}

\setlength{\tabcolsep}{2pt}
\begin{table}
\begin{center}
\small{
\begin{tabular}{|c|c|c|c|c|c|}
\cline{3-6}
\multicolumn{2}{c|}{} & \textbf{Line~\cite{PT09}} & \textbf{Tree} & \textbf{Circle} & \textbf{General}\\
\hline
\multirow{3}{*}{\textbf{SC}} & \textbf{Det} & UB 1 GSP & UB 1 GSP & LB $\Omega(n)$ SP~\cite{SV04} & LB $\Omega(n)$ SP~\cite{SV04}\\
\cline{2-6}
 & \multirow{2}{*}{\textbf{Ran}} & \multirow{2}{*}{UB 1 GSP} & \multirow{2}{*}{UB 1 GSP} & UB $2-\frac{2}{n}$ \small{GSP (Thm~\ref{thm:sc_rand_ub}+\ref{thm:gsp})} & UB $2-\frac{2}{n}$ \small{SP (Thm~\ref{thm:sc_rand_ub})}\\
& & & & LB open & LB open\\
\hline
\multirow{4}{*}{\textbf{MC}} & \multirow{2}{*}{\textbf{Det}} & UB 2 GSP & UB 2 GSP & UB 2 GSP & UB 2 GSP\\
& & LB 2 SP & LB 2 SP & LB 2 SP & LB 2 SP\\
\cline{2-6}
 & \multirow{2}{*}{\textbf{Ran}} & UB 3/2 GSP & UB $2-\frac{2}{n+2}$ \small{SP}& UB 3/2 SP (Thm~\ref{thm:mm_circle_ub}) & UB 2 GSP\\
& & LB 3/2 SP & LB $2-o(1)$ \small{SP (Thm~\ref{thm:mm_rand_lb})} & LB 3/2 SP & LB $2-o(1)$ \small{SP (Thm~\ref{thm:mm_rand_lb})}\\
\hline
\end{tabular}
}
\end{center}
\caption{Summary of our results. The columns correspond to the graph topology, whereas the rows correspond to either social cost (SC) or maximum cost (MC). The rows are further divided into deterministic mechanisms (Det) or randomized mechanisms (Ran). The cells contain the bounds, where UB and LB stand for upper bound and lower bound, respectively, while SP and GSP stand for strategyproof and group strategyproof, respectively (ideally the upper bound is GSP while the lower bound is SP). The number of agents is $n$. Cells (or columns) with a citation contain results that either appear in, or follow directly from, previous papers. Cells with a theorem number in parentheses contain our main results.}
\label{tab:results}
\end{table}

This paper is the first to study SP (moneyless) approximation mechanisms for facility location in networks. More generally, it is also a very early, and in our opinion the most technically significant, instance of approximate mechanism design without money. Our almost complete characterization of the feasible SP approximation bounds has given rise to three major results; their proofs are elementary but rather complicated and involve
some subtle combinatorial and probabilistic ideas.

Our results are summarized in Table~\ref{tab:results}. We seek SP mechanisms, but in some cases succeed in obtaining even stronger guarantees by showing \emph{group strategyproofness (GSP)}; a mechanism is GSP if whenever a coalition of agents lies, at least one of the members of the coalition does not gain from the deviation. 

The upper part of the table contains our results regarding minimization of the social cost, given in Section~\ref{sec:sc}. For the case where the network is a line, there is a straightforward optimal deterministic GSP mechanism. In Section~\ref{subsec:sc_tree}, we observe that a generalization of this mechanism is GSP and optimal when the network is a tree. 

On the other hand, when the network is a circle (e.g., a ring network or the face of a clock), it follows directly from the results of Schummer and Vohra~\cite{SV04} that no deterministic SP Mechanism can obtain an approximation ratio better than $\Omega(n)$. However, randomization turns out to be quite powerful in this context. Mechanism~\ref{mech:sc_rand_ub} simply chooses an agent at random and returns its ideal location. We observe that this trivial mechanism has an approximation ratio of $2-2/n$, where $n$ is the number of agents, for the social welfare with respect to any network topology. Our main result in Section~\ref{sec:sc} is that Mechanism~\ref{mech:sc_rand_ub} is GSP when the network is a circle. We show that this result can be explicitly stated as follows (with $S^1$ denoting the unit circle); it may be of interest independently of its game-theoretic interpretation. 
\medskip

\noindent\textbf{Theorem~\ref{thm:gsp}}. Let $x_1,\ldots,x_n,y_1,\ldots,y_n\in S^1$, and denote the distance on $S^1$ between $x\in S^1$ and $y\in S^1$ by $d(x,y)$. Then there exists $i\in \{1,\ldots,n\}$ such that
$$
\sum_{k=1}^n d(x_i,x_k) \leq \sum_{k=1}^n d(x_i,y_k) ~~ .
$$
\medskip

The combinatorial heart of the theorem's proof is the identification of \emph{nearly-antipodal pairs}: pairs of points $\langle x_i,x_j\rangle$ such that there are no other points $x_k$ in the arc between $x_i$ and the antipodal point of $x_j$, and vice versa. Our proof establishes that there exists such a pair that do not both gain from the deviation, that is, the above equation holds with respect to one of its two members. 

In Section~\ref{sec:mc}, we investigate SP mechanisms for minimization of the maximum cost. We observe that even a dictatorship gives us a deterministic GSP 2-approximation mechanism. Moreover, this bound is tight with respect to deterministic mechanisms, even if the network is a line.

In Section~\ref{subsec:mc_circle}, we deal with randomized mechanisms for the case where the network is a circle, and present Mechanism~\ref{mech:mm_circle_ub}. This mechanism works by combining two mechanisms: one is applied when all the agents are located on one semicircle, whereas the other is applied when the agents are not located on one semicircle. We have the following theorem. 
\medskip

\noindent\textbf{Theorem~\ref{thm:mm_circle_ub}.}
Assume that the network is a circle. Then Mechanism~\ref{mech:mm_circle_ub} is an SP 3/2-approximation mechanism for the maximum cost.
\medskip

The result matches an SP lower bound of 3/2. Although the theorem's proof is quite lengthy, we suggest that the truly striking aspect of this result is the mechanism itself and its strategyproofness. Indeed, the mechanism seems to be a coarse hybridization of two mechanisms, where the combination is required for achieving the desired approximation ratio. Although each of the two mechanisms is SP in its own right, it seems to us quite extraordinary that the combined mechanism is SP as well. 

As it turns out, when we wish to minimize the maximum cost and the network is a tree, randomization cannot significantly help us. Indeed, even though we show that there is a randomized SP mechanism with an approximation ratio of $2-2/(n+2)$, we establish the following lower bound, which is our third and final major result.
\medskip

\noindent\textbf{Theorem~\ref{thm:mm_rand_lb}.} 
Let there be $n$ agents. Then there exists a tree network such that no SP randomized mechanism can have an approximation ratio that is smaller than $2-\bigO\left(\frac{1}{2^{\sqrt{\log n}}}\right)$ for the maximum cost.
\medskip

It is interesting to note that, while with respect to minimization of the social cost there is an optimal SP mechanism on trees but not on circles, when the target function is the maximum cost trees are as hard as general graphs (the optimal ratio is two), whereas on a circle a better SP approximation ratio is feasible.

\paragraph{Open problems.} 

There is still a gap with respect to the optimal ratio achievable by randomized SP mechanisms on a circle, when the target is the social cost. We establish a GSP upper bound of $2-2/n$, and it is possible to show a lower bound of $1+\epsilon$ for a small constant $\epsilon>0$. We conjecture that there is an SP lower bound of $2-o(1)$ on a circle. Moreover, while Mechanism~\ref{mech:sc_rand_ub} is GSP on a circle, it is not GSP in general; we conjecture that there is a lower bound of $\Omega(n)$ with respect to GSP mechanisms in general networks. 

A second, small gap has to do with randomized GSP mechanisms on a circle when the target is the maximum cost. Mechanism~\ref{mech:mm_circle_ub} gives an SP upper bound of 3/2. However, the mechanism is not GSP, a fact that is not immediately apparent. The question of whether it is possible to achieve a randomized GSP upper bound of 3/2 on a circle remains open.

\section{Preliminaries}

We use the model of Schummer and Vohra~\cite{SV04}. Let $N=\{1,\ldots,n\}$ be the set of agents. The network is represented by a graph $G$, formalized as follows. The graph is a closed, connected subset of Euclidean space $G\subset \mathbb{R}^k$. The graph is composed of a finite number of closed curves of finite length, known as the \emph{edges}. The extremities of the curves are known as \emph{vertices}. The agents and the facility may be located anywhere on $G$. 

The reader might feel that the traditional, discrete model of graphs is more appropriate. In addition, other related papers (e.g., \cite{Lab85}) consider a similar continuous model, but allow the agents to be located only on the vertices (whereas the facility can be located anywhere). Crucially, all our results, or slight variations thereof, hold under both these alternative models as well.

The \emph{distance} between two points $x,y\in G$, denoted $d(x,y)$, is the length of the minimum-length path between $x$ and $y$, where a \emph{path} is a minimal connected subset of $G$ that contains $x$ and $y$. The center of the path between $x$ and $y$ is denoted $\cen(x,y)$, that is, it is a point $z$ on the path such that $d(x,z)=d(y,z)$. We will also use this notation to denote the center of an interval in $\mathbb{R}$. 

A \emph{cycle} in $G$ is defined to be the union of two paths such that their intersection is equal to the set of both their endpoints. A graph that does not contain cycles is called a \emph{tree}. 

We shall be especially interested in the graph that is a single cycle; we refer to such a graph as a \emph{circle}. Given that $G$ is a circle, we denote the shorter open arc between $x,y\in G$ by $\arc{x,y}$, and the shorter closed arc between $x$ and $y$ by $\carc{x,y}$.\footnote{If $x$ and $y$ are antipodal these arcs are ambiguously defined, and when this is problematic we specify to which arc we are referring.} For every $x\in G$, we denote by $\hx$ the antipodal point of $x$ on $G$, that is, the diametrically opposite point. For two points $x,y\in G$, and in the context of an arc of length less than half the circumference of $G$, we denote the ``clockwise operator'' by $\succeq$, and its strong version by $\succ$; specifically, $x\succeq y$ means that $x$ is clockwise of $y$ on the circle. We believe that this operator is completely intuitive and requires no formal definition, but in case of need the reader may find such a definition in~\cite{SV04}. 

Each agent $i\in N$ has an (ideal) \emph{location} $x_i\in G$. The collection $\bx=\langle x_1,\ldots,x_n\rangle\in G^n$ is referred to as the \emph{location profile}. 

A \emph{deterministic mechanism} is a function $f:G^n\rightarrow G$ that maps the reported locations of the agents to the location of a \emph{facility}. When the facility is located at $y\in G$, the cost of agent $i$ is simply the distance between $x_i$ and $y$:
$$
\cost(y,x_i) = d(x_i,y) ~~ .
$$

A \emph{randomized mechanism} is a function $f:G^n\rightarrow \Delta(G)$, i.e., it maps location profiles to probability distributions over $G$ (which randomly designate the location of the facility). If $f(\bx)=P$, where $P$ is a probability distribution over $G$, then the cost of agent $i$ is the expected distance from $x_i$, 
$$
\cost(P,x_i) = \exp_{y\sim P} [d(x_i,y)] ~~ .
$$

A mechanism is said to be \emph{strategyproof (SP)} if agents can never benefit by lying; formally, for every $\bx\in G^n$, $i\in N$, and deviation $x_i'\in G$, it holds that $\cost(f(x_i',\bx_{-i}),x_i)\geq \cost (f(\bx),x_i)$, where $\bx_{-i}=\langle x_1,\ldots,x_{i-1},x_{i+1},\ldots,x_n\rangle$ is the vector of locations excluding $x_i$. 

A mechanism is \emph{group strategyproof (GSP)} if for every coalition of deviators, there is an agent that does not benefit from the deviation. Formally speaking, for every $\bx\in G^n$, every $S\subseteq N$ and every $\bx_S'\in G^S$, there exists $i\in S$ such that $\cost(f(x_S',\bx_{-S}),x_i)\geq \cost (f(\bx),x_i)$. A stronger notion of group strategyproofness that is rather common in the computer science literature is obtained by requiring that for every deviation, it cannot be the case that at least one member of the coalition strictly gains while the others do not lose. While our GSP results do not hold under the stronger notion, we note that our slightly weaker notion is very common in the social choice literature, since in settings with no money (and, hence, no sidepayments) an agent has no incentive to deviate unless it strictly benefits.

We are interested in optimizing one of two target functions: the social cost, and the maximum cost. The \emph{social cost} of a facility location $y\in G$ with respect to a preference profile $\bx$ is $\sc(y,\bx) = \sum_{i\in N} \cost(y,x_i)$. The social cost of a distribution $P$ with respect to $\bx$ is $\sc(P,\bx) = \exp_{y\sim P}[\sc(y,\bx)]$. The \emph{maximum cost} of $y$ with respect to $\bx$ is $\mc(y,\bx) = \max_{i\in N} \cost(y,x_i)$. Finally, the maximum cost of a distribution $P$ with respect to $\bx$ is naturally defined as $\mc(P,\bx)=\exp_{y\sim P}[\mc(y,\bx)]$.

\section{Social Cost}
\label{sec:sc}

In this section we are interested in finding a facility location that minimizes the social cost. We shall first investigate the case where the graph $G$ is a tree; we shall then study mechanisms on general graphs, with a special emphasis on the case where the underlying graph is a circle.

\subsection{Mechanisms on Trees}
\label{subsec:sc_tree}

It is well-known that, when $n$ agents are located on a line, the mechanism that returns the location of the median agent as the facility location is GSP. This is true since an agent can change the location of the median only by declaring itself to be on the median's opposite side, thus pushing the median away from its true location (and this argument is easily generalized for coalitions of agents). Moreover, the median is also optimal with respect to the social cost, since any other location with distance $\delta$ from the median is further away by $\delta$ from at least $n/2$ agents, and closer by at most $\delta$ to at most $n/2$ agents. Hence, on a line choosing the median is an optimal GSP mechanism. 

If the graph $G$ is, more generally, a tree, things are not much more complicated. Indeed, consider the following mechanism for finding the median of a tree with respect to the location profile $\bx\in G^n$, which has been suggested in other contexts and may be considered folklore (similar ideas have been published at least as early as 1981~\cite{HT81}). 
We first fix an arbitrary node as the root of the tree. Then, as long as the current location has a subtree that contains more than half of the agents, we smoothly move down this subtree. Finally, when we reach a point where it is not possible to move closer to more than half the agents by continuing downwards, we stop and return the current location. 

The fact that the above mechanism is GSP is straightforward, and follows from similar arguments as the ones given for a median on a line: an agent can only modify the location of the mechanism's outcome by pushing the returned facility location away from its true location. It can also be verified that the mechanism returns a location that is optimal in terms of the social cost. To summarize, under the assumption that the graph $G$ is a tree there is an optimal GSP mechanism for the social cost.

\subsection{Mechanisms on Circles and General Graphs}
\label{subsec:sc_gen}

We next consider the case where the graph $G$ may contain cycles. With respect to deterministic mechanisms, the answers that we seek may be found in the paper of Schummer and Vohra. Indeed, Schummer and Vohra deal with a model that is identical to ours. They show that if $G$ contains a cycle $C$, and $f:G^n\rightarrow G$ is an SP rule that is onto $G$, then there is a \emph{cycle dictator}, that is, there is $i\in N$ such that for all $\bx\in C^n$, $f(\bx)=x_i$. In other words, if all the agents are located on $C$ then the dictator single-handedly determines the facility location. 

We claim that this result directly implies a tight SP lower bound of $n-1$ on any graph $G$ that contains a cycle (including a circle). Indeed, if the mechanism is not onto $G$ then let $y\in G$ such that for all $\bx\in G^n$, $f(\bx)\neq y$. Now, consider the profile where all the agents are located at $y$; since the optimal social cost given this profile is zero, the approximation ratio of $f$ must be infinite. Therefore, we may assume that $f$ is onto.

Now, if $f$ is GSP and onto, let $C\subseteq G$ be a cycle. By the result of Schummer and Vohra, there is a cycle dictator, without loss of generality agent 1. Consider a preference profile $\bx\in C$ where $\bx_1=y\in C$ and $\bx_i=z\in C$ for all $i\in N\setminus\{1\}$, where $y\neq z$. We have that the optimal social cost is $d(y,z)$, by locating the facility at $z$, whereas $f(\bx)=x_1=y$, that is, $\sc(f(\bx),\bx)=(n-1)d(y,z)$. Hence, the approximation ratio of $f$ is at least $n-1$. 

Enter randomization, which, as we shall demonstrate, allows us to design an SP mechanism with a constant approximation ratio. We consider the following trivial mechanism. 

\begin{mechanism}[Random Dictator]
\label{mech:sc_rand_ub}
Given $\bx\in G^n$, return a facility location according to the probability distribution that gives probability $1/n$ to the location $x_i$, for all $i\in N$. 
\end{mechanism}

This mechanism is obviously SP, since by deviating an agent can only lose if its own location is chosen, and does not affect the outcome if another's location is selected. We also remark that Mechanism~\ref{mech:sc_rand_ub} is by no means novel; it was directly employed by Meir~\cite{Meir08} in the context of classification, and a slight variation was studied by Procaccia and Tennenholtz~\cite{PT09} for facility location on a line when each agent controls multiple locations. It turns out the mechanism guarantees a constant approximation ratio for the problem at hand with respect to any graph $G$. 

\begin{theorem}
\label{thm:sc_rand_ub}
Let $G$ be a graph, and let $N=\{1,\ldots,n\}$. Then Mechanism~\ref{mech:sc_rand_ub} is an SP $(2-2/n)$-approximation mechanism for the social cost. 
\end{theorem}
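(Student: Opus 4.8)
The plan is to show that Mechanism~\ref{mech:sc_rand_ub} is SP (which is already argued in the text) and that its approximation ratio is exactly $2-2/n$. Strategyproofness is immediate: a deviation by agent $i$ cannot change the probability $1/n$ with which any fixed location $x_j$ is returned for $j \neq i$, and only replaces the contribution of the event ``$x_i$ is returned'' (cost $0$) by ``the reported point is returned'' (cost $\geq 0$); so deviating never helps. Hence the work is entirely in bounding the approximation ratio.

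For the approximation ratio, fix a profile $\bx = \langle x_1,\dots,x_n\rangle$ and let $y^*$ be a social-cost-minimizing facility location, with $\opt = \sc(y^*,\bx) = \sum_{k} d(x_k,y^*)$. The expected social cost of the mechanism is
$$
\sc(f(\bx),\bx) \;=\; \frac{1}{n}\sum_{i=1}^n \sum_{k=1}^n d(x_i,x_k).
$$
By the triangle inequality, $d(x_i,x_k) \leq d(x_i,y^*) + d(y^*,x_k)$ for every pair $i,k$. Summing over all $i$ and $k$:
$$
\sum_{i}\sum_{k} d(x_i,x_k) \;\leq\; \sum_{i}\sum_{k}\bigl(d(x_i,y^*)+d(y^*,x_k)\bigr) \;=\; n\sum_{k} d(x_k,y^*) + n\sum_{i} d(x_i,y^*) \;=\; 2n\cdot\opt.
$$
This already gives a ratio of $2$; to sharpen it to $2-2/n$, I would observe that the diagonal terms $i=k$ contribute $d(x_i,x_i)=0$ on the left-hand side but were bounded above by $2\,d(x_i,y^*)$ on the right. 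Removing those $n$ terms from the right-hand sum sharpens the bound: $\sum_{i\neq k} d(x_i,x_k) \leq 2n\,\opt - 2\sum_i d(x_i,y^*) = 2n\,\opt - 2\,\opt = (2n-2)\,\opt$. Dividing by $n$ yields $\sc(f(\bx),\bx) \leq (2-2/n)\,\opt$, as desired.

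To complete the theorem I would also argue the bound is essentially tight (so that ``$(2-2/n)$-approximation'' is the right phrasing), by exhibiting a simple profile — e.g., on a line or even two points, place one agent at position $0$ and $n-1$ agents at position $1$; then $\opt = 1$ (facility at $1$), while the mechanism returns $0$ with probability $1/n$ (social cost $n-1$) and $1$ with probability $(n-1)/n$ (social cost $1$), giving expected social cost $\frac{n-1}{n} + \frac{n-1}{n} = \frac{2(n-1)}{n} = 2-2/n$.

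The main obstacle here is essentially cosmetic rather than mathematical: the argument is a one-line triangle-inequality computation, and the only subtlety is making sure the ``$-2/n$'' improvement is extracted correctly by accounting for the zero diagonal terms; no combinatorial or probabilistic difficulty arises at this stage (the genuinely hard result is the GSP claim on the circle, Theorem~\ref{thm:gsp}, not this one). I would therefore keep the proof short, present the double-sum bound with the diagonal correction, and note the matching lower-bound instance.
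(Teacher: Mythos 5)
Your proposal is correct and follows essentially the same argument as the paper: the triangle inequality applied to the off-diagonal pairs $(i,k)$, with the $-2/n$ savings coming precisely from the $n$ zero diagonal terms (the paper simply sums over $j\in N\setminus\{i\}$ from the outset rather than subtracting the diagonal afterwards), together with the same one-agent-versus-$(n-1)$-agents tightness instance that the paper gives in the remarks following the theorem. No differences worth noting.
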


\begin{proof}
Given $\bx\in G^n$, let $y\in G$ be the optimal facility location, and denote $\opt=\sc(y,\bx)$. Denoting Mechanism~\ref{mech:sc_rand_ub} by rd (for ``random dictator''), it follows from the triangle inequality that
\begin{align*}
\sc(\text{rd}(\bx),\bx) & = \sum_{i\in N} \frac{1}{n} \sum_{j\in N} d(x_i,x_j) 
\leq \frac{1}{n}\sum_{i\in N}\sum_{j\in N\setminus\{i\}} [d(x_i,y) + d(y,x_j)]\\
&= \frac{1}{n}\sum_{i\in N}\left[(n-1)d(x_i,y) + (\opt-d(y,x_i))\right]
= \opt + \frac{n-2}{n}\sum_{i\in N} d(x_i,y)\\
& = \(2-\frac{2}{n}\)\opt ~~ . \qedhere    
\end{align*}
\end{proof}

A few comments are in order. First, notice that the exact same proof actually shows that Mechanism~\ref{mech:sc_rand_ub} gives an approximation ratio of $2-2/n$ in any metric space (and, of course, the mechanism is still SP). Despite the generality of this theorem and the simplicity of its proof, we have not been able to find this explicit result in the literature, but we do not preclude the possibility that it exists in some form. 

Second, it is easy to see that the analysis in the proof of Theorem~\ref{thm:sc_rand_ub} is tight. Indeed, consider a location profile where $x_1=y\in G$ and $x_i=z\in G$ for all $j\in N\setminus\{1\}$, where $y\neq z$. Then the social cost of the optimal location is $d(y,z)$, whereas the social cost of Mechanism~\ref{mech:sc_rand_ub} is
$$
\frac{1}{n} (n-1)d(y,z) + \frac{n-1}{n} d(z,y) = \(2-\frac{2}{n}\) d(y,z) ~~ .
$$

Third, in general Mechanism~\ref{mech:sc_rand_ub} is not GSP. To see this, let $N=\{1,2,3\}$ and consider a star with three arms, that is, $G=(V,E)$ where $V=\{v,u_1,u_2,u_3)$, $E$ contains $(v,u_i)$ for $i=1,2,3$, and $d(v,u_i)=1$ for $i=1,2,3$. In particular, $d(u_i,u_j)=2$ for $i\neq j$. Let $\bx\in G^n$ such that $x_i=u_i$ for $i\in N$. Then for all $i\in N$, $\cost(\text{rd}(\bx),x_i)=4/3$, where once again we let $\text{rd}(\bx)$ denote the outcome of Mechanism~\ref{mech:sc_rand_ub} given $\bx$. Now, consider the profile $\bx'$ where $x_i'=v$ for all $i\in N$, i.e., the three agents deviate to the center of the star. Since $\text{rd}(\bx')=v$ with probability one, we have that $\cost(\text{rd}(\bx'),x_i)=1$, hence all the agents strictly gain from the joint deviation. 

On the other hand, if $G$ is a line then Mechanism~\ref{mech:sc_rand_ub} is GSP. Indeed, it is quite straightforward that for any deviation, the leftmost member of the deviating coalition and the rightmost member cannot both benefit from the deviation: if the expected distance from one decreases, then the expected distance from the other increases. Much more interestingly, we have the following result. 

\begin{theorem}
\label{thm:gspprime}
Let $G$ be a circle. Then Mechanism~\ref{mech:sc_rand_ub} is GSP. 
\end{theorem}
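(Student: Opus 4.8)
The plan is to reduce Theorem~\ref{thm:gspprime} to the combinatorial assertion of Theorem~\ref{thm:gsp}, which is where all the content lies. Fix a profile $\bx\in G^n$ on the circle $G$, a nonempty coalition $S\subseteq N$, and a joint deviation $\bx_S'=\langle x_k'\rangle_{k\in S}\in G^S$; let $\bx'=(\bx_S',\bx_{-S})$ be the resulting report profile. Since Mechanism~\ref{mech:sc_rand_ub} places probability $1/n$ on each \emph{reported} location, the true cost of an agent $i\in S$ under the deviation is $\frac1n\sum_{k\in S}d(x_i,x_k')+\frac1n\sum_{k\notin S}d(x_i,x_k)$, while under truthful play it is $\frac1n\sum_{k\in S}d(x_i,x_k)+\frac1n\sum_{k\notin S}d(x_i,x_k)$. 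The sums over $N\setminus S$ are identical in the two profiles, so agent $i$ fails to gain exactly when $\sum_{k\in S}d(x_i,x_k)\le\sum_{k\in S}d(x_i,x_k')$. Hence it suffices to produce a single $i\in S$ satisfying this inequality.

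First I would invoke Theorem~\ref{thm:gsp}, applied after relabeling with the $|S|$ true locations $\{x_k\}_{k\in S}$ in the role of $x_1,\dots,x_n$ and the $|S|$ reported locations $\{x_k'\}_{k\in S}$ in the role of $y_1,\dots,y_n$. It yields an index $i\in S$ with $\sum_{k\in S}d(x_i,x_k)\le\sum_{k\in S}d(x_i,x_k')$, i.e.\ a coalition member who does not benefit from the deviation; thus Mechanism~\ref{mech:sc_rand_ub} is GSP on a circle. Granting Theorem~\ref{thm:gsp}, Theorem~\ref{thm:gspprime} therefore presents essentially no obstacle of its own: the entire difficulty is pushed into Theorem~\ref{thm:gsp}.

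For Theorem~\ref{thm:gsp}, following the hint about \emph{nearly-antipodal pairs}, the plan is to argue by contradiction. Set $\Phi(p)=\sum_k d(p,x_k)-\sum_k d(p,y_k)$; the claim is that $\Phi(x_i)\le 0$ for some $i$, so suppose instead $\Phi(x_i)>0$ for all $i$. Call $\langle x_i,x_j\rangle$ \emph{nearly-antipodal} if no $x_k$ other than $x_i,x_j$ lies on the short arc $\arc{x_i,\hat{x}_j}$ or on the short arc $\arc{x_j,\hat{x}_i}$. The geometric facts I would establish (from the triangle inequality and the geometry of the circle) are: (i) for every pair and \emph{every} point $p$, $d(x_i,p)+d(x_j,p)\ge d(x_i,x_j)=:\delta$, with equality precisely on the short $x_i$--$x_j$ arc, and the maximum of $d(x_i,p)+d(x_j,p)$ is attained on $\arc{\hat{x}_i,\hat{x}_j}$ and strictly exceeds $\delta$ unless the pair is exactly antipodal; (ii) if $\langle x_i,x_j\rangle$ is nearly-antipodal then every $x_k$ lies on the short $x_i$--$x_j$ arc or on $\arc{\hat{x}_i,\hat{x}_j}$, so each term $d(x_i,x_k)+d(x_j,x_k)$ equals either the minimum $\delta$ or that maximum value. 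The hard part, and the step I expect to be the main obstacle, is the combinatorial argument that derives a contradiction from ``$\Phi(x_i)>0$ for all $i$'' together with (i)--(ii): a naive single-pair estimate bounding $\Phi(x_i)+\Phi(x_j)$ does \emph{not} close, so one needs a more global device — an extremal choice of the nearly-antipodal pair, an averaging over all such pairs, or an inductive point-removal argument — to force $\Phi(x_i)\le 0$ at one endpoint of some nearly-antipodal pair. A consistency check that guides this is $\int_G\sum_k d(p,x_k)\,dp=\int_G\sum_k d(p,y_k)\,dp$ (each summand integrates to the same constant), so $\int_G\Phi=0$ and the assumed strict improvement can never be uniform; the real work is in localizing where $\Phi$ must be nonpositive and matching it to a nearly-antipodal pair. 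Once Theorem~\ref{thm:gsp} is established, Theorem~\ref{thm:gspprime} follows at once from the reduction above.
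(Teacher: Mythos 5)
Your reduction of Theorem~\ref{thm:gspprime} to Theorem~\ref{thm:gsp} is correct and is essentially the paper's own: the nondeviators' contribution to every coalition member's expected cost is unchanged by the deviation, so it suffices to exhibit one $i\in S$ with $\sum_{k\in S}d(x_i,x_k)\le\sum_{k\in S}d(x_i,x_k')$, which is Theorem~\ref{thm:gsp} applied to the $|S|$ true and $|S|$ reported locations of the coalition (the paper phrases the same observation as ``without loss of generality the coalition is all of $N$''). Your setup for Theorem~\ref{thm:gsp} is also on the right track: the contradiction hypothesis $\Phi(x_i)>0$ for all $i$, the definition of nearly-antipodal pairs, and your facts (i)--(ii) all match the paper; together they show that for each nearly-antipodal pair $\pair{x_i,x_j}$ the critical arc $\crit(x_i,x_j)=G\setminus\carc{\hx_i,\hx_j}$ must contain strictly more points of $Y$ than of $X$, which is exactly the paper's Lemma~\ref{lem:less}.

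The genuine gap is the ``more global device'' you leave unspecified. The paper's device is the averaging over all nearly-antipodal pairs that you list as one candidate, but making it work requires two nontrivial counting facts you would still have to prove (Lemma~\ref{lem:contra}): writing $|A|=2s+1$ (the number of nearly-antipodal pairs is odd), (a) every $x_i$ lies in \emph{exactly} $s+1$ of the critical arcs, so $\sum_{\pair{x_i,x_j}\in A}\alpha^X_{ij}=(s+1)n$; and (b) every point of the circle --- in particular every $y_k$ --- lies in \emph{at most} $s+1$ of the critical arcs, so $\sum_{\pair{x_i,x_j}\in A}\alpha^Y_{ij}\le(s+1)n$. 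These two facts contradict the strict per-pair inequality summed over $A$. Fact (a) is proved by walking clockwise from $x_1$ to $\hx_1$ and observing that nearly-antipodal pairs correspond to consecutive entries of the interleaved sequence of points and antipodal points, with $x_1$ belonging to precisely the transitions of the form ``point followed by antipodal point''; fact (b) needs a short case analysis on which two elements of that interleaved sequence flank $y$. Neither your extremal-pair nor your point-removal suggestion is developed enough to assess, and your integral identity $\int_G\Phi=0$, while true, only shows that $\Phi\le 0$ somewhere on $G$, not at one of the $x_i$. So as it stands the combinatorial core of the proof is missing, and you have correctly identified where it must go but not supplied it.
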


When analyzing the group strategyproofness of the mechanism, we can assume without loss of generality that the deviating coalition contains all the agents. Indeed, the expected cost of an agent given that a nondeviating agent is selected by the mechanism, and the probability that a nondeviating agent is selected by the mechanism, are both independent of the reports of the deviating agents. Hence (after scaling by a factor of $n$), we can give the following combinatorial, more explicit but equivalent formulation of Theorem~\ref{thm:gspprime}, which may be of independent interest. 

\begin{theorem}
\label{thm:gsp}
Let $G$ be a circle, and let $x_1,\ldots,x_n,y_1,\ldots,y_n\in G$. Then there exists $i\in N$ such that
\begin{equation}
\label{eq:goal}
\sum_{k\in N} d(x_i,x_k) \leq \sum_{k\in N} d(x_i,y_k) ~~ .
\end{equation}
\end{theorem}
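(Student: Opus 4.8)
My plan is to prove the stated geometric inequality directly. Normalise $G$ to have circumference $1$, so that, identifying points with $[0,1)$, $d(a,b)=\min(|a-b|,1-|a-b|)\le\tfrac12$. By perturbing the $2n$ points slightly and passing to a limit — legitimate because \eqref{eq:goal} is a non-strict inequality and there are only finitely many candidate indices — I may assume general position: the $2n$ points are distinct and no two of them are antipodal. Write $\Phi(p)=\sum_{k\in N}d(p,x_k)$, $\Psi(p)=\sum_{k\in N}d(p,y_k)$, and $F_i=\Phi(x_i)-\Psi(x_i)$; the goal is to find $i$ with $F_i\le 0$.

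The computational engine is the pairwise ``sum function'' $h_{ij}(p)=d(x_i,p)+d(x_j,p)$. Fixing $i\ne j$ and orienting the circle so that $x_i$ sits at $0$ and $x_j$ at $s:=d(x_i,x_j)\le\tfrac12$ clockwise (so $\hx_i$ is at $\tfrac12$ and $\hx_j$ at $s+\tfrac12$), one checks that $h_{ij}\equiv s$ on the short arc $\carc{x_i,x_j}$, that $h_{ij}\equiv 1-s$ on the ``antipodal arc'' $\carc{\hx_i,\hx_j}$ (the antipodal image of $\carc{x_i,x_j}$), and that $h_{ij}$ is linear on each of the two ``transition arcs'' $\carc{x_j,\hx_i}$ and $\carc{\hx_j,x_i}$, both of length $\tfrac12-s$; in particular $h_{ij}\ge s$ everywhere. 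Since $F_i+F_j=\sum_{k\in N}h_{ij}(x_k)-\sum_{k\in N}h_{ij}(y_k)$, it suffices to exhibit one pair with $\sum_k h_{ij}(x_k)\le\sum_k h_{ij}(y_k)$, as then $\min(F_i,F_j)\le 0$.

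Following the sketch in the introduction, call $\langle x_i,x_j\rangle$ \emph{nearly-antipodal} if no other $x_k$ lies in the interior of $\carc{x_j,\hx_i}$ or of $\carc{\hx_j,x_i}$. Such pairs exist: the pair attaining $\max_{i\ne j}d(x_i,x_j)$ is nearly-antipodal, since an $x_k$ strictly inside one of the two transition arcs would satisfy $d(x_i,x_k)>d(x_i,x_j)$ (respectively $d(x_j,x_k)>d(x_i,x_j)$), contradicting maximality. For a nearly-antipodal pair every $x_k$ lies in $\carc{x_i,x_j}\cup\carc{\hx_i,\hx_j}$, so $h_{ij}(x_k)\in\{s,1-s\}$; writing $c$ for the number of $x_k$ on the antipodal arc, $\sum_k h_{ij}(x_k)=ns+(1-2s)c$, and combined with $\sum_k h_{ij}(y_k)\ge ns$ this gives $F_i+F_j\le(1-2s)c$. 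Hence the theorem follows at once if some nearly-antipodal pair has $c=0$ — in particular whenever the $x_k$ all fit on a semicircle, since then the maximal pair has all $x_k$ on its short arc.

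The residual case — every nearly-antipodal pair has at least one $x_k$ on its antipodal arc — is, I expect, where the real difficulty lies. No single pair is then free, and one must use the positions of the $y_k$: the pair $\langle x_i,x_j\rangle$ works precisely when the total ``fractional distance of the $y_k$ past $\carc{x_i,x_j}$'', namely $\sum_k\bigl(h_{ij}(y_k)-s\bigr)/(1-2s)$, is at least $c$. A warning is that simply summing the inequalities $F_i+F_j>0$ over all nearly-antipodal pairs with any non-negative weights is vacuous, since it only re-expresses $\sum_m\gamma_mF_m>0$ for non-negative $\gamma_m$; so the argument cannot be purely ``linear'' in the pairs and must exploit the cyclic combinatorial structure — enumerating the nearly-antipodal pairs in the order in which their antipodal arcs sweep around $S^1$, and arguing by contradiction (both members of every nearly-antipodal pair strictly gain) that the $y_k$ cannot simultaneously lie deep inside the short arc of every pair. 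As supporting intuition, a circular ``median'' of the $x_k$, i.e.\ a minimiser of $\Phi$, is always attained at some $x_k$, and a short parity argument (the slope of $\Phi-\Psi$ is an even integer between breakpoints) shows the global minimiser of $\Phi-\Psi$ is attained at some $x_k$ unless it is attained on a flat segment anchored at an antipode $\hy_k$ and containing no $x_m$; isolating and defeating exactly this configuration is the step I would budget the most effort for.
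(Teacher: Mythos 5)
Your setup is the same as the paper's: the pairwise sum function $h_{ij}(p)=d(x_i,p)+d(x_j,p)$ with its three-regime structure, the notion of nearly-antipodal pairs, and the reduction to finding one such pair $\pair{x_i,x_j}$ with $F_i+F_j\le 0$ all reproduce the first half of the paper's argument. But the proof stops exactly where the paper's proof begins, and you say so yourself: the residual case, in which every nearly-antipodal pair has at least one $x_k$ on its antipodal arc $\carc{\hx_i,\hx_j}$, is left as a description of what an argument would have to accomplish rather than an argument. This is a genuine gap, and it is the entire content of the theorem --- your $c=0$ case is essentially the semicircle/line case, which is already the easy one.

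Two concrete ingredients are missing. First, you under-exploit your own inequality by bounding $\sum_k h_{ij}(y_k)$ below by $ns$: keeping the contribution of the $y_k$ on the antipodal arc, the hypothesis $F_i+F_j>0$ for a nearly-antipodal pair yields the \emph{integer} inequality $\alpha^Y_{ij}>\alpha^X_{ij}$, where $\alpha^X_{ij}$ (resp.\ $\alpha^Y_{ij}$) counts the $x_k$ (resp.\ $y_k$) lying on the long arc $\crit(x_i,x_j)=G\setminus\carc{\hx_i,\hx_j}$. Second --- and this is precisely the step you flagged as the one needing the most effort --- the paper finishes by the very summation over pairs that you dismiss as ``vacuous.'' Summing the strict integer inequalities over all $|A|=2s+1$ nearly-antipodal pairs gives $\sum_A\alpha^Y_{ij}\ge\sum_A\alpha^X_{ij}+|A|$, and a double-counting lemma (Lemma~\ref{lem:contra}) shows that every $x_k$ lies in exactly $s+1$ of the $2s+1$ critical arcs, so $\sum_A\alpha^X_{ij}=(s+1)n$, while an arbitrary point of the circle lies in at most $s+1$ of them, so $\sum_A\alpha^Y_{ij}\le(s+1)n$ --- a contradiction. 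The summation is not vacuous because it is applied to the discretized counting inequalities, whose integrality supplies a slack of $+1$ per pair, not to the raw inequalities $F_i+F_j>0$; your warning correctly rules out the latter but does not anticipate the former. Without this counting lemma, or a substitute for it, the proof does not go through; your closing remarks about a circular median and the slopes of $\Phi-\Psi$ do not supply one.
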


The intuition behind the proof of the theorem is as follows. When $G$ is a line, we saw that we can identify a specific pair of agents (the leftmost and rightmost) that cannot both benefit from the deviation; this is also true when the agents are located on one semicircle, which can then be treated as an interval. However, when the agents are not on one semicircle, there are no agents that are easily identified as ``extreme''. The main idea is to recognize pairs of ``almost extreme'' agents which we call \emph{nearly-antipodal}. The proof establishes that there exists a pair of nearly-antipodal agents that do not both benefit from the deviation. The full proof of the theorem is relegated to Appendix~\ref{app:gsp}.

Summarizing and slightly generalizing the results given above, we have that if $G$ is a line or a circle, Mechanism~\ref{mech:sc_rand_ub} is GSP, whereas if there is a vertex of degree at least three it is not (since any such vertex locally looks like the counterexample given above). More precisely, we have the following corollary. 

\begin{corollary}
Let $G$ be a (connected) graph. Then Mechanism~\ref{mech:sc_rand_ub} is GSP if and only if the maximum degree in $G$ is two.
\end{corollary}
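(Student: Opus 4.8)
The plan is to prove both implications. For the ``if'' direction the key observation is topological: a connected graph in which every vertex has degree at most two, and which contains at least one edge, is isometric either to a closed interval $[0,L]\subset\reals$ or to a circle of circumference $L$ (and if it contains no edge, it is a single point). Indeed, regarding the finitely many edges of $G$ as the edges of an abstract multigraph $H$ whose vertices are the curve extremities, connectedness of $G$ forces $H$ to be connected and the degree bound forces $H$ to be a path or a cycle; since the edges are embedded curves meeting only at shared vertices, the realization is an interval or a circle, and the path metric of $G$ coincides with arc length. Hence I would simply invoke the facts already established: on a line the leftmost and rightmost members of any deviating coalition cannot both (weakly) gain, so Mechanism~\ref{mech:sc_rand_ub} is GSP on an interval; on a circle it is GSP by Theorem~\ref{thm:gspprime}; and on a single point GSP is trivial. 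This settles GSP whenever the maximum degree is at most two.

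For the ``only if'' direction I would prove the contrapositive: if $G$ has a vertex $v$ of degree at least three, I exhibit a profile and a coalition all of whose members strictly gain, so Mechanism~\ref{mech:sc_rand_ub} is not GSP. Choose three distinct edge-germs $a_1,a_2,a_3$ emanating from $v$ (a loop at $v$ contributes two germs that are disjoint near $v$, so three always exist). The crucial preliminary step is to pick $\epsilon>0$ small enough that the initial length-$\epsilon$ segments of $a_1,a_2,a_3$ are pairwise disjoint except at $v$ and form an \emph{isometric star}: writing $p_\ell$ for the point at arc-distance $\epsilon$ along $a_\ell$, one needs $d(p_\ell,v)=\epsilon$ and $d(p_\ell,p_{\ell'})=2\epsilon$ for $\ell\neq\ell'$. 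This holds once $\epsilon$ is below a positive constant that depends only on $G$ and not on $\epsilon$: any path between $p_\ell$ and $p_{\ell'}$ that avoids $v$ must leave the open $\epsilon$-ball about $v$, so its length is bounded below by such a constant, since $G$ has only finitely many edges of positive length.

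Having fixed such an $\epsilon$, I would take $n=3$ agents, with agent $\ell$ located at $p_\ell$ (for $n>3$, place the remaining agents at $v$; the computation is analogous). Under truthful reporting the mechanism outputs each $p_\ell$ with probability $1/3$, so by symmetry each agent incurs cost $\frac13(0+2\epsilon+2\epsilon)=\frac{4\epsilon}{3}$. Now let the coalition $\{1,2,3\}$ deviate, every member reporting $v$; then the mechanism outputs $v$ with probability one, and agent $\ell$ incurs cost $d(p_\ell,v)=\epsilon<\frac{4\epsilon}{3}$. So all three agents strictly gain, contradicting GSP. Combining the two directions gives the corollary, reading ``the maximum degree in $G$ is two'' as ``at most two'' so that bare segments and the single point are covered.

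The main obstacle, and essentially the only nonroutine point, is the isometric-star claim: showing that a small enough neighborhood of a high-degree vertex genuinely behaves like the three-armed star used as a counterexample earlier in this subsection, which amounts to ruling out shortcuts through the rest of $G$ and is where finiteness of the graph enters. The topological classification used in the ``if'' direction is standard, and both cost computations are immediate.
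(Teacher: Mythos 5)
Your proposal is correct and follows essentially the same route as the paper, which disposes of the corollary in one sentence: max degree two forces $G$ to be an interval or a circle (handled by the leftmost/rightmost argument and Theorem~\ref{thm:gspprime}, respectively), while a vertex of degree three or more locally reproduces the three-armed star counterexample already given for Mechanism~\ref{mech:sc_rand_ub}. The only difference is that you carefully justify the two points the paper leaves implicit --- the topological classification and the ``isometric star'' claim that small neighborhoods of a high-degree vertex admit no shortcuts --- and both of these fill-ins are sound.
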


For a constant number of agents $n$ on a circle, we were able to design a nontrivial SP mechanism with an approximation ratio slightly better than that of Mechanism~\ref{mech:sc_rand_ub}. However, in general it is natural to ask whether Mechanism~\ref{mech:sc_rand_ub} is asymptotically optimal among SP approximation mechanisms for the social cost, that is: Is there an SP Mechanism whose approximation ratio for the social cost is bounded away from two? We are able to prove a lower bound of $1+\epsilon$ for a small constant $\epsilon>0$ (which we omit due to lack of interest), but the gap is still significant. We conjecture that the answer to our question is negative, even if $G$ is a circle. 

\begin{conjecture}
Let $N=\{1,\ldots,n\}$. Then for any circle $G$ there is no randomized SP $(2-\Omega(1))$-approximation mechanism for the social cost. 
\end{conjecture}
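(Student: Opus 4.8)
The plan is to prove the conjecture by showing that any randomized SP mechanism on the circle must, on some profile, waste a constant fraction of its probability mass ``on the wrong side'' of the optimum — in the same spirit that makes the Random Dictator tight, but enforced globally through strategyproofness rather than on a single profile. Normalize $G$ to have circumference $1$. First I would record two reductions that shape the problem: (i) since we seek a bound against SP (not GSP) mechanisms, only single-agent deviation constraints are available, so the argument must be driven by how the output distribution $f(\bx)$ responds as one agent moves; and (ii) on any unanimous profile (all agents at a point $p$) a finite-ratio mechanism is forced to output the point mass $\delta_p$, pinning down a dense set of ``anchor'' profiles. The difficulty, already visible in the Random Dictator analysis, is that against a fixed cluster of $n-1$ agents a lone agent is essentially powerless, so on any single family of profiles the constant mechanism $f\equiv\delta_p$ is both (approximately) SP and near-optimal; the lower bound can therefore emerge only from forcing one mechanism to serve many cluster positions at once.

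Accordingly, the core construction I would use is a self-similar, multi-scale family of clustered configurations together with a rotational averaging argument. At the coarsest scale I place the agents in a few clusters spread around the circle; recursively, I replace each cluster by a scaled copy of the same configuration, to $\Theta(\sqrt{\log n})$ levels so that the total agent count stays polynomial in $n$. The intended effect is that at every scale the mechanism faces a Random-Dictator-like dilemma — to be good at that scale it must concentrate on the ``majority'' sub-cluster, yet strategyproofness links this choice across a full rotation of the configuration, so it cannot hedge correctly at all scales and phases simultaneously. I would then average the approximation ratio over all rotations $\theta\in[0,1)$ of the configuration: the key quantitative claim is that at each scale the expected (over $\theta$) mass the mechanism must leave on the wrong side contributes an additive loss, and that these per-scale losses compound across the $\Theta(\sqrt{\log n})$ levels to push the average ratio to $2-o(1)$, so that some rotation realizes a profile of ratio $2-\bigO(2^{-\sqrt{\log n}})$ — structurally analogous to the amplification already achieved for the maximum cost on trees in Theorem~\ref{thm:mm_rand_lb}.

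The analytic engine I would develop is a quantitative strategyproofness lemma bounding the ``mobility'' of the output distribution: as a single agent rotates through an arc, the mean amount by which the mass of $f(\bx)$ can migrate around the circle is controlled by the SP inequalities, which I would encode as a factor-revealing linear program over a fine discretization of the circle (variables: the discretized output distributions on each profile of the family; constraints: the single-agent SP inequalities and the anchor equalities), bound its value, and take the discretization limit. The main obstacle — and the reason the statement is only a conjecture, with merely a $1+\epsilon$ bound currently provable — is that there is no randomized analogue of the Schummer--Vohra cycle-dictator theorem. Without such a characterization one cannot a priori control how far the output may travel per unit of agent movement while remaining SP, and it is exactly this mobility bound that the rotational averaging requires. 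Establishing a structural statement of the form ``every randomized SP mechanism on a circle is, up to an $o(1)$ additive loss in social cost, a mixture of cycle-dictators'' would be the heart of the proof; a secondary obstacle is ensuring that the per-scale losses genuinely compound rather than average away, which requires the scales to look ``independent'' from the mechanism's viewpoint and is delicate to enforce using single-agent SP alone.
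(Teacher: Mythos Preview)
The paper does not prove this statement: it is stated explicitly as an open conjecture, with the authors noting only that they can show a $1+\epsilon$ lower bound for some small constant $\epsilon>0$ (which they omit). There is therefore no proof in the paper to compare against.

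Your proposal is candid about this --- you yourself identify the missing ingredient as a randomized analogue of the Schummer--Vohra cycle-dictator characterization, and you call the needed mobility lemma ``the heart of the proof'' without supplying it. What you have written is a research plan, not a proof. Two concrete concerns about the plan itself. First, the analogy with Theorem~\ref{thm:mm_rand_lb} is weaker than it appears: the tree lower bound works because at each level the $m$-way branching forces a genuine $m$-way commitment that the mechanism must get wrong for at least one child, and this loss is then pushed down the tree by SP; a circle has no branching, and nesting clusters at different scales along a single cycle does not obviously recreate that forced-choice structure, so the per-scale losses you want to ``compound'' may simply not be there. Second, rotational averaging cannot by itself extract a bad profile from a rotation-invariant mechanism --- if the hard family is symmetric, averaging over $\theta$ returns the same ratio at every $\theta$, and you are back to needing a direct per-profile bound, which is exactly what the missing mobility lemma was supposed to provide. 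Until that lemma or the structural characterization you describe is actually established, the argument does not close, which is precisely why the paper leaves the statement as a conjecture.
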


Furthermore, we conjecture that there are graphs for which it is not possible to design a GSP (randomized) mechanism with a sublinear approximation ratio. 

\begin{conjecture}
Let $N=\{1,\ldots,n\}$. Then there exists a graph $G$ such that there is no randomized GSP $o(n)$-approximation mechanism for the social cost. 
\end{conjecture}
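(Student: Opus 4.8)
The plan is to exploit the one structural feature that distinguishes group strategyproofness from mere strategyproofness on this problem, namely the ``inward collapse'' deviation that already defeats Mechanism~\ref{mech:sc_rand_ub} on the three-arm star (and, by the Corollary above, on every graph of maximum degree $\geq 3$). The backbone of the argument is a clean consequence of GSP applied to the grand coalition: fixing a true profile $\bx$ and letting $S=N$ deviate to an arbitrary $\bx'$, the definition of GSP yields that for \emph{no} $\bx'$ can $\cost(f(\bx'),x_i)<\cost(f(\bx),x_i)$ hold simultaneously for all $i\in N$. In words, $f(\bx)$ is never strictly Pareto-dominated, with respect to the true cost vector at $\bx$, by any distribution in the range $R:=\{f(\bx'):\bx'\in G^n\}$. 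A second, elementary observation sharpens this: if $f$ has any finite approximation ratio, then on the profile in which all agents report a common point $z$ the optimal social cost is $0$, so $f$ must return $\delta_z$; hence $R\supseteq\{\delta_z:z\in G\}$. Thus every $f(\bx)$ must be Pareto-undominated against the very rich family of point masses --- a strong and fully rigorous necessary condition to build on.

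The candidate hard graph is one that couples the two obstructions. First I would take $G$ to contain a cycle $C$, so that the Schummer--Vohra cycle-dictatorship theorem~\cite{SV04}, and the resulting deterministic $\Omega(n)$ lower bound derived in Section~\ref{subsec:sc_gen}, constrain the ``pure'' behavior of the mechanism on profiles $\bx\in C^n$. Second I would attach, at some vertex $p$ of $C$, a pendant edge to a new leaf, making $\deg(p)=3$; this is the minimal modification that, via the Corollary, already rules out the random-dictatorship mixture that would otherwise rescue the bound on a plain circle. (A more robust variant, should the tadpole prove too delicate, is a ``star of cycles'' in which several cycles share a single high-degree hub, giving more room to stage collapse deviations.) The goal is then to prove the key lemma: on such a $G$, every randomized GSP mechanism with a sublinear approximation ratio must, on profiles supported inside $C$, behave like a \emph{single deterministic} cycle-dictator rather than a genuine mixture over dictators. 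Granting the lemma, one reruns the deterministic argument of Section~\ref{subsec:sc_gen} verbatim --- placing the dictator alone at $y\in C$ and the remaining $n-1$ agents together at $z\in C$ --- to obtain a social cost of $(n-1)d(y,z)$ against an optimum of $d(y,z)$, i.e.\ a ratio of $n-1$, completing the proof.

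The decisive step, and the reason the statement is posed as a conjecture rather than a theorem, is precisely this key lemma: upgrading the single-profile Pareto condition into a \emph{global} characterization of randomized GSP mechanisms. The subtlety is twofold. First, there is at present no randomized analogue of the Schummer--Vohra characterization, and randomization is genuinely powerful here --- Theorem~\ref{thm:gsp} shows that on a bare circle the uniform mixture over dictators is GSP and achieves a constant ratio, so the high-degree vertex must be doing essential work in any proof. Second, for a randomized mechanism the set of Pareto-optimal distributions is large, so ruling out all ``spread'' mixtures cannot be done profile-by-profile; it must combine the Pareto condition at many profiles with the partial-coalition constraints that link them. My intended line of attack is to generalize the three-arm collapse: given an $f$ that places non-negligible mass on two separated circle-locations at some profile, construct a coordinated deviation toward the hub $p$ at which every agent's expected distance strictly decreases, contradicting GSP; and to make this uniform over all non-dictatorial mixtures by first discretizing to a finite sub-domain of profiles, on which a Gibbard-type random-dictatorship theorem can be brought to bear. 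I expect the construction of a \emph{single} family of collapse deviations that simultaneously defeats every non-degenerate mixture --- rather than a mixture-specific one --- to be the principal technical obstacle.
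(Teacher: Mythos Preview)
The statement you are attempting is labeled a \emph{conjecture} in the paper; the authors provide no proof, so there is nothing to compare your attempt against. Your proposal is accordingly not a proof but a proof \emph{plan}, and you are candid about this: you isolate the ``key lemma'' (that on your candidate graph every randomized GSP mechanism with sublinear ratio must collapse to a deterministic cycle-dictator on $C^n$) and explicitly flag it as unproven and as the reason the statement remains open.

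That honesty is appropriate, but two remarks on the plan itself. First, the Pareto-undomination-by-point-masses condition you extract from GSP plus finite approximation is correct but weak: on almost any profile there are many distributions, including highly spread ones, that are not strictly dominated by any $\delta_z$, so this observation alone gives essentially no grip on the shape of $f$. Second, the ``collapse to the hub'' heuristic that drives your intended attack does not generalize in the way you suggest. In the three-arm star example the grand-coalition deviation works because Random Dictator places all its mass at distance $2$ from each agent, so each pays $4/3>1=d(x_i,v)$. For an unknown GSP mechanism $f$, the coordinated deviation of all agents to the hub $p$ yields cost $d(x_i,p)$ for agent $i$, so simultaneous profitability requires $\exp[d(x_i,f(\bx))]>d(x_i,p)$ for every $i$; nothing you have established prevents $f$ from already concentrating near $p$, or near some particular $x_j$, in which case the inequality fails for that agent and GSP is not violated. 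Thus a single global collapse cannot defeat all non-dictatorial mixtures at once, and the passage where you hope to ``make this uniform over all non-dictatorial mixtures'' is precisely the content of the conjecture. The Gibbard-type discretization you mention is a reasonable direction, but the proposal gives no indication of how it would be executed.
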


\section{Maximum Cost}
\label{sec:mc}

In this section we shall be interested in SP mechanisms that minimize the maximum cost. The problem of designing an optimal SP mechanism is very simple with respect to deterministic mechanisms. Recall that, by the result of Schummer and Vohra~\cite{SV04}, strategyproofness can only be obtained by a dictatorship. Hence, consider the mechanism given by $f(\bx)=x_1$ for all $\bx\in G^n$, that is, a dictatorship of agent 1. This mechanism is clearly GSP. Crucially, this mechanism does quite well in terms of our new optimization goal: it provides a 2-approximation for the maximum cost. To see this, given $\bx\in G^n$, let $y$ be the optimal facility location. Then for all $i\in N$, 
$$
d(x_1,x_i)\leq d(x_1,y) + d(y,x_i) \leq 2\cdot\max\{d(y,x_1),d(y,x_i)\} \leq 2\cdot\mc(y,\bx) ~~ .
$$

On the other hand, a deterministic SP mechanism cannot achieve an approximation ratio better than two, even if the underlying graph $G$ is a line~\cite{PT09}. Since in a general graph any edge is locally a line, this lower bound applies to any graph. In other words, dictatorship gives a tight SP upper bound. In the sequel we shall therefore restrict our attention to randomized mechanisms.

\subsection{Randomized Mechanisms on a Circle}
\label{subsec:mc_circle}

We presently consider the case where the graph $G$ is a circle. An important remark is that, even using randomization, we cannot hope to achieve an SP approximation ratio better than 3/2. Indeed, Procaccia and Tennenholtz~\cite{PT09} have established that a randomized SP mechanism does not yield an approximation ratio smaller than 3/2 on a line. Furthermore, they have provided a straightforward matching GSP upper bound of 3/2 on a line using the \emph{Left-Right-Middle (LRM) Mechanism}: given $\bx\in G^n$, with probability 1/4 return the leftmost agent $\min_{i\in N} x_i$, with probability 1/4 return the rightmost agent $\max_{i\in N} x_i$, and with probability 1/2 return the midpoint of the interval between them, that is, 
$$
\cen\(\min_{i\in N} x_i,\max_{i\in N} x_i\) = \frac{\min_{i\in N} x_i + \max_{i\in N} x_i}{2} ~~ .
$$
The idea behind the strategyproofness of this mechanism is very simple: an agent can only affect the outcome of the mechanism by deviating to a location $x_i'<\min_{i\in N} x_i$ or $x_i'>\max_{i\in N} x_i$. In this case, the agent pushes the left or right boundaries \emph{away} from its location by $\delta$, but in doing so may push the midpoint towards its own location by $\delta/2$. Since the midpoint is selected with probability exactly twice that of each of the boundaries, the two terms cancel out. 

Of course, when the agents are on a circle, in general it is meaningless to refer to the ``leftmost'' or ``rightmost'' agent. However, any semicircle can naturally be treated as an interval, and then the LRM mechanism can be applied. The mechanism that we propose is in fact a hybrid of two mechanisms: the LRM mechanism when the agents are located on one semicircle, and the \emph{Random Center (RC) Mechanism} (defined below) when the agents are not on one semicircle.   

\begin{mechanism}
\label{mech:mm_circle_ub}
Given $\bx\in G^n$:
\begin{enumerate}
\item If $\bx$ is such that the agents are located on one semicircle, i.e., there exist $y,z\in G$ such that for all $i\in N$, $x_i\in\carc{y,z}$, then we execute the LRM Mechanism on the arc $\carc{y,z}$, treating it as an interval with the boundaries $y<z$. 

\item If $\bx$ is such that the agents are not located on one semicircle, we execute the \emph{Random Center (RC) Mechanism}, defined as follows. 
\begin{enumerate}
\item Randomly select a point $y\in G$. 
\item Let $\hx_i$ and $\hx_j$ be the two antipodal points adjacent to $y$, that is, $\hx_i$ is the first antipodal point encountered when walking clockwise from $y$, and $\hx_j$ is the first antipodal point encountered when walking counterclockwise from $y$.
\item Return $\cen(\hx_i,\hx_j)$. 
\end{enumerate}
\end{enumerate}
\end{mechanism}

An equivalent way of thinking about the RC Mechanism is letting the mechanism choose the center of an interval between two adjacent antipodal points with probability proportional to the length of the interval. 

Some technical comments regarding Mechanism~\ref{mech:mm_circle_ub} are in order. Regarding the first item, there may be many choices of $y$ and $z$ such that $x_i\in\carc{y,z}$ for all $i\in N$, but the LRM Mechanism is indifferent to the choice. In the context of the second item, it holds that $y\in\carc{\hx_i,\hx_j}$ by the assumption that in $\bx$ the agents are not on one semicircle, i.e., $y$ is on the same arc whose center we return. Furthermore, the RC mechanism is ambiguously defined when the random point $y$ is an antipodal point itself, but this happens with probability zero. 

\emph{A priori}, it seems there is no reason to suppose that Mechanism~\ref{mech:mm_circle_ub} is SP, as usually such hybridizations of SP mechanisms are not SP as a whole. Nevertheless, we have the following theorem. 

\begin{theorem}
\label{thm:mm_circle_ub}
Assume that $G$ is a circle. Then Mechanism~\ref{mech:mm_circle_ub} is an SP 3/2-approximation mechanism for the maximum cost.
\end{theorem}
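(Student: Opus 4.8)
The plan is to prove the two assertions of the theorem separately: that the approximation ratio is $3/2$, and that the hybrid mechanism is strategyproof. The ratio bound splits cleanly along the two regimes of Mechanism~\ref{mech:mm_circle_ub}, while strategyproofness requires a case analysis according to which sub-mechanism is triggered by the true profile and by the deviating profile.

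For the approximation ratio I would argue as follows. Write $C$ for the circumference. In the semicircle regime, let $[\ell,r]$ be the minimal arc containing all agents, of length $L\le C/2$; its endpoints are agent locations at mutual distance exactly $L$, so by the triangle inequality every facility has maximum cost at least $L/2$, and $\cen(\ell,r)$ attains $L/2$, hence $\opt=L/2$. A direct computation of the LRM cost gives $\tfrac14 L+\tfrac14 L+\tfrac12\cdot\tfrac L2=\tfrac{3L}{4}=\tfrac32\opt$. In the non-semicircle regime I would use the elementary identity that for antipodal $m,\h{m}$ and any point $p$, $d(p,m)+d(p,\h{m})=C/2$: conditioned on RC picking the gap of length $\ell$ between two consecutive agents with midpoint $m$, the returned facility is $\h{m}$, and since every agent outside the gap is at distance at least $\ell/2$ from $m$, the maximum cost equals $C/2-\min_k d(m,x_k)=C/2-\ell/2$. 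Averaging over gaps with probability $\ell/C$ yields expected maximum cost $C/2-\tfrac1{2C}\sum_k\ell_k^2\le C/2-\tfrac{L_{\max}^2}{2C}$, where $L_{\max}$ is the longest gap. Because the best facility is the antipode of the longest gap's midpoint, $\opt=(C-L_{\max})/2$, and the required inequality $C/2-\tfrac{L_{\max}^2}{2C}\le\tfrac32\opt$ rearranges to $(C-L_{\max})(C-2L_{\max})\ge 0$, which holds since $L_{\max}<C/2$ whenever the agents are not on a semicircle.

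For strategyproofness I would first record the convenient formula $\cost(\mathrm{RC}(\bx),x_i)=C/2-\tfrac1C\sum_k\ell_k\,d(x_i,m_k)$, so an agent wants to maximize $\sum_k\ell_k\,d(x_i,m_k)$; I would similarly note that in the semicircle regime, if $x_i$ sits at position $s$ in $[\ell,r]$ with $L=r-\ell$, then $\cost(\mathrm{LRM},x_i)=\tfrac14 L+\tfrac12\,d(x_i,\cen(\ell,r))$, a quantity at most $\tfrac L2\le C/4$. Then, fixing an agent $i$ and a deviation $x_i\mapsto x_i'$, I would split into the four cases determined by the regime (semicircle or not) of $\bx$ and of $(x_i',\bx_{-i})$. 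The semicircle-to-semicircle case reduces to the known strategyproofness of LRM on an interval, with care that the minimal enclosing arc may itself move when $i$ is (or becomes) an extreme agent or when $x_i'$ wraps around the semicircle; the usual LRM cancellation — pushing a boundary out by $\delta$ moves the weight-$\tfrac12$ midpoint in by $\delta/2$ — shows the cost does not decrease. The non-semicircle-to-non-semicircle case I would handle by analyzing $i$'s report within its slot: the two adjacent gaps contribute $\tfrac12(a^2+b^2)$ with $a+b$ fixed (convex), weighed against the far-gap terms, and the truthful report is optimal; the boundary of this region is exactly where the profile degenerates into the semicircle regime, which is covered by the cross cases.

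The cross cases — $\bx$ on a semicircle but $(x_i',\bx_{-i})$ not, and vice versa — are where I expect the real difficulty, and they are the crux of why gluing two individually-SP mechanisms remains SP. On the semicircle side I would use the upper bound $\cost(\mathrm{LRM},x_i)=\tfrac14 L+\tfrac12 d(x_i,\cen(\ell,r))$ above; on the non-semicircle side I would lower-bound $i$'s RC cost via the formula, exploiting that when the fixed locations $\bx_{-i}$ occupy a semicircle, a report $x_i'$ that destroys this property must lie roughly antipodal to them, so the two large gaps flanking $x_i'$ force $\sum_k\ell_k\,d(x_i,m_k)$ to be small and hence $i$'s RC cost to be large. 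Matching these two bounds — with the exact dependence on $L$ and on the positions of $x_i$ and $x_i'$ — is the calculation I expect to dominate the proof, and it is precisely where the $1/4,1/4,1/2$ weights of LRM and the proportional-to-length rule of RC are forced; the reverse cross case is symmetric but requires re-checking that the RC-to-LRM transition cannot lower $i$'s cost.
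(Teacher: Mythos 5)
Your approximation-ratio argument is correct and complete, and is in fact slightly sharper than the paper's (the paper simply bounds the non-optimal outcomes of RC by $C/2$ and gets ratio $1+\alpha\le 3/2$, where $\alpha$ is the longest gap); your closed form $\cost(\rc(\bx),x_i)=C/2-\tfrac1C\sum_k\ell_k\,d(x_i,m_k)$ is also a correct and useful reformulation, and your four-case decomposition of strategyproofness is exactly the paper's. The problem is that for strategyproofness you have only set up the cases and then explicitly deferred the calculations that \emph{are} the proof. In the LRM-to-RC case the paper must compute $\cost(\lrm(\bx),x_i)=\tfrac14-\tfrac{\alpha}{2}-\tfrac{\beta}{2}$ exactly (where $\alpha+\beta=d(x_i,\hat l)$), then lower-bound the post-deviation RC cost by $\tfrac14-(\alpha+\beta)^2$ via two reductions --- adding the true location $x_i$ back into the reported profile can only lower $i$'s RC cost, and the adversarial placement of $\hx_i'$ is at the boundary of the admissible arc --- and finally use $(\alpha+\beta)^2\le\tfrac{\alpha+\beta}{2}$. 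In the RC-to-LRM case the whole argument hinges on the threshold you never identify: the RC cost of \emph{any} agent is at most $C/4$ (this is the paper's Lemma~\ref{lem:quarter}, itself a nontrivial computation), while an agent whose true location breaks the semicircle property of $\bx'_{-i}$ necessarily sits in $\arc{\hat l,\hat r}$ and hence has LRM cost at least $C/4$. Saying ``matching these two bounds is the calculation I expect to dominate the proof'' concedes the theorem rather than proving it.

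Two further specific worries. First, in the RC-to-RC case your remark that the two adjacent gaps contribute $\tfrac12(a^2+b^2)$ with $a+b$ fixed, ``(convex),'' is only valid for the truthful report; after a deviation to $s$ the contribution measured from the true location $t$ is $s\,d(t,s/2)+(g-s)\,d(t,(s+g)/2)$, which is not $\tfrac12(s^2+(g-s)^2)$, and convexity of $a^2+b^2$ by itself would favor extreme reports --- the opposite of what you need. The paper instead chains two monotonicity facts: adding $x_i$ to the reported profile weakly lowers $i$'s cost (Lemma~\ref{lem:anti_wlog}), and refining the arc partition by the extra antipodal point $\hx_i'$ weakly raises it (Lemma~\ref{lem:partition}); you also need to handle deviations into a \emph{different} slot, where the ``far-gap terms'' do change. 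Second, the LRM-to-LRM wrap-around sub-case (an extreme agent reporting past the opposite end of the semicircle, so that $\arc{x_i',x_i}$ lies outside the new enclosing arc) is not covered by the usual LRM cancellation, because the interval identification itself changes; the paper resolves it by a separate two-case computation. Until these computations are carried out, the central claim --- that gluing two individually SP mechanisms yields an SP mechanism --- remains unproved.
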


The nontrivial part of the theorem's proof is the strategyproofness of the mechanism. Although the proof is long, it revolves around several basic properties of the RC Mechanism. Very generally speaking, one of the ideas at the core of the proof is that the locations occupied by the agents in $\bx$ are special, but only in the sense that for every $i\in N$ the antipodal point $\hx_i$ is among the antipodal points. For this reason, the cost of an agent (assuming that the circumference of the circle is one) under the RC Mechanism is at most 1/4 (Lemma~\ref{lem:quarter}). A second important idea is that, from the point of view of agent $i$, the RC Mechanism essentially chooses a location uniformly on the circle, except for its behavior on the arc between the two antipodal points adjacent to $x_i$ (Lemma~\ref{lem:interval}). The detailed proof is given in Appendix~\ref{app:mm_circle_ub}. 

It is possible to show that Mechanism~\ref{mech:mm_circle_ub} is not GSP, even when the agents are assumed not to be located on one semicircle before and after the deviation (that is, the RC Mechanism is used in both cases). The counterexample is due to Dror Shemesh.

\subsection{Randomized Mechanisms on Trees}
\label{subsec:mc_tree}

In the following we assume that the graph $G$ is a tree. We first observe that randomization allows us to do slightly better than dictatorship, especially when the number of agents is small. Indeed, given $\bx\in G^n$, the \emph{center of $G$ with respect to $\bx$} is a point 
$$
y\in \text{argmin}_{z\in G} \mc(z,\bx) ~~ .
$$ 
It is easy to verify that when $G$ is a tree the center is unique.\footnote{This would not be true in a discrete graph model, but this issue can still be easily circumvented.} Therefore, we can denote the (unique) center of $G$ with respect to $\bx$ by $\cen(G,\bx)$. 

We consider the following mechanism: given $\bx\in G^N$, the distribution on the returned location gives probability $1/(n+2)$ to $x_i$ for each $i\in N$, and probability $2/(n+2)$ to $\cen(G,\bx)$. The fact that the mechanism is SP follows from the fact that when agent $i$ deviates from $x_i$ to $x_i'$, it holds that 
$$
|d(x_i,\cen(G,\bx)) - d(x_i,\cen(G,\bx'))|\leq \frac{d(x_i,x_i')}{2} ~~ ,
$$
and therefore, denoting the above mechanism by $f$,  
$$
\cost(f(\bx'),x_i) - \cost(f(\bx),x_i) \geq  \frac{1}{n}\cdot d(x_i,x_i') - \frac{2}{n}\cdot \frac{d(x,x_i')}{2} = 0 ~~ .
$$
The approximation ratio of the mechanism satisfies 
$$
\frac{\mc(f(\bx),\bx)}{\opt}\leq \frac{\frac{2}{n+2}\cdot \opt + \frac{n}{n+2} \cdot 2\cdot\opt}{\opt} = 2 - \frac{2}{n+2} ~~ .
$$

Despite this small improvement over dictatorship, we shall demonstrate that we cannot do significantly better. In other words, our final major result asserts that an SP mechanism cannot achieve an approximation ratio that is bounded away from two for the maximum cost, even on trees. 

\begin{theorem}
\label{thm:mm_rand_lb}
Let $N=\{1,\ldots,n\}$. Then there exists a tree $G$ such that no SP randomized mechanism can have an approximation ratio that is smaller than $2-\bigO\left(\frac{1}{2^{\sqrt{\log n}}}\right)$ for the maximum cost.
\end{theorem}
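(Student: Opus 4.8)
The plan is to prove the bound via an explicit tree, built recursively, together with an induction on the recursion depth that tracks how far any strategyproof mechanism is driven from the optimal facility. I would define trees $T_0,T_1,\dots,T_L$ and, for each, a canonical ``hard'' location profile. The base tree $T_0$ is the line instance underlying the known $3/2$ bound: an edge with one agent at each endpoint, embedded inside a larger structure so that both agents have room to exaggerate outward and ``output the reported center'' (which would be a $1$-approximation) is not strategyproof. Given $T_{j-1}$, whose canonical profile has a tiny optimal facility region around a designated point, I form $T_j$ by a gadget that takes a $b$-fold-redundant assembly of scaled copies of $T_{j-1}$ and glues them around a fresh ``pinning point'' $c_j$, placing scaled copies of the $T_{j-1}$-hard profile inside a couple of ``active'' copies and parking the agents of the remaining copies in an ``auxiliary'' region. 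The gadget is designed so that (i) the optimum of the canonical profile of $T_j$ is a tiny region of definite value $\rho_j$ at $c_j$; (ii) the agents fixing this optimum can still pull it, so ``output the reported center'' remains non-strategyproof; and (iii) every way a mechanism could avoid behaving, inside an active copy, like a strategyproof mechanism for the sub-instance is expensive — it costs roughly $2\rho_j$ for some agent. The number of agents in $T_L$ is of order $b^L$, and taking $L=\lfloor\sqrt{\log n}\rfloor$ and $b=2^{\Theta(\sqrt{\log n})}$ (so that $b^L$ is at most $n$) is what produces the $2^{\sqrt{\log n}}$ in the statement; any leftover agents are placed at $c_j$, which lies in the optimal region and therefore cannot lower the approximation ratio of the canonical profile.

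The technical core is the inductive estimate: for an absolute constant $c$, every strategyproof randomized mechanism has approximation ratio at least $r_j$ on the canonical profile of $T_j$, where $r_0=3/2$ and $r_j\ge 1+r_{j-1}/2-c/b$, so $r_j\ge 2-2^{-j-1}-\bigO(1/b)$. To prove it, fix a strategyproof mechanism $f$ and condition its random output on the region it falls into: near the pinning point $c_j$, inside an active copy, or in the auxiliary region. Output near $c_j$ already costs at least $\rho_j$, which is the optimum; output in the auxiliary region is arranged to cost at least $(2-o(1))\rho_j$; and for output inside a fixed active copy one shows that the conditional behavior of $f$ there, with every other agent frozen at its reported location, obeys all the strategyproofness constraints of the instance $T_{j-1}$, so the inductive hypothesis lower-bounds its conditional maximum cost by $r_{j-1}$ times the sub-instance optimum. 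Averaging over the three cases weighted by the conditional probabilities gives the recursion $r_j\ge 1+r_{j-1}/2-c/b$ (the ``$1+r_{j-1}/2$'' being, in effect, the optimum plus the forced spread of the top-level line-type gadget, amplified inside the active copies; the ``$-c/b$'' bounding the mass $f$ can profitably divert). Summing, $r_L\ge 2-2^{-L-1}-\bigO(1/b)$; with $L=\lfloor\sqrt{\log n}\rfloor$ and $b=2^{\Theta(\sqrt{\log n})}$ this is $2-\bigO(2^{-\sqrt{\log n}})$.

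I expect the main obstacle to be the induction step, and inside it the ``localization'': showing that a globally strategyproof mechanism, conditioned on landing inside one active copy and with every other agent frozen, cannot outperform a strategyproof mechanism on $T_{j-1}$ for the sub-instance. The two ways $f$ could cheat are by using agents in the other copies to relax the local constraints, and by hedging a constant fraction of its mass near $c_j$ so as to improve its conditional ratio inside the active copy without paying for it anywhere else; defeating both is precisely what dictates where the auxiliary agents must sit (so that no frozen agent's profitable deviations are coupled to events inside an active copy, and so that $\rho_j$ is insensitive to such events) and why the $b$-fold redundancy is needed (it caps, at $\bigO(1/b)$, the probability mass the mechanism can profitably concentrate in any one copy or escape route). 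Once the localization is in hand, the rest is routine bookkeeping: assemble the recursion, optimize $b$ and $L$ subject to $b^L$ being at most $n$, and conclude. As an alternative organization one can track, instead of the ratio, the probability mass the mechanism is forced to place on the ``wrong side'' of $c_j$; this tends to make the conditioning cleaner and yields the same parameters.
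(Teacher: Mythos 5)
Your plan shares the paper's skeleton---a recursively built tree with $\sqrt{\log n}$ levels of doubling scale, branching/redundancy $2^{\Theta(\sqrt{\log n})}$ per level, and a recursion driving the ratio to $2-2^{-L}-\bigO(1/b)$---but the engine you propose for exploiting strategyproofness has a genuine gap at exactly the point you flag as ``the main obstacle.'' The claim that a globally SP mechanism, \emph{conditioned} on its output landing inside one active copy, ``obeys all the strategyproofness constraints of the instance $T_{j-1}$'' is false in general and is not something a gadget automatically buys you: strategyproofness constrains only the unconditional expected cost, and a deviating agent can trade probability mass between the conditioning event and its complement, so the conditional distribution inside a copy can violate the sub-instance's SP constraints arbitrarily while the global mechanism remains SP. You acknowledge this (the ``hedging'' escape) and assert that $b$-fold redundancy caps the divertible mass at $\bigO(1/b)$, but no argument is given for why the mechanism cannot concentrate its hedge on the one copy that actually carries the hard sub-profile---it sees the reported profile and knows which copies are active. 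Without a proof of this localization lemma, the recursion $r_j\ge 1+r_{j-1}/2-c/b$ is unsupported, and the construction itself (the ``pinning point,'' the ``auxiliary region,'' which copies are active) is never pinned down concretely enough to check.

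The paper avoids conditioning entirely. It uses SP only in its most elementary unconditional form: if an agent located at $p$ reports $p'$ instead, the expected distance of the output from $p$ cannot decrease. Starting from a profile with $n/2$ agents at each endpoint of a unit edge (so the output is at expected distance $\ge 1/2$ from one endpoint $l^0$), it moves those agents \emph{one at a time} down to the $m$ children of $l^0$, preserving the lower bound on $\exp[d(f(\cdot),l^0)]$ at every intermediate profile; a pointwise triangle-inequality count then shows the \emph{sum} of expected distances to the $m$ children is at least $m\cdot 2^d+(m-2)\cdot(\text{previous bound})$, and averaging picks out one child from which the expected distance is large. Iterating down $k$ levels with edge lengths $2^d$ gives expected distance about $2^k$ against an optimum of $2^{k-1}$, i.e.\ ratio $2-1/2^k-4/m$, and the constraint $2m^k\le n$ forces the $k=\Theta(\sqrt{\log n})$, $m=\Theta(2^{\sqrt{\log n}})$ trade-off you also arrived at. If you want to salvage your outline, replacing the conditional-localization step with this chain of single-agent moves and the averaging step is the repair; as written, the core inductive step of your argument is missing.
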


The proof of the theorem is given in Appendix~\ref{app:mm_rand_lb}.

\bibliographystyle{plain}

\appendix

\section{Proof of Theorem~\ref{thm:gsp}}
\label{app:gsp}

Assume without loss of generality that the circumference of $G$ is 1. Let $x_1,\ldots,x_n\in G$, and define a multiset $X$ by $X=\{x_1,\ldots,x_n\}$. We first note that we can assume that there are no $x_i,x_j\in X$ such that $x_j=\hx_i$. Indeed, in this case the claim holds trivially with respect to either $i$ or $j$, since for all $z\in G$,
$$
d(x_i,z) + d(x_j,z) = 1/2 ~~ .
$$
In particular, for every $x_i,x_j\in X$, $\arc{x_i,x_j}$ and $\arc{\hx_i,\hx_j}$ are well-defined.

We say that two points $x_i,x_j\in X$ are \emph{nearly antipodal} if there is no point $x_k\in X$ such that $x_k\in \arc{x_i,\hx_j}$ or $x_k\in \arc{x_j,\hx_i}$; let $A\subseteq X^2$ be the set of all nearly antipodal pairs. Given a nearly antipodal pair $\pair{x_i,x_j}\in A$, let the \emph{critical arc of $\pair{x_i,x_j}$}, denoted $\crit(x_i,x_j)$, be the \emph{long} open arc between $\hx_i$ and $\hx_j$, that is,
$$
\crit(x_i,x_j)=G\setminus\carc{\hx_i,\hx_j}=\arc{x_i,\hx_j}\cup \carc{x_i,x_j}\cup \arc{x_j,\hx_i} ~~ .
$$
See Figure~\ref{fig:gsp} for an illustration of the construction given above.

\begin{figure}[t]
\begin{center}
\begin{tikzpicture}[scale=1.5]

\tikzstyle{blackdot}=[circle,draw=black,fill=black,thin,
inner sep=0pt,minimum size=1.5mm]
\tikzstyle{whitedot}=[circle,draw=black,fill=white,thin,
inner sep=0pt,minimum size=1.5mm]

\draw (0,0) circle (1cm);

\node (x1) at (-1,0) [blackdot] {};
\draw +(-1.3,0) node {\small{$x_1$}};
\node (bx1) at (1,0) [whitedot] {};
\draw +(1.3,0) node {\small{$\hx_1$}};
\draw[dashed] (x1) -- (bx1);

\node (x2) at (150:1cm) [blackdot] {};
\draw (150:1.3cm) node {\small{$x_2$}};
\node (bx2) at (330:1cm) [whitedot] {};
\draw (330:1.3cm) node {\small{$\hx_2$}};
\draw[dashed] (x2) -- (bx2);

\node (x3) at (45:1cm) [blackdot] {};
\draw (45:1.3cm) node {\small{$x_3$}};
\node (bx3) at (225:1cm) [whitedot] {};
\draw (225:1.3cm) node {\small{$\hx_3$}};
\draw[dashed] (x3) -- (bx3);

\node (x4) at (0,-1) [blackdot] {};
\draw (0,-1.3) node {\small{$x_4$}};
\node (bx4) at (0,1) [whitedot] {};
\draw (0,1.3) node {\small{$\hx_4$}};
\draw[dashed] (x4) -- (bx4);
%
%

\end{tikzpicture}
\end{center}
\caption{An illustration of the construction in the proof of Theorem~\ref{thm:gsp}, for $n=4$. The nearly antipodal pairs are $A=\{\langle x_1,x_3\rangle,\langle x_2,x_4\rangle,\langle x_3,x_4\rangle\}$.}
\label{fig:gsp}
\end{figure}
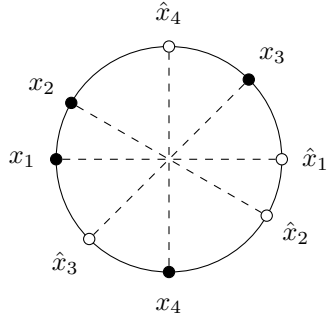

Let $y_1,\ldots,y_n\in G$, and define a multiset $Y$ by $Y=\{y_1,\ldots,y_n\}$. It is sufficient to prove that there exists a pair of nearly antipodal points $\pair{x_i,x_j}\in A$ such that
$$
\sum_{k\in N} d(x_i,x_k)+ \sum_{k\in N} d(x_j,x_k) \leq \sum_{k\in N} d(x_i,y_k) + \sum_{k\in N} d(x_j,y_k) ~~ .
$$
Indeed, in this case we get that Equation~\eqref{eq:goal} holds with respect to either $x_i$ or $x_j$. Therefore, assume for the purpose of contradiction that for every pair of nearly antipodal points $\pair{x_i,x_j}\in A$,
\begin{equation}
\label{eq:cont}
\sum_{k\in N} d(x_i,x_k)+ \sum_{k\in N} d(x_j,x_k) > \sum_{k\in N} d(x_i,y_k) + \sum_{k\in N} d(x_j,y_k) ~~ .
\end{equation}

We claim that Equation~\eqref{eq:cont} implies that for every pair of nearly antipodal points $\pair{x_i,x_j}\in A$, the number of points from $Y$ on $\crit(x_i,x_j)$ is strictly greater than the number of points from $X$ on the same arc. Formally, for $\pair{x_i,x_j}\in A$, let
$$
\alpha^X_{ij} = |\{x_k\in X:\ x_k\in \crit(x_i,x_j)\}| ~~ ,
$$
and
$$
\alpha^Y_{ij} = |\{y_k\in Y:\ y_k\in\crit(x_i,x_j)\}| ~~ .
$$
We have the following claim.

\begin{lemma}
\label{lem:less}
Let $\pair{x_i,x_j}\in A$. Then $\alpha^Y_{ij} > \alpha^X_{ij}$.
\end{lemma}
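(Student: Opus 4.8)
The goal is to show that for any nearly-antipodal pair $\pair{x_i,x_j}\in A$, Equation~\eqref{eq:cont} forces $\alpha^Y_{ij}>\alpha^X_{ij}$. The natural strategy is to estimate each side of \eqref{eq:cont} in terms of how a single point $z\in G$ contributes to $d(x_i,z)+d(x_j,z)$, and to observe that this quantity depends only on which of the two arcs (the critical arc $\crit(x_i,x_j)$ or its complement $\carc{\hx_i,\hx_j}$) contains $z$. So first I would prove the key pointwise fact: writing $\ell=d(x_i,x_j)$ for the length of the short arc $\carc{x_i,x_j}$ (which, by near-antipodality and the assumption that no two points are antipodal, satisfies $0<\ell<1/2$), for every $z\in\crit(x_i,x_j)$ we have $d(x_i,z)+d(x_j,z)=\ell$ exactly when $z\in\carc{x_i,x_j}$, and more generally $d(x_i,z)+d(x_j,z)\le \ell$ is \emph{not} what we want — rather the clean statement is that $d(x_i,z)+d(x_j,z)$ takes its minimum value $\ell$ precisely on $\carc{x_i,x_j}$, and on the complementary arc $\carc{\hx_i,\hx_j}$ it equals $1-\ell>1/2>\ell$. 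Actually the cleanest route: for \emph{every} $z$, $d(x_i,z)+d(x_j,z)\ge \ell$, with equality iff $z\in\carc{x_i,x_j}$; and for every $z\in\carc{\hx_i,\hx_j}$, $d(x_i,z)+d(x_j,z)=1-\ell$. Let me restate: the function $z\mapsto d(x_i,z)+d(x_j,z)$ is constant ($=\ell$) on $\carc{x_i,x_j}$, constant ($=1-\ell$) on $\carc{\hx_i,\hx_j}$, and linear in between; in particular it is $\le 1-\ell$ everywhere on $\crit(x_i,x_j)$ and $=1-\ell$ on $\carc{\hx_i,\hx_j}$.

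**Bounding the $X$ side from below.** Here the definition of nearly antipodal is exactly what is needed: no point $x_k$ lies in $\arc{x_i,\hx_j}$ or in $\arc{x_j,\hx_i}$. Hence every $x_k\in X$ lies either in $\carc{x_i,x_j}$ (contributing exactly $\ell$ to the sum $d(x_i,x_k)+d(x_j,x_k)$) or in $\carc{\hx_i,\hx_j}$ (contributing exactly $1-\ell$). Since $x_i,x_j$ themselves lie in $\carc{x_i,x_j}$, there are at most $\alpha^X_{ij}$ points in $\carc{x_i,x_j}$ — wait, I must be careful about which arc $\alpha^X_{ij}$ counts. By definition $\alpha^X_{ij}$ counts the $x_k$ on $\crit(x_i,x_j)$, which is the long arc; but near-antipodality says every $x_k$ is on $\carc{x_i,x_j}\subseteq\crit(x_i,x_j)$ or on $\carc{\hx_i,\hx_j}=G\setminus\crit(x_i,x_j)$. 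So $\alpha^X_{ij}$ equals the number of $x_k$ in $\carc{x_i,x_j}$, and $n-\alpha^X_{ij}$ of them lie in $\carc{\hx_i,\hx_j}$. Therefore
$$
\sum_{k\in N} d(x_i,x_k)+\sum_{k\in N} d(x_j,x_k) = \alpha^X_{ij}\cdot \ell + (n-\alpha^X_{ij})(1-\ell).
$$

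**Bounding the $Y$ side from above, and concluding.** For the $y_k$ we have no structural constraint, so I split $Y$ into those on $\crit(x_i,x_j)$ ($\alpha^Y_{ij}$ of them) and those on $\carc{\hx_i,\hx_j}$ ($n-\alpha^Y_{ij}$ of them). A $y_k$ in the first group contributes $d(x_i,y_k)+d(x_j,y_k)\le 1-\ell$ by the pointwise fact; a $y_k$ in the second group contributes exactly $1-\ell$. But that only gives an upper bound of $n(1-\ell)$, which is too weak. The fix is to use the \emph{lower} bound $\ge \ell$ on the first group instead — no, that goes the wrong way too. The right move: a $y_k\in\crit(x_i,x_j)$ contributes \emph{at most} $1-\ell$ but \emph{at least} $\ell$; the useful direction is the upper bound $1-\ell$ is not tight enough, so instead I bound the first-group contribution by $\le 1-\ell$ per point but note we really want to compare the deficit. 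Combining with \eqref{eq:cont}:
$$
\alpha^X_{ij}\ell+(n-\alpha^X_{ij})(1-\ell) > \alpha^Y_{ij}(\text{something}\le 1-\ell)+(n-\alpha^Y_{ij})(1-\ell),
$$
and since every term on the right is $\le 1-\ell$ while $\alpha^X_{ij}$ terms on the left equal the strictly smaller value $\ell$, the right side is at least $(n-\alpha^Y_{ij})(1-\ell)+\alpha^Y_{ij}\cdot 0$ — still not clean. The genuinely clean inequality is: the right side is $\le (n-\alpha^Y_{ij})(1-\ell)+\alpha^Y_{ij}(1-\ell)$ is useless, so I will instead bound the right side \emph{below} is also wrong. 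I expect the correct bookkeeping is: RHS $\ge \alpha^Y_{ij}\cdot\ell + (n-\alpha^Y_{ij})(1-\ell)$ (using $d(x_i,y_k)+d(x_j,y_k)\ge\ell$ always), hence \eqref{eq:cont} gives $\alpha^X_{ij}\ell+(n-\alpha^X_{ij})(1-\ell) > \alpha^Y_{ij}\ell+(n-\alpha^Y_{ij})(1-\ell)$; rearranging, $(\alpha^X_{ij}-\alpha^Y_{ij})(\ell-(1-\ell))>0$, i.e.\ $(\alpha^X_{ij}-\alpha^Y_{ij})(2\ell-1)>0$, and since $2\ell-1<0$ this yields $\alpha^X_{ij}<\alpha^Y_{ij}$, as desired.

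**Main obstacle.** The delicate point, and the step I would write most carefully, is the pointwise computation of $d(x_i,z)+d(x_j,z)$ together with the assertion that near-antipodality confines all $x_k$ to the two ``flat'' arcs $\carc{x_i,x_j}$ and $\carc{\hx_i,\hx_j}$, so that on the $X$ side every contribution is \emph{exactly} $\ell$ or $1-\ell$ (not merely bounded); this exactness is what makes the final one-line rearrangement go through. A secondary subtlety is handling degenerate positions — points coinciding with $x_i$, $x_j$, $\hx_i$, or $\hx_j$ — but the already-established reduction (no point antipodal to another, so all relevant arcs are well-defined) and the fact that $x_i,x_j\in\carc{x_i,x_j}$ handle these boundary cases without incident.
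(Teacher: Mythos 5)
Your proposal is correct and, after the false starts in the middle paragraph, settles on exactly the paper's argument: the function $z\mapsto d(x_i,z)+d(x_j,z)$ equals $d(x_i,x_j)$ on $\carc{x_i,x_j}$, equals the long-arc length on $\carc{\hx_i,\hx_j}$, and is strictly between these values elsewhere, so near-antipodality makes the $X$-sum exactly $\alpha^X_{ij}\,d(x_i,x_j)+(n-\alpha^X_{ij})\,d'(x_i,x_j)$ while the $Y$-sum is bounded below by the analogous expression with $\alpha^Y_{ij}$, and Equation~\eqref{eq:cont} then forces $\alpha^Y_{ij}>\alpha^X_{ij}$. This matches the paper's proof step for step (the paper writes $d'(x_i,x_j)$ where you write $1-\ell$).
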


\begin{proof}
For every point $z\in\carc{x_i,x_j}$, we have that
$$
d(x_i,z) + d(x_j,z) = d(x_i,x_j) ~~ .
$$
Let $d'(x_i,x_j)$ be the length of the longer arc $G\setminus \carc{x_i,x_j}$ between $x_i$ and $x_j$, namely
$$
d'(x_i,x_j) = d(x_i,\hx_j) + d(\hx_j,\hx_i) + d(\hx_i,x_j) = d(x_i,x_j) + 2\cdot d(x_i,\hx_j) > d(x_i,x_j) ~~ .
$$
For every point $z\in \carc{\hx_i,\hx_j}$ it holds that
$$
d(x_i,z) + d(x_j,z) = d'(x_i,x_j) ~~ .
$$
Finally, it holds that for every $z\in\arc{x_i,\hx_j}\cup\arc{x_j,\hx_i}$,
$$
d(x_i,x_j) < d(x_i,z) + d(x_j,z) < d'(x_i,x_j) ~~ .
$$
Since $x_i$ and $x_j$ are nearly antipodal, there are no points from $X$ in $\arc{x_i,\hx_j}$ and $\arc{x_j,\hx_i}$. Therefore,
\begin{equation}
\label{eq:X}
\sum_{k\in N} d(x_i,x_k)+ \sum_{k\in N} d(x_j,x_k) = \alpha^X_{ij} \cdot d(x_i,x_j) + (n-\alpha^X_{ij})\cdot d'(x_i,x_j) ~~ .
\end{equation}
On the other hand,
\begin{equation}
\label{eq:Y}
\sum_{k\in N} d(x_i,y_k) + \sum_{k\in N} d(x_j,y_k) \geq \alpha^Y_{ij}\cdot d(x_i,x_j) + (n-\alpha^Y_{ij})\cdot d'(x_i,x_j) ~~ .
\end{equation}
Using Equations \eqref{eq:X} and \eqref{eq:Y}, we get that \eqref{eq:cont} directly implies that $\alpha^Y_{ij} > \alpha^X_{ij}$, as claimed.
\end{proof}

From Lemma~\ref{lem:less}, we immediately get that
\begin{equation}
\label{eq:less}
\sum_{(x_i,x_j)\in A} \alpha^X_{ij} < \sum_{(x_i,x_j)\in A} \alpha^Y_{ij} ~~ .
\end{equation}

In order to derive a contradiction, we also need the following lemma.

\begin{lemma}
\label{lem:contra}
There exists $r\in\mathbb{N}$ such that
\begin{equation}
\label{eq:rn}
\sum_{(x_i,x_j)\in A} \alpha^X_{ij} = r\cdot n ~~ ,
\end{equation}
whereas
\begin{equation}
\label{rn2}
\sum_{(x_i,x_j)\in A} \alpha^Y_{ij} \leq r\cdot n ~~ ,
\end{equation}
\end{lemma}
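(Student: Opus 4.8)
The plan is to evaluate both sides of \eqref{eq:rn} and \eqref{rn2} exactly by translating everything into the combinatorics of the $2n$-point set $X\cup\h X$, where $\h X=\{\hx_1,\dots,\hx_n\}$. After a harmless perturbation (which leaves the conclusion of Theorem~\ref{thm:gsp} intact by continuity) I may assume the points of $X$ are distinct; combined with the standing assumption that no $x_i$ is antipodal to any $x_j$, this makes $X\cup\h X$ a set of $2n$ distinct points, which, listed in cyclic order as $p_1,\dots,p_{2n}$, satisfy $\h{p}_m=p_{m+n}$ (indices mod $2n$). Colour $p_m$ \emph{white} if $p_m\in X$ and \emph{black} if $p_m\in\h X$; then $p_m$ and $p_{m+n}$ always get opposite colours, so each colour occurs exactly $n$ times. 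The first step is the characterization: $\pair{x_i,x_j}$ is nearly antipodal iff $x_i$ and $\hx_j$ are \emph{cyclically adjacent} in $X\cup\h X$ (equivalently, iff $x_j$ and $\hx_i$ are). Indeed, if $\arc{x_i,\hx_j}$ and $\arc{x_j,\hx_i}$ are free of points of $X$, then the antipodal map — which carries $\arc{x_i,\hx_j}$ onto $\arc{\hx_i,x_j}$ — shows they are free of points of $\h X$ as well, so their endpoints are consecutive; the converse is immediate. This yields a bijection between nearly antipodal pairs and the set of \emph{WB-gaps}, i.e.\ consecutive pairs $(p_m,p_{m+1})$ coloured white-then-black, the pair $\pair{x_i,x_j}$ being matched with whichever of the two gaps realizing the adjacency of $\{x_i,\hx_j\}$ is a WB-gap. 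For $w\in G$ let $\phi(w)$ be the number of nearly antipodal pairs whose critical arc contains $w$, so that $\sum_{\pair{x_i,x_j}\in A}\alpha^X_{ij}=\sum_{k\in N}\phi(x_k)$ and $\sum_{\pair{x_i,x_j}\in A}\alpha^Y_{ij}=\sum_{k\in N}\phi(y_k)$.

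Next I would pin down the critical arcs. Since $\carc{\hx_i,\hx_j}$ is by definition the shorter of the two arcs joining $\hx_i$ and $\hx_j$, the critical arc $\crit(x_i,x_j)=G\setminus\carc{\hx_i,\hx_j}$ is the longer one. A direct inspection of the cyclic order shows that for the pair attached to the WB-gap between $p_a$ (white, $=x_i$) and $p_{a+1}$ (black, $=\hx_j$), this longer arc is precisely the union of the $n+1$ consecutive elementary arcs $(p_{a-n},p_{a-n+1}),\dots,(p_{a-1},p_a),(p_a,p_{a+1})$, whose interior contains exactly the $n$ points $p_{a-n+1},\dots,p_a$. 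Consequently, for a vertex $p_m$ one gets $\phi(p_m)=$ (the number of WB-gaps among $(p_m,p_{m+1}),\dots,(p_{m+n-1},p_{m+n})$), and for $w$ interior to an elementary arc $(p_c,p_{c+1})$ one gets $\phi(w)=$ (the number of WB-gaps among $(p_c,p_{c+1}),\dots,(p_{c+n},p_{c+n+1})$). This identification of $\crit$ with a window of $n+1$ consecutive elementary arcs is the one genuinely delicate point, requiring careful tracking of orientations and of which of the two arcs between $\hx_i$ and $\hx_j$ is the shorter.

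The rest is a short double-count. Write $f(m)$ for the number of WB-gaps among $(p_m,p_{m+1}),\dots,(p_{m+n-1},p_{m+n})$. Since $p_{m+n}$ has the opposite colour to $p_m$, the gap $(p_{m+n},p_{m+n+1})$ is WB exactly when $(p_m,p_{m+1})$ is BW, so $f(m+1)-f(m)$ is $+1$ at a BW-gap, $-1$ at a WB-gap, and $0$ at a monochromatic gap; tracing this around the circle, $f$ takes a single constant value $r\in\mathbb{N}$ at every white vertex and $r-1$ at every black vertex, so in particular $f(m)\le r$ for all $m$. For $w$ interior to $(p_c,p_{c+1})$ we have $\phi(w)=\mathbf{1}[(p_c,p_{c+1})\text{ is WB}]+f(c+1)$, which is $1+(r-1)=r$ when that gap is WB (then $p_{c+1}$ is black) and is $f(c+1)\le r$ otherwise. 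Hence $\phi(x_k)=r$ for every $k$ (each $x_k$ sits at a white vertex), whereas $\phi(y_k)\le r$ for every $k$; summing gives $\sum_{\pair{x_i,x_j}\in A}\alpha^X_{ij}=rn$ and $\sum_{\pair{x_i,x_j}\in A}\alpha^Y_{ij}\le rn$, which is exactly \eqref{eq:rn} and \eqref{rn2} (and, together with \eqref{eq:less}, completes the proof of Theorem~\ref{thm:gsp}).
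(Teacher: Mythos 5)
Your proof is correct, and the one step you flag as delicate does check out: for the pair attached to the WB-gap $(p_a,p_{a+1})$ one has $\hx_i=p_{a+n}$ and $x_j=p_{a+n+1}$, the short arc $\carc{\hx_i,\hx_j}$ is the one through $p_{a+2},\dots,p_{a+n-1}$ (the other arc contains the antipodal pair $p_{a+1},p_{a+n+1}$ and so has length at least $1/2$), and its complement is exactly your window of $n+1$ elementary arcs with interior vertices $p_{a-n+1},\dots,p_a$. The overall strategy coincides with the paper's --- both reduce the lemma to showing that every $x_k$ lies in exactly $r$ critical arcs while every point of $G$ lies in at most $r$ --- but the execution differs in a way worth noting. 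The paper proves the two halves separately: for \eqref{eq:rn} it fixes $x_1$, walks along the semicircle from $x_1$ to $\hx_1$, and counts ``point-to-antipode'' versus ``antipode-to-point'' transitions (obtaining $r=s+1$ where $|A|=2s+1$); for \eqref{rn2} it runs a four-case analysis on the two members of $X\cup\h{X}$ adjacent to $y$. You derive both from a single invariant: the sliding-window count $f(m)$ has discrete derivative $+1$, $-1$, or $0$ according to the colour change at gap $m$, forcing $f\equiv r$ on white vertices and $f\equiv r-1$ on black ones, from which $\phi(x_k)=r$ and $\phi(y)\le r$ fall out uniformly (and, as a byproduct, $|A|=2r-1$ is odd, a fact the paper asserts separately). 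Your version is arguably cleaner, treats the two bounds symmetrically, and handles degenerate configurations explicitly via perturbation, which the paper glosses over; the price is the orientation bookkeeping needed to justify the window identification. No gaps.
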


\begin{proof}
It is easy to see that $|A|$ is odd (e.g., by induction on $n$); let $|A|=2s+1$, for some $s\in\mathbb{N}$. We first wish to claim that every $x_i\in X$ is a member of exactly $s+1$ critical arcs, which directly proves Equation~\eqref{eq:rn} with $r=s+1$.

Without loss of generality we prove the claim with respect to $x_1\in X$. Consider the clockwise closed arc between $x_1$ and $\hx_1$. Let
$
Z=\{z_1,\ldots,z_t\}
$
be all the points $x_i$ or $\hx_i$ on this arc, where for all $k$, $z_{k+1}\succeq z_k$. In particular, $z_1=x_1$ and $z_t=\hx_1$. For instance, in Figure~\ref{fig:gsp} we have that $Z=\{x_1,x_2,\hx_4,x_3,\hx_1\}$.

Now, we have that the set of nearly-antipodal pairs $A$ is exactly the set of pairs $\pair{x_i,x_j}$ such that $z_k$ is a point $x_i$ and and $z_{k+1}$ is an antipodal point $\hx_j$ (this is a \emph{type 1} nearly-antipodal pair), or $z_k$ is an antipodal point $\hx_i$ and $z_{k+1}$ is a point $x_j$ (this is a \emph{type 2} nearly-antipodal pair). If $\pair{x_i,x_j}$ is a nearly-antipodal pair of type 1, we have that $x_1\in \carc{x_i,x_j}$, and hence $x_1\in \crit(x_i,x_j)$. On the other hand, if $\pair{x_i,x_j}$ is a nearly-antipodal pair of type 2, then $x_1\notin \crit(x_i,x_j)$. Since $z_1=x_1$ is a point from $X$ and $x_{n+1}=\hx_1$ is an antipodal point, the number of nearly-antipodal pairs of type 1 is exactly $s+1$, which proves the claim.

In order to prove Equation~\eqref{rn2}, let $y\in G$. It is sufficient to prove that there exists $x_i\in X$ such that $y$ appears in at most as many critical arcs as $x_i$, since we already know that $x_i$ is a member of exactly $s+1$ critical arcs. We consider the two points or antipodal points that are adjacent to $y$, and briefly examine four cases.
\begin{enumerate}
\item $x_i\preceq y \preceq x_j$: $y$ appears in exactly the critical arcs that contain $x_i$ (these are also exactly the critical arcs that contain $x_j$).
\item $x_i \preceq y \prec \hx_j$: $y$ appears in exactly the critical arcs that contain $x_i$.
\item $\hx_i \prec y \preceq x_j$: $y$ appears in exactly the critical arcs that contain $x_j$.
\item $\hx_i \preceq y \preceq \hx_j$: When walking counterclockwise from $\hx_i$, let $x_k\in X$ be the first point from $X$, and let $\hx_l$ be the last antipodal point such that $x_k\prec \hx_l\preceq \hx_i$. Then $y$ is contained in all the critical arcs that contain $x_k$, except for $\crit(x_k,\hx_l)$, that is, in exactly $s$ critical arcs.
\end{enumerate}
We deduce that every $y_i$ is contained in at most $r=s+1$ critical arcs, which implies the validity of Equation~\eqref{rn2}.
\end{proof}

It follows from Lemma~\ref{lem:contra} that
$$
\sum_{(x_i,x_j)\in A} \alpha^X_{ij} \geq \sum_{(x_i,x_j)\in A} \alpha^Y_{ij} ~~ ,
$$
in contradiction to Equation~\eqref{eq:less}.
\qed

\section{Proof of Theorem~\ref{thm:mm_circle_ub}}
\label{app:mm_circle_ub}

In the proof we assume without loss of generality that the circumference of the circle $G$ is 1. In addition, we denote the outcome of the LRM Mechanism and the RC Mechanism given $\bx\in G^n$ by $\lrm(\bx)$ and $\rc(\bx)$, respectively. 

\begin{lemma}
Mechanism~\ref{mech:mm_circle_ub} is a 3/2-approximation mechanism for the maximum cost.
\end{lemma}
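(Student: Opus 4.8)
The plan is to bound the expected maximum cost separately in the two regimes of Mechanism~\ref{mech:mm_circle_ub}, normalizing the circumference to $1$ throughout; note that strategyproofness plays no role here, so the argument is purely about distances.

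\textbf{Agents on one semicircle (LRM branch).} Here I would parametrize $\carc{y,z}$ as an interval and let $\ell=\min_i x_i$, $r=\max_i x_i$ be the extreme agents; since $d(\ell,r)\le\half$, all relevant distances coincide with interval distances. One first checks $\opt=\half\,d(\ell,r)$: the point $\cen(\ell,r)$ has maximum cost exactly $\half\,d(\ell,r)$ (all agents lie in $\carc{\ell,r}$, with $\ell,r$ attaining the bound), while every facility $p$ satisfies $\max\{d(p,\ell),d(p,r)\}\ge\half\bigl(d(p,\ell)+d(p,r)\bigr)\ge\half\,d(\ell,r)$ by the triangle inequality. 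Then the LRM Mechanism returns $\ell$ and $r$ (each of maximum cost $d(\ell,r)=2\opt$) with probability $\tfrac14$ each and $\cen(\ell,r)$ (of maximum cost $\opt$) with probability $\half$, so
\[
\mc(\lrm(\bx),\bx)=\tfrac14\cdot2\opt+\tfrac14\cdot2\opt+\tfrac12\cdot\opt=\tfrac32\opt .
\]

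\textbf{Agents not on one semicircle (RC branch).} I would proceed in three steps. First, a lower bound on $\opt$: for antipodal $w,\h w$ one has $d(w,p)+d(\h w,p)=\half$ for every $p$, hence $\max_k d(z,x_k)=\half-\min_k d(\h z,x_k)$; letting $g$ be the length of the largest arc of $G$ containing no agent in its interior, every point lies within $g/2$ of some agent, so $\opt\ge\half(1-g)$, and the hypothesis of this case is precisely that $g<\half$. Second, I would observe that the antipodal map carries the agent multiset onto the breakpoint multiset $\{\h{x_1},\dots,\h{x_n}\}$ while preserving arc lengths, so $g$ also equals the length of the longest arc between consecutive breakpoints — and those are exactly the arcs $\carc{\h{x_i},\h{x_j}}$ whose centers RC returns. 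Third, I would compute the maximum cost of a realized RC outcome: if the random point lands in such an arc of length $2\rho$ (so $\rho\le g/2\le\tfrac14$) with center $c$, then the two agents $i,j$ whose antipodes bound the arc satisfy $d(c,x_i)=d(c,x_j)=\half-\rho$, whereas every other agent lies strictly inside the arc of length $1-2\rho$ centered at $c$ and so is strictly closer; hence $\mc(c,\bx)=\half-\rho$. Averaging over the random point (which lands in an arc $I$ with probability $|I|$) gives
\[
\mc(\rc(\bx),\bx)=\half-\tfrac12\sum_{I}|I|^2\ \le\ \half-\tfrac12 g^2 .
\]
It then remains to combine these estimates: using $\opt\ge\half(1-g)$ it suffices to verify $\half-\tfrac12 g^2\le\tfrac34(1-g)$, which rearranges to $(2g-1)(g-1)\ge0$ and holds because $0\le g<\half$. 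Together with the semicircle case this yields $\mc(f(\bx),\bx)\le\tfrac32\opt$ for the mechanism $f$ of Mechanism~\ref{mech:mm_circle_ub}.

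I expect the main obstacle to be the third step of the RC analysis: although the returned point $c$ depends on $\bx$ only through the \emph{antipodes} of the agents, one has to see that the agent farthest from $c$ is always one of the two agents whose antipodes bound the arc containing the random point, and that its distance is exactly $\half-\rho$. Coupled with the identification of the longest RC-arc length with the largest agent-free gap $g$ — which is what lets the quantity $\sum_I|I|^2$ governing RC's expected cost be compared with $\opt$ — this is the crux of the argument; the case hypothesis $g<\half$ is exactly what the final elementary inequality requires.
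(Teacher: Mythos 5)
Your proof is correct, and in both branches it rests on the same structural facts as the paper's: the LRM computation is identical, and in the RC branch both arguments hinge on identifying the largest agent-free gap $g$ (the paper's $\alpha$) with the longest arc between adjacent antipodal points and on the value $\opt=\tfrac12(1-g)$. Where you diverge is in how you account for RC's cost. The paper argues crudely but quickly: the random point lands in the longest antipodal arc with probability $\alpha$, in which case the returned center is exactly the optimal location of cost $\tfrac{1-\alpha}{2}$, and otherwise the maximum cost is bounded pointwise by $\tfrac12$; this gives the ratio $1+\alpha\le\tfrac32$. You instead prove that the center of \emph{every} antipodal arc of length $2\rho$ has maximum cost exactly $\tfrac12-\rho$ (using $d(p,x)+d(p,\hx)=\tfrac12$ for all $p$), obtain the exact expected cost $\tfrac12-\tfrac12\sum_I|I|^2$, discard all but the largest term, and close with the elementary inequality $(2g-1)(g-1)\ge 0$. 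Your route costs one extra observation (that the farthest agent from an arc center is always one of the two agents whose antipodes bound that arc) but yields a sharper intermediate bound and an exact formula for $\mc(\rc(\bx),\bx)$; the paper's route avoids that observation entirely by treating all non-optimal outcomes with the trivial bound $\tfrac12$. Also note the paper states $\opt=\tfrac{1-\alpha}{2}$ with equality, whereas you only need (and only prove) the inequality $\opt\ge\tfrac12(1-g)$, which is all the argument requires.
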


\begin{proof}
Assume first that $\bx\in G^n$ is such that not all agents are located on one semicircle. Let $\alpha$ be the length of the longest arc between two adjacent agents, and assume without loss of generality that these agents are agents 1 and 2. Since the agents are not located on one semicircle, it holds that $\alpha\leq 1/2$. It can be verified that the optimal facility location is $\cen(\hx_1,\hx_2)$, hence we have that $\opt = (1-\alpha)/2$. The mechanism selects the optimal solution with probability $\alpha$, and with probability $1-\alpha$ selects a solution with maximum cost at most 1/2. Therefore, the approximation ratio is at most
\begin{equation}
\label{eq:approx}
\frac{\mc(\rc(\bx),\bx)}{\opt}\leq\frac{\alpha\cdot\frac{1-\alpha}{2} + (1-\alpha)\cdot\frac{1}{2}}{\frac{1-\alpha}{2}} = 1+\alpha \leq \frac{3}{2} ~~ .
\end{equation}

If $\bx$ is such that all the agents are located on one semicircle, then the LRM mechanism is applied: we choose the optimal location with probability 1/2, and a location with twice the optimal cost with probability 1/2, hence the approximation ration is, once again, 3/2.
\end{proof}

In order to establish the strategyproofness of Mechanism~\ref{mech:mm_circle_ub}, we must examine four types of lies: an agent deviating such that before the deviation all the agents were located on one semicircle and after the deviation they are located on one semicircle---``semicircle to semicircle'' (the LRM Mechanism is applied to both); not semicircle to semicircle (RC to LRM); semicircle to not semicircle (LRM to RC); and not semicircle to not semicircle (RC to RC). The semicircle to semicircle case is relatively straightforward, and we tackle it first.  

\begin{lemma}[Semicircle to semicircle]
\label{lem:semi_to_semi}
Assume that $\bx\in G^n$ is such that the agents are on one semicircle, and agent $i$ deviates such that in the new location profile $\bx'$ the agents are also on one semicircle. Then
$$
\cost(\lrm(\bx),x_i)\leq \cost(\lrm(\bx'),x_i) ~~ .
$$
\end{lemma}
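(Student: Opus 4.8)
The plan is to reduce the claim to the strategyproofness of the LRM mechanism on a single interval, while being careful that the enclosing "interval" may differ between $\bx$ and $\bx'$. First I would note that the LRM outcome depends on the profile only through the minimal arc $I$ that contains all reported locations (legitimate, since the mechanism is indifferent to the choice of $y,z$), and that this outcome is $\tfrac14\delta_{\ell_I}+\tfrac14\delta_{r_I}+\tfrac12\delta_{m_I}$ with $m_I=\cen(\ell_I,r_I)$. The case $n=1$ is trivial (the outcome is a point mass at the report), so assume $n\ge 2$, let $I_0=\carc{\ell,r}$ be the minimal arc containing $\bx_{-i}$, and put $\Delta:=|I_0|\le \tfrac12$. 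I would then unroll the arc of reports that keep all agents on one semicircle — this is exactly $\carc{\hat r,\hat\ell}$ traversed through $I_0$, of length $1-\Delta$ — into real coordinates with $\ell\mapsto 0$, $r\mapsto\Delta$, so that $x_i$ and every admissible report $p$ lie in $[\Delta-\tfrac12,\tfrac12]$ and the new minimal arc is $I(p)=[\min(p,0),\max(p,\Delta)]$. The lemma becomes: $\cost(\mathrm{outcome}(I(x_i)),x_i)\le \cost(\mathrm{outcome}(I(p)),x_i)$ for every admissible $p$.

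Next I would split on whether $x_i$ is "covered" by the other agents. If $x_i\in[0,\Delta]$ then $I(x_i)=I_0$ and reports in $[0,\Delta]$ leave the outcome unchanged; by symmetry it suffices to treat $p<0$, which replaces $I_0$ by $[p,\Delta]$, pushing $\ell_I$ away from $x_i$ by $|p|$, leaving $r_I$ fixed, and moving $m_I$ by $|p|/2$. Since $|p|\le \tfrac12-\Delta$ all these points stay within distance $\tfrac12$ of $x_i$, so circular distances are additive and the net change is at least $\tfrac14|p|-\tfrac12\cdot\tfrac{|p|}{2}=0$ — the usual LRM cancellation. Otherwise $x_i$ is an extreme agent, say (without loss of generality) the left one, $x_i<0$, so $I(x_i)=[x_i,\Delta]$ and, again without wrap-around since $|I(x_i)|\le\tfrac12$, the true cost is $c_0:=\tfrac12(\Delta-x_i)$. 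I would then enumerate the report $p$. For $p\le x_i$ (pushing the left boundary further out) $x_i$ now lies inside $I(p)$, and writing $a=x_i-p\ge 0$, $b=\Delta-x_i\ge 0$ the cost is $\tfrac14(a+b)+\tfrac14|a-b|$, which equals $c_0$ when $a\le b$ and is $\tfrac12 a\ge c_0$ when $a\ge b$. For $x_i<p\le\Delta$ the point $x_i$ sits outside $I(p)$ on a common semicircle with it, so the cost is $d(x_i,\text{near endpoint of }I(p))+\tfrac12|I(p)|$, and a one-line affine comparison gives $\ge c_0$ (the excess is $\tfrac12(p-x_i)$ or $-\tfrac12 x_i$). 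All of these are routine.

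The genuine obstacle is the remaining case, $p>\Delta$, where the deviation reports agent $i$ on the far side of the cluster, $I(p)=[0,p]$: here $x_i$ need no longer lie on any semicircle containing $I(p)$, so the cost must be evaluated with the honest circular-distance formula $d(x_i,s)=\min(t+s,\,1-t-s)$, writing $t=-x_i$ and $s\in[0,\tfrac12]$ for the arc-position of a support point of the outcome. The sub-case $t+p\le\tfrac12$ is still additive and gives cost $t+\tfrac12 p\ge c_0$ because $p>\Delta$; for $t+p>\tfrac12$ the short arc from $x_i$ to $p$ (and possibly to $p/2$) wraps around the other way, and splitting further on whether $t+p/2\le\tfrac12$ a direct computation gives cost $\tfrac14+\tfrac14 t$ in one sub-case and $\tfrac34-\tfrac12 t-\tfrac12 p$ in the other. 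In both I would verify $\cost-c_0\ge 0$ by invoking the two structural facts available — $\Delta+t=|I(x_i)|\le\tfrac12$ (the honest profile is on a semicircle) and $p\le\tfrac12$ (the deviation profile is on a semicircle); each inequality is then a short affine estimate. The one point requiring care is that $\Delta$ must be kept in these estimates (not discarded), since a coarser bound fails precisely when $\Delta$ and $t$ are both small. Assembling the cases completes the proof; I expect the bookkeeping in this last case, rather than any conceptual difficulty, to be the bulk of the work.
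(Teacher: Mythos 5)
Your proposal is correct and takes essentially the same route as the paper: the non-wrapping deviations reduce to strategyproofness of LRM on a line, and the only genuinely circular case is an extreme agent reporting past the far boundary, which both you and the paper settle by direct computation split on whether the midpoint of the new arc is reached through the cluster or around the back (your split $t+p/2\le 1/2$ versus $t+p/2>1/2$ is exactly the paper's Case 1 / Case 2 with $\alpha=t$, $\beta=p-\Delta$, $\gamma=\Delta+t$). One harmless arithmetic slip: in the middle sub-case the cost evaluates to $\tfrac14+\tfrac{t}{2}$ rather than your $\tfrac14+\tfrac{t}{4}$, but since even the smaller value dominates $c_0=\tfrac{\Delta+t}{2}$ (as $\Delta+t\le\tfrac12$ gives $2\Delta+t\le 1$), the conclusion is unaffected.
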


\begin{proof}
Let $x_1 \succeq x_2 \succeq \ldots \succeq x_n$, and denote $x_1=l$ (for ``left'') and $x_n=r$ (for ``right''). Suppose agent $i$ deviates from $x_i$ to $x_i'$, such that the agents are on one semicircle. If $\arc{x_i',x_i}$ intersects with the new semicircle, the proof follows directly from the fact that the LRM Mechanism is SP when applied to an interval~\cite{PT09}.

It is easy to verify that $\arc{x_i',x_i}$ may lie in the complement of the new semicircle only if (i) the deviating agent is $r$ and $x_i' \succ l$; or (ii) the deviating agent is $l$ and $x_i' \prec r$. We prove the lemma for the former case, but note that the latter case is completely analogous.

Indeed, suppose without loss of generality that $x_i=r$ and $x_i' \succ l$. Let $r'$ be the adjacent agent to $r$ such that $r' \succ r$. It must hold that $x_i' \preceq \h{r}'$ (otherwise in the new location profile the agents are not all on one semicircle). We denote $\alpha = d(r,r')$, $\beta=d(x_i',l)$, and $\gamma=d(l,r)$ (see Figure~\ref{fig:semi_to_semi}).
From the assumptions of the lemma it follows that $\gamma \leq \half$.
The short arc from $x_i'$ to $x_i$ lies in the complement of the new semicircle if and only if $1 - \beta - \gamma \leq \beta + \gamma$, that is, $\beta+\gamma\geq 1/2$.

\begin{figure}
\centering
\subfigure[Truthful location profile $\bx$.]
{
\label{sub:aa_1}

\begin{tikzpicture}[scale=1.5]

\tikzstyle{blackdot}=[circle,draw=black,fill=black,thin,
inner sep=0pt,minimum size=1.5mm]
\tikzstyle{whitedot}=[circle,draw=black,fill=white,thin,
inner sep=0pt,minimum size=1.5mm]

\draw (0,0) circle (1cm);

\node (hx1) at (1,0) [blackdot] {};
\draw +(1.5,0) node {\small{$x_i=r$}};

\node at (300:1cm) [blackdot] {};
\draw (300:1.3cm) node {\small{$$}};

\node at (270:1cm) [blackdot] {};
\draw (270:1.3cm) node {\small{$$}};

\node at (250:1cm) [blackdot] {};
\draw (250:1.3cm) node {\small{$l$}};

\draw [dashed] (-1,0) -- (1,0);
\node at (-1,0) [whitedot] {};

\end{tikzpicture}
}
\hspace{1cm}
\subfigure[Manipulated location profile $\bx'$.]
{

\label{sub:bb_1}

\begin{tikzpicture}[scale=1.5]

\tikzstyle{blackdot}=[circle,draw=black,fill=black,thin,
inner sep=0pt,minimum size=1.5mm]
\tikzstyle{whitedot}=[circle,draw=black,fill=white,thin,
inner sep=0pt,minimum size=1.5mm]
\tikzstyle{graydot}=[circle,draw=gray,fill=gray,thin,
inner sep=0pt,minimum size=1.5mm]

\draw (0,0) circle (1cm);

\node (hx1) at (1,0) [graydot] {};
\draw +(1.5,0) node {\small{$x_i=r$}};

\node at (300:1cm) [blackdot] {};
\draw (300:1.3cm) node {\small{$r'$}};

\node at (270:1cm) [blackdot] {};
\draw (270:1.3cm) node {\small{$$}};

\node at (250:1cm) [blackdot] {};
\draw (250:1.3cm) node {\small{$l$}};

\node at (150:1cm) [blackdot] {};
\draw (150:1.3cm) node {\small{$x_i'$}};

\draw (330:0.85cm) node {\small{$\alpha$}};
\draw (200:0.85cm) node {\small{$\beta$}};

\draw [dashed] (300:1cm) -- (120:1);
\node at (120:1) [whitedot] {};

\end{tikzpicture}

}
\caption{Illustration of the proof of Lemma~\ref{lem:semi_to_semi}.}
\label{fig:semi_to_semi}
\end{figure}
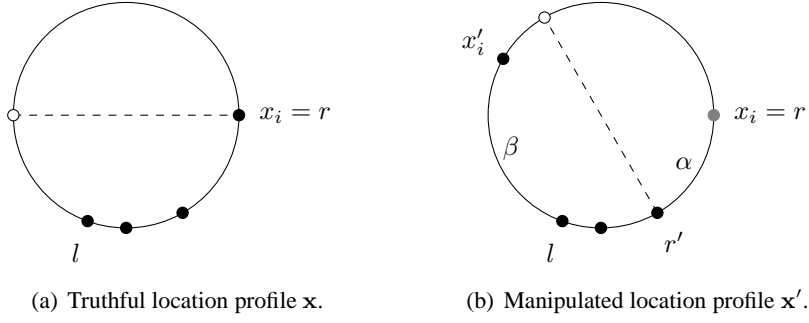

We first calculate the cost of $x_i$ in the location profile $\bx$:
\[
\cost(\lrm(\bx),x_i) = \quartx{1} \cdot \gamma + \quartx{1} \cdot 0 + \half \cdot \halfx{\gamma} = \halfx{\gamma} ~~ .
\]
We distinguish between two cases: 

\noindent\emph{Case 1:} $\alpha \leq 1 - \beta - \gamma$. In this case, the cost of the LRM Mechanism in the new location profile with respect to $x_i$ is
\[
\cost(\lrm(\bx'),x_i) = \quartx{1} \cdot \alpha + \quartx{1} \cdot (1 - \beta - \gamma) + \half \cdot \left( \alpha + \halfx{\gamma - \alpha + \beta}\right) = \quartx{1} + \halfx{\alpha} ~~ . 
\]
It holds that 
$$
\cost(\lrm(\bx),x_i) \leq  \cost(\lrm(\bx'),x_i)\Leftrightarrow\gamma \leq \half + \alpha ~~ ,
$$
but this holds since $\gamma \leq \half$.

\noindent\emph{Case 2:} $\alpha > 1 - \beta - \gamma$. In this case, the cost of the left-middle-right mechanism in the new location profile with respect to $x_i$ is
\[
\cost(\lrm(\bx'),x_i) = \quartx{1} \cdot \alpha + \quartx{1} \cdot (1 - \beta - \gamma) + \half \cdot \left( 1 - \beta - \gamma + \halfx{\gamma - \alpha + \beta}\right) = \quartx{3} - \halfx{\beta} - \halfx{\gamma} ~~ .
\]
It holds that 
$$
\cost(\lrm(\bx),x_i) \leq  \cost(\lrm(\bx'),x_i) \Leftrightarrow \halfx{\gamma} \leq \quartx{3} - \halfx{\beta} - \halfx{\gamma} \Leftrightarrow \gamma \leq \quartx{3} - \halfx{\beta} ~~ .
$$ 
Since $\gamma \leq \half$, it is sufficient to show that $\half \leq \quartx{3} - \halfx{\beta}$, which holds if and only if $\beta \leq \half$. The last inequality follows from the fact that all the agents are on one semicircle after the deviation; this concludes the proof of the lemma.
\end{proof} 

The other three deviations are more complicated, and their proofs require laying some foundations first. We start with two simple lemmata concerning partitions of intervals on the line, both of which will prove useful in several points in the sequel. On a circle, the arc between two points $x,y\in G$ such that $d(x,y)\leq 1/2$ can be regarded as an interval. 

\begin{lemma}
\label{lem:partition}
Let $x,y\in[0,1]\subset \mathbb{R}$. It holds that
$$
d(x,1/2)\leq y\cdot d(x,\cen(0,y)) + (1-y) \cdot d(x,\cen(y,1)) ~~ .
$$
\end{lemma}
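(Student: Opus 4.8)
The plan is to recognize the right-hand side as a $y$-weighted average of two distances and then invoke the triangle inequality (equivalently, the convexity of $t\mapsto |x-t|$). First I would unwind the notation on the real line: $\cen(0,y)=y/2$ and $\cen(y,1)=(y+1)/2$, so the asserted inequality is precisely
$$
\left|x-\tfrac12\right|\ \le\ y\left|x-\tfrac{y}{2}\right|+(1-y)\left|x-\tfrac{y+1}{2}\right| ~~ .
$$

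The key observation is the algebraic identity
$$
y\cdot\frac{y}{2}+(1-y)\cdot\frac{y+1}{2}=\frac{y^2+(1-y)(1+y)}{2}=\frac12 ~~ ,
$$
i.e.\ $\tfrac12$ is exactly the convex combination, with weights $y$ and $1-y$, of the two centers $\cen(0,y)$ and $\cen(y,1)$; here we use that $y\in[0,1]$, so these weights are nonnegative. Writing also $x=y\cdot x+(1-y)\cdot x$ and subtracting, we obtain the representation
$$
x-\frac12 = y\left(x-\frac{y}{2}\right)+(1-y)\left(x-\frac{y+1}{2}\right) ~~ .
$$

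The last step is then immediate: applying the triangle inequality to this representation yields
$$
\left|x-\frac12\right| \le y\left|x-\frac{y}{2}\right|+(1-y)\left|x-\frac{y+1}{2}\right| ~~ ,
$$
which is the desired bound. I do not anticipate any real obstacle: the only "trick" is spotting that $1/2$ is the $y$-weighted midpoint of $\cen(0,y)$ and $\cen(y,1)$, after which the statement reduces to convexity of the absolute value (or, if one prefers, to Jensen's inequality for the two-point distribution placing mass $y$ at $y/2$ and mass $1-y$ at $(y+1)/2$). The only point that requires a word of care is to note explicitly that $y\in[0,1]$ makes the coefficients nonnegative, so that the triangle-inequality step is legitimate.
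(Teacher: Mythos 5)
Your proof is correct, but it takes a genuinely different route from the paper's. You observe that $y\cdot\cen(0,y)+(1-y)\cdot\cen(y,1)=\tfrac12$, write $x-\tfrac12$ as the corresponding convex combination of $x-\cen(0,y)$ and $x-\cen(y,1)$, and conclude by the triangle inequality (convexity of $t\mapsto|x-t|$). The paper instead assumes without loss of generality that $x\le 1/2$ and computes the right-hand side explicitly in two cases ($y/2\ge x$ and $y/2<x$), obtaining exactly $\tfrac12-x$ in the first case and $\tfrac12-x+(2xy-y^2)>\tfrac12-x$ in the second. Your argument is shorter, case-free, and generalizes immediately to partitions into more than two subintervals; the one small thing the paper's computational proof buys is the exact value of the right-hand side and hence the identification of when the inequality is tight and which choice of $y$ is extremal, information that the paper explicitly reuses later (it twice invokes ``the proof of Lemma~\ref{lem:partition}'' rather than just its statement, e.g.\ to place $\hx_i^*$ optimally in Lemma~\ref{lem:semi_to_not_semi} and to argue that the optimal location for $\hx_i'$ is at the edges of the arc in Lemma~\ref{lem:not_semi_to_not_semi}). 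If you adopted your proof, those later steps would need a sentence noting that the triangle inequality is an equality precisely when both terms $x-\cen(0,y)$ and $x-\cen(y,1)$ have the same sign, which recovers the same tightness information.
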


\begin{proof}
Assume without loss of generality that $x\leq 1/2$. We distinguish between two cases.

\noindent\emph{Case 1:} $y/2\geq x$. In this case,
$$
y\cdot d(x,\cen(0,y)) + (1-y) \cdot d(x,\cen(y,1)) = y\cdot \(\frac{y}{2}-x\) + (1-y)\cdot \(y+\frac{1-y}{2}-x\) = \frac{1}{2}-x ~~ .
$$
\noindent\emph{Case 2:} $y/2< x$. We have that
$$
y\cdot d(x,\cen(0,y)) + (1-y) \cdot d(x,\cen(y,1)) = y\cdot \(x-\frac{y}{2}\) + (1-y)\cdot \(y+\frac{1-y}{2}-x\) = \frac{1}{2}-x + (2xy-y^2) ~~ .
$$
Since $y/2< x$, it holds that $2xy> y^2$, hence $1/2-x + (2xy-y^2)> 1/2-x$.
\end{proof}

\begin{lemma}
\label{lem:interval}
Let $y_1,\ldots,y_{m+1}\in[0,1/2]$ such that $y_1=0$, $y_{m+1}=1/2$. For all $i=1,\ldots,m$, define $d_i=y_{i+1}-y_i$. Then $$
\sum_{i=1}^m \(d_i\cdot\(\sum_{j=1}^{i-1} d_j + \frac{d_i}{2}\)\)= \frac{1}{8} ~~ .
$$
\end{lemma}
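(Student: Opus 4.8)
The plan is to observe that the partial sums appearing inside the parentheses are themselves the abscissae $y_i$, which collapses the whole expression into a telescoping sum. Since $y_1=0$, for every $i$ we have
$$
\sum_{j=1}^{i-1} d_j = \sum_{j=1}^{i-1}(y_{j+1}-y_j) = y_i - y_1 = y_i ~~ .
$$
Using this, the first step is to rewrite the $i$-th summand by pulling out the difference of squares:
$$
d_i\left(\sum_{j=1}^{i-1} d_j + \frac{d_i}{2}\right) = (y_{i+1}-y_i)\left(y_i + \frac{y_{i+1}-y_i}{2}\right) = (y_{i+1}-y_i)\cdot\frac{y_i + y_{i+1}}{2} = \frac{y_{i+1}^2 - y_i^2}{2} ~~ .
$$

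Next I would sum this over $i=1,\ldots,m$ and telescope, using the endpoint conditions $y_1 = 0$ and $y_{m+1} = 1/2$:
$$
\sum_{i=1}^m \frac{y_{i+1}^2 - y_i^2}{2} = \frac{y_{m+1}^2 - y_1^2}{2} = \frac{(1/2)^2 - 0}{2} = \frac{1}{8} ~~ ,
$$
which is exactly the claimed identity. (An equivalent way to see this, if one prefers: the quantity $d_i\bigl(\sum_{j<i} d_j + d_i/2\bigr)$ equals $\int_{y_i}^{y_{i+1}} t\,dt$, so the sum is $\int_0^{1/2} t\,dt = 1/8$; but the telescoping argument above is cleaner and entirely elementary.)

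There is no genuine obstacle here: the only thing to notice is that $\sum_{j<i} d_j = y_i$ (which quietly uses $y_1=0$) together with the difference-of-squares factorization, after which the computation is immediate. I expect that the one step worth stating explicitly is this first rewriting; everything following it is mechanical.
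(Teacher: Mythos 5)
Your proof is correct and matches the paper's in spirit: the paper verifies the identity by reorganizing the terms into $\frac{1}{2}\left(\sum_{i=1}^m d_i\right)^2 = \frac{1}{8}$, which is the same elementary algebra as your telescoping of $\frac{1}{2}(y_{i+1}^2 - y_i^2)$, just viewed globally rather than term by term.
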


The intuition behind Lemma~\ref{lem:interval} is that choosing the center of each interval with probability equal to the length of the interval is like randomly choosing a point in $[0,1/2]$ with probability $1/2$. The expected distance of a random point in $[0,1/2]$ from 0 is $1/4$, multiplying by $1/2$ yields $1/8$. 

\begin{proof}[Proof of Lemma~\ref{lem:interval}]
By reorganizing the terms, it can be verified that
$$
\sum_{i=1}^m \(d_i\cdot\(\sum_{j=1}^{i-1} d_j + \frac{d_i}{2}\)\) = \frac{\(\sum_{i=1}^m d_i\)^2}{2} = \frac{1}{8} ~~ .
$$
\end{proof}

The next lemma implies that under the Random Center Mechanism, the cost of an agent is at most $1/4$. It follows from the two previous lemmata.

\begin{lemma}
\label{lem:quarter}
For all $\bx\in G^n$ such that $\bx$ is not on one semicircle, and for all $i\in N$,
$$
\cost(\rc(\bx),x_i)\leq \frac{1}{4} ~~ .
$$
\end{lemma}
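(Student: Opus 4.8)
The plan is to reduce the statement to Lemmas~\ref{lem:partition} and~\ref{lem:interval} by comparing the RC Mechanism to the same mechanism run on a slightly refined partition. Normalize the circumference to $1$ and place $x_i$ at position $0$, so that $\hx_i$ sits at $1/2$. Let $P$ be the partition of the circle into the arcs delimited by the distinct antipodal points among $\hx_1,\ldots,\hx_n$; by the equivalent description of the RC Mechanism (choosing the center of a gap between adjacent antipodal points with probability proportional to its length),
$$
\cost(\rc(\bx),x_i)=\sum_{I\in P}\ell(I)\cdot d\big(x_i,\cen(I)\big),
$$
where $\ell(I)$ is the length of $I$ and $\cen(I)$ its midpoint. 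First I would record two geometric facts. Since the agents are not on one semicircle, every arc $I\in P$ has length strictly less than $1/2$: otherwise the interior of the reflected arc $\hat I$ would be an open semicircle containing no agent, forcing all agents into the complementary closed semicircle. And because $\hx_i$ is itself one of the antipodal points, it is a breakpoint of $P$, so no arc of $P$ contains $\hx_i$ in its interior. Consequently, on each $I\in P$ the map $z\mapsto d(x_i,z)$ coincides, under the isometric identification of $I$ with a real interval, with an absolute-value function $|{\cdot}-c|$ — this is what will let us apply the line lemmas.

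Next I would refine. Let $P'$ be $P$ with the point $x_i$ added as a breakpoint; this does nothing if $x_i$ is already an antipodal point, and otherwise splits the unique arc $I_0\ni x_i$ into two subarcs, with $x_i$ interior to $I_0$ so that $d(x_i,\cdot)$ is a genuine $|{\cdot}-c|$ with $c$ in the interior. Rescaling Lemma~\ref{lem:partition} to $I_0$, split at $x_i$, shows that the two new contributions sum to at least the old one $\ell(I_0)\,d(x_i,\cen(I_0))$; since all other arcs are untouched,
$$
\cost(\rc(\bx),x_i)=\sum_{I\in P}\ell(I)\,d\big(x_i,\cen(I)\big)\ \le\ \sum_{I\in P'}\ell(I)\,d\big(x_i,\cen(I)\big).
$$

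Finally I would evaluate the right-hand side exactly. In $P'$ both $x_i$ (at $0$) and $\hx_i$ (at $1/2$) are breakpoints, so $P'$ restricts to a partition of the semicircular arc from $0$ to $1/2$ and a partition of the semicircular arc from $1/2$ back to $0$. Writing the first as $0=y_1<y_2<\cdots<y_{m+1}=1/2$ with $d_j=y_{j+1}-y_j$, the $j$-th arc has length $d_j$ and its midpoint lies at distance $\sum_{k<j}d_k+d_j/2$ from $x_i$, so Lemma~\ref{lem:interval} gives this semicircle a total contribution of exactly $1/8$; the other semicircle contributes another $1/8$ by the same computation. Hence the refined cost equals $1/4$, and combining with the previous display yields $\cost(\rc(\bx),x_i)\le 1/4$. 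The one genuinely delicate point is the bookkeeping of the first paragraph — checking that every gap between adjacent antipodal points is shorter than a semicircle and that $d(x_i,\cdot)$ restricted to each gap is honestly an absolute-value function — since this is precisely what legitimizes using the linear Lemmas~\ref{lem:partition} and~\ref{lem:interval} on the circle; the rest is a short calculation.
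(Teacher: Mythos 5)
Your proof is correct and follows essentially the same route as the paper's: both insert $x_i$ as an extra breakpoint (the paper phrases this as adding an agent at $\hx_i$, hence an antipodal point at $x_i$), invoke Lemma~\ref{lem:partition} to show this refinement can only increase the cost, and then use Lemma~\ref{lem:interval} on each of the two semicircles from $x_i$ to $\hx_i$ to evaluate the refined cost as exactly $\tfrac{1}{8}+\tfrac{1}{8}=\tfrac{1}{4}$. The preliminary bookkeeping you flag (each gap between adjacent antipodal points fits in a semicircle, and $\hx_i$ is itself a breakpoint) is exactly the justification the paper also relies on.
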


\begin{proof}
Assume that the locations of the agents satisfy $\hx_1\succeq \hx_2\cdots \succeq \hx_n \succeq \hx_1$, and that there are no antipodal points in $\arc{\hx_i,\hx_{i+1}}$ for all $i\in N$. Let $i\in N$, and assume without loss of generality that $x_i\in\carc{\hx_1,\hx_2}$. In addition, let $\alpha=d(\hx_1,x_i)$ and $\beta=d(x_i,\hx_2)$ (see Figure~\ref{fig:quarter}).

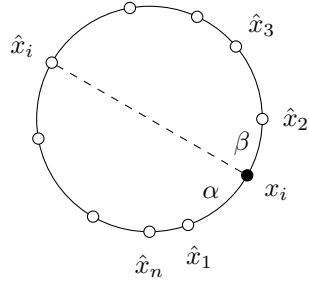
\begin{figure}[t]
\begin{center}
\begin{tikzpicture}[scale=1.5]

\tikzstyle{blackdot}=[circle,draw=black,fill=black,thin,
inner sep=0pt,minimum size=1.5mm]
\tikzstyle{whitedot}=[circle,draw=black,fill=white,thin,
inner sep=0pt,minimum size=1.5mm]

\draw (0,0) circle (1cm);

\node (hx1) at (1,0) [whitedot] {};
\draw +(1.3,0) node {\small{$\hx_2$}};

\node (x2) at (330:1cm) [blackdot] {};
\draw (330:1.3cm) node {\small{$x_i$}};
\node (hx2) at (150:1cm) [whitedot] {};
\draw (150:1.3cm) node {\small{$\hx_i$}};
\draw[dashed] (x2) -- (hx2);

\node at (290:1cm) [whitedot] {};
\draw (290:1.3cm) node {\small{$\hx_1$}};

\draw (310:0.85cm) node {\small{$\alpha$}};
\draw (345:0.85cm) node {\small{$\beta$}};

\node at (270:1cm) [whitedot] {};
\draw (270:1.3cm) node {\small{$\hx_n$}};

\node at (40:1cm) [whitedot] {};
\draw (40:1.3cm) node {\small{$\hx_3$}};

\node at (100:1cm) [whitedot] {};
\node at (65:1cm) [whitedot] {};

\node at (190:1cm) [whitedot] {};
\node at (240:1cm) [whitedot] {};

\end{tikzpicture}
\end{center}
\label{fig:quarter}
\caption{Illustration of the proof of Lemma~\ref{lem:quarter}.}
\end{figure}

We wish to calculate the cost of the Random Center Mechanism with respect to agent $i$. We can break this cost down into two components: the cost when choosing a point in the arc $\carc{\hx_1,\hx_2}$ (this happens with probability $\alpha+\beta$), and the rest of the cost. Notice that $x_i\in\carc{\hx_1,\hx_2}$, i.e., the length of the arc between $\hx_1$ and $\hx_2$ that includes $x_i$ is is at most 1/2; this holds by the assumption that the agents are not on one semicircle. Therefore we can treat $\carc{\hx_1,\hx_2}$ as an interval. By normalizing $d(\hx_1,\hx_2)$, we get from Lemma~\ref{lem:partition} that if we partition the arc $\carc{\hx_1,\hx_2}$ into two arcs of length $y$ and $1-y$ and choose the center of each with probability $y$ and $1-y$, respectively, the cost of agent $i$ can only increase. In particular, this is true when $y=\alpha$. Hence, we have that
\begin{equation}
\label{eq:antipod}
\cost(\rc(\bx),x_i)\leq \cost(\rc(\bx\cup\{\hx_i\}),x_i) ~~ ,
\end{equation}
where $\bx\cup\{\hx_i\}$ is the profile $\bx$ with an additional agent at $\hx_i$, that is, an additional antipodal point at $x_i$. It is sufficient to show that 
$$
\cost(\rc(\bx\cup\{\hx_i\}),x_i) \leq \frac{1}{4} ~~ .
$$
The expression $\cost(\rc(\bx\cup\{\hx_i\}),x_i)$ is the expected distance from agent $i$ when the center of one of the arcs
\begin{equation}
\label{eq:arcs}
\carc{x_i,\hx_2},\carc{\hx_2,\hx_3},\carc{\hx_3,\hx_4},\ldots,\carc{\hx_n,\hx_1},\carc{\hx_1,x_i} ~~,
\end{equation}
is chosen, where the probability of choosing the center of an arc is its length. In order to make this explicit, denote $d_i = d(\hx_i,\hx_{i+1})$ for $i=1,\ldots,n-1$, $d_n=d(\hx_n,\hx_1)$. We have
\begin{equation}
\label{eq:uniform2}
\cost(\rc(\bx\cup\{\hx_i\}),x_i) = \frac{\alpha^2}{2} + \frac{\beta^2}{2} + \sum_{j=2}^{i-1} \(d_j \cdot\(\beta + \sum_{k=2}^{j-1} d_k + \frac{d_j}{2}  \)  \) + \sum_{j=i}^{n} \(d_j \cdot\(\alpha + \sum_{k=j+1}^{n} d_k + \frac{d_j}{2}  \)  \)
\end{equation}

We partition the expression in the right hand side of Equation~\eqref{eq:uniform2} into two sums, each corresponding to the cost of the mechanism on a semicircle of length 1/2, and apply Lemma~\ref{lem:interval} to each, with $y_1=x_i$ in both cases, and $d_1=\alpha$ or $d_1=\beta$. In more detail, it holds that
$$
\frac{\beta^2}{2} + \sum_{j=2}^{i-1} \(d_j \cdot\(\beta + \sum_{k=2}^{j-1} d_k + \frac{d_j}{2}  \)  \) = \frac{1}{8} ~~ ,
$$
and
$$
\frac{\alpha^2}{2} + \sum_{j=i}^{n} \(d_j \cdot\(\alpha + \sum_{k=j+1}^{n} d_k + \frac{d_j}{2}  \)  \) = \frac{1}{8} ~~ .
$$
We conclude that the expression on the right hand side of Equation~\eqref{eq:uniform2} is exactly 1/4.
\end{proof}

\begin{figure}
\centering
\subfigure[Truthful location profile $\bx$.]
{
\label{sub:a_1}

\begin{tikzpicture}[scale=1.5]

\tikzstyle{blackdot}=[circle,draw=black,fill=black,thin,
inner sep=0pt,minimum size=1.5mm]
\tikzstyle{whitedot}=[circle,draw=black,fill=white,thin,
inner sep=0pt,minimum size=1.5mm]

\draw (0,0) circle (1cm);

\node (x1) at (-1,0) [blackdot] {};

\node (x2) at (330:1cm) [blackdot] {};
\node at (100:1cm) [blackdot] {};
\draw (100:1.3cm) node {\small{$x_i$}};

\node at (220:1cm) [blackdot] {};
\node at (205:1cm) [blackdot] {};
\node at (265:1cm) [blackdot] {};

\draw [color=white] (255:1.3cm) node {\small{$y$}};
\draw [color=white] (-1.3,0) node {\small{$l$}};
\draw [color=white] (1.3,0) node {\small{$\hat{l}$}};

\end{tikzpicture}
}
\hspace{1cm}
\subfigure[Manipulated location profile $\bx'$.]
{

\label{sub:b_1}

\begin{tikzpicture}[scale=1.5]

\tikzstyle{blackdot}=[circle,draw=black,fill=black,thin,
inner sep=0pt,minimum size=1.5mm]
\tikzstyle{whitedot}=[circle,draw=black,fill=white,thin,
inner sep=0pt,minimum size=1.5mm]
\tikzstyle{blackdot}=[circle,draw=black,fill=black,thin,
inner sep=0pt,minimum size=1.5mm]
\tikzstyle{graydot}=[circle,draw=gray,fill=gray,thin,
inner sep=0pt,minimum size=1.5mm]

\draw (0,0) circle (1cm);

\node (x1) at (-1,0) [blackdot] {};
\draw +(-1.3,0) node {\small{$l$}};
\node (hx1) at (1,0) [whitedot] {};
\draw +(1.3,0) node {\small{$\hat{l}$}};
\draw[dashed] (x1) -- (hx1);

\node (x2) at (330:1cm) [blackdot] {};
\draw (330:1.3cm) node {\small{$r$}};
\node (hx2) at (150:1cm) [whitedot] {};
\draw (150:1.3cm) node {\small{$\hat{r}$}};
\draw[dashed] (x2) -- (hx2);

\draw (255:0.9cm)--(255:1.1cm); 
\draw (255:1.3cm) node {\small{$y$}};

\node at (100:1cm) [graydot] {};
\draw (100:1.3cm) node {\small{$x_i$}};

\node at (220:1cm) [blackdot] {};
\node at (205:1cm) [blackdot] {};
\node at (310:1cm) [blackdot] {};
\draw (310:1.3cm) node {\small{$x_i'$}};
\node at (265:1cm) [blackdot] {};

\draw[color=white] (0:1.5) node {a};

%
%
%

\end{tikzpicture}

}
\caption{Illustration of the proof of Lemma~\ref{lem:notsemi_to_semi}.}
\label{fig:notsemi_to_semi}
\end{figure}

\begin{lemma}[Not semicircle to semicircle]
\label{lem:notsemi_to_semi}
Assume that $\bx\in G^n$ is such that the agents are not on one semicircle, but agent $i$ deviates such that in the new location profile $\bx'$ the agents are on one semicircle. Then
$$
\cost(\rc(\bx),x_i)\leq \cost(\lrm(\bx'),x_i) ~~ .
$$
\end{lemma}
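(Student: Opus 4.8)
The plan is to sandwich the two costs around $1/4$. On the left, Lemma~\ref{lem:quarter} already gives $\cost(\rc(\bx),x_i)\le 1/4$ (recall we normalise the circumference of $G$ to $1$). So the entire content of the lemma reduces to the single inequality
\[
\cost(\lrm(\bx'),x_i)\ \ge\ \frac14 ,
\]
which I would prove by a direct coordinate computation, once the location of $x_i$ relative to the semicircle of $\bx'$ is pinned down. Note it is \emph{not} enough to know that $x_i$ lies off that semicircle: a point just beyond an endpoint of $\carc{l',r'}$ makes the LRM cost roughly half the length of that arc, which can be below $1/4$. So the first — and, I expect, the delicate — step is to show that the failure of $\bx$ to be semicircular forces $x_i$ all the way onto the ``far arc.''

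\emph{Step 1: locating $x_i$.} Since $\bx'$ is on one semicircle, so are the non-deviating agents; let $\carc{L,R}$ be the minimal arc containing all of them and $\delta=d(L,R)$. Then $\delta\le 1/2$, and $\delta>0$ (if $\delta=0$ then $\bx$ consists of one repeated point together with $x_i$, which is always within a semicircle, contradicting the hypothesis). Because $\bx$ and $\bx'$ differ only in coordinate $i$, if $x_i$ were in the semicircle spanned by $\bx'$ then $\bx$ would be semicircular; more precisely, requiring that $\{x_i\}\cup\carc{L,R}$ fit in no semicircle forces $x_i$ into the short open arc $\arc{\hat L,\hat R}$ between the antipodes of $L$ and $R$. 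Now choose coordinates so that the semicircle of $\bx'$ is $[0,\gamma]$ with $l'=0$, $r'=\gamma$, $\gamma\le 1/2$, and $m':=\cen(l',r')=\gamma/2$; since $L$ and $R$ lie in $[0,\gamma]$, the previous sentence places $x_i$ at a point $t$ with $1/2<t<\gamma+1/2$, in particular $t>1/2\ge\gamma$. This holds uniformly whether $x_i'$ is interior to the span of $\bx'$ or is one of its two extreme points (these cases differ only in whether $\{L,R\}$ coincides with $\{l',r'\}$, and up to reflection they give the same picture).

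\emph{Step 2: the computation.} From $1/2<t<\gamma+1/2$ one gets the genuine distances $d(x_i,l')=1-t$ and $d(x_i,r')=t-\gamma$, so
\[
\cost(\lrm(\bx'),x_i)=\frac14(1-t)+\frac14(t-\gamma)+\frac12\,d\!\left(x_i,\tfrac{\gamma}{2}\right).
\]
If $t\le \tfrac12+\tfrac{\gamma}{2}$ then $d(x_i,\gamma/2)=t-\gamma/2$ and the right-hand side equals $\tfrac14+\tfrac{t-\gamma}{2}>\tfrac14$, using $t>\gamma$. If $t>\tfrac12+\tfrac{\gamma}{2}$ then $d(x_i,\gamma/2)=1-t+\gamma/2$ and the right-hand side equals $\tfrac34-\tfrac{t}{2}\ge\tfrac14$, using $t<\gamma+1/2\le1$. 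Either way $\cost(\lrm(\bx'),x_i)\ge\tfrac14\ge\cost(\rc(\bx),x_i)$, which is the claim.

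\emph{Main obstacle.} The arithmetic in Step~2 is routine; the real work is Step~1 — arguing correctly that non-semicircularity of $\bx$ confines $x_i$ to $\arc{\hat L,\hat R}$, and verifying this uniformly across the placements of $x_i'$ (interior versus extreme in $\bx'$) and the degenerate configurations ($\delta=1/2$, or the non-deviating agents so clustered that $L=R$), which are either vacuous or dispatched by the very same computation.
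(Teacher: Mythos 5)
Your proposal is correct and matches the paper's proof: both invoke Lemma~\ref{lem:quarter} to bound $\cost(\rc(\bx),x_i)$ by $1/4$, deduce from the non-semicircularity of $\bx$ that $x_i$ must lie on the arc antipodal to the span of $\bx'$, and then verify $\cost(\lrm(\bx'),x_i)\geq 1/4$. The paper's closing computation is coordinate-free---it uses $d(x_i,l)+d(x_i,r)\geq 1/2$ (the long arc length) together with $d(x_i,\cen(l,r))\geq 1/4$, which avoids your two-case split---but the substance is identical.
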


\begin{proof}
By Lemma~\ref{lem:quarter} we have that $\cost(\rc(\bx),x_i)\leq 1/4$, therefore it will be sufficient to prove that $\cost(\lrm(\bx'),x_i)\geq 1/4$.

Let $l$ (for ``left'') and $r$ for ``right'' be the two extreme agent locations in $\bx'$, where $l\succeq r$. Note that since the agents were not on one semicircle under $\bx$, we have that $x_i\in \arc{\hat{l},\hat{r}}$. Let $y=\cen(l,r)$ be the center of the arc $\arc{l,r}$ (see Figure~\ref{fig:notsemi_to_semi}). We claim that $d(x_i,y)\geq 1/4$. This follows immediately from the facts that $d(\hat{l},y)\geq 1/4$, $d(\hat{r},y)\geq 1/4$, and $x_i\in \arc{\hat{l},\hat{r}}$. Hence, the cost of the mechanism is at least
$$
\cost(\lrm(\bx'),x_i)=\frac{1}{4}\cdot d(x_i,l) + \frac{1}{4} \cdot d(x_i,r) + \frac{1}{2} \cdot d(x_i,y) \geq \frac{1}{4}
\cdot(d(x_i,l)+d(x_i,r)) +  \frac{1}{2} \cdot \frac{1}{4} \geq \frac{1}{4} ~~ ,
$$
where the last transition follows from the fact that $d(x_i,l)+d(x_i,r)$ is the length of the \emph{long} arc between $l$ and $r$, therefore the value of this sum is at least 1/2.
\end{proof}

In order to deal with the last two deviations, we require one additional fundamental lemma. The lemma asserts that the cost of the RC Mechanism with respect to a point $y$ can only decrease if the point $y$ is added to the vector of locations. This is, in fact, the mirror image of Equation~\eqref{eq:antipod}, which is itself a special case of Lemma~\ref{lem:partition}.

\begin{lemma}
\label{lem:anti_wlog}
Let $\bx\in G^n$ such that $\bx$ is not on one semicircle, and let $y\in G$. Then
$$
\cost(\rc(\bx),y) \geq  \cost(\rc(\bx \cup \{y\}),y) ~~ .
$$
\end{lemma}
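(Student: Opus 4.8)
The plan is to localize the effect of adding an agent at $y$ to a single cell of the random-center partition. First I would recall that the RC Mechanism on a profile $\bx$ partitions the circle into the arcs between consecutive antipodal points of $\bx$ and, with probability equal to the length of each such arc, returns its center. Adding one more agent at $y$ introduces exactly one new antipodal point, $\hy$. If $\hy$ coincides with one of the existing antipodal points the partition is unchanged (up to a degenerate length-zero cell) and the two costs are equal, so I would assume $\hy$ lies in the interior of a unique cell, namely an arc $I$ between two consecutive antipodal points of $\bx$. Every other cell, its probability, and its returned center are then untouched by the deviation, and hence contribute identically to $\cost(\rc(\bx),y)$ and to $\cost(\rc(\bx\cup\{y\}),y)$; it therefore suffices to show that the contribution of $I$ does not increase when $I$ is split at $\hy$ into its two subarcs.

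Next I would set up coordinates on $I$. Since $\bx$ is not on one semicircle, no arc of length exceeding $1/2$ can be free of antipodal points of $\bx$ (otherwise the complementary arc, of length less than $1/2$, would contain every antipodal point, forcing all agents onto one semicircle), so the length $\ell$ of $I$ satisfies $\ell\le 1/2$; parametrize $I$ as $[0,\ell]$ with $\hy$ at position $t\in(0,\ell)$. The unsplit cell puts mass $\ell$ on the single point $\ell/2$, whereas the two cells obtained after the split put mass $t$ on $t/2$ and mass $\ell-t$ on $(t+\ell)/2$; a one-line computation shows these two measures have the same total mass $\ell$ and the same barycenter, since $\frac{t}{\ell}\cdot\frac t2+\frac{\ell-t}{\ell}\cdot\frac{t+\ell}{2}=\frac{\ell}{2}$. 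The crucial point is that, because $y$ and $\hy$ are antipodal, $d(y,p)=\frac12-d(\hy,p)=\frac12-|p-t|$ for every point $p\in I$ (here $\ell\le 1/2$ guarantees $|p-t|\le 1/2$, so the circular distance is just the coordinate distance), and the map $p\mapsto\frac12-|p-t|$ is concave on $[0,\ell]$. Hence, applying Jensen's inequality to the concave function $d(y,\cdot)$ and the two-point distribution supported on $t/2$ and $(t+\ell)/2$ with weights $t/\ell$ and $(\ell-t)/\ell$ (whose mean is $\ell/2$), and then multiplying by $\ell$,
$$
t\cdot d\!\left(y,\tfrac t2\right)+(\ell-t)\cdot d\!\left(y,\tfrac{t+\ell}{2}\right)\;\le\;\ell\cdot d\!\left(y,\tfrac{\ell}{2}\right),
$$
which is exactly the required inequality for the cell $I$; adding the (unchanged) contributions of all other cells yields $\cost(\rc(\bx\cup\{y\}),y)\le\cost(\rc(\bx),y)$.

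I expect the only genuine subtleties to be bookkeeping ones: checking that splitting at $\hy$ leaves every other cell of the partition, together with its probability and its returned center, literally unchanged; dealing with the degenerate case in which $\hy$ is an existing antipodal point and with the boundary case $\ell=1/2$; and justifying $d(y,p)=\frac12-|p-t|$ uniformly over $I$. The analytic core is a short convexity estimate, and it is precisely the ``mirror image'' of the estimate behind Equation~\eqref{eq:antipod}: the direction of the inequality is reversed here because the query point $y$ is the antipode of the split point $\hy$ rather than the split point itself, which replaces the convex function $d(\hy,\cdot)$ by the concave function $\frac12-d(\hy,\cdot)$. If one prefers to avoid invoking concavity abstractly, substituting $d(y,p)=\frac12-|p-t|$ reduces the displayed inequality to $\ell\,|\ell-2t|\le 2t^2-2\ell t+\ell^2$, which splits into the cases $t\le\ell/2$ and $t\ge\ell/2$ and becomes $0\le 2t^2$ and $0\le 2(t-\ell)^2$, respectively.
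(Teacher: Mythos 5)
Your proof is correct and follows essentially the same route as the paper's: you localize the change to the single cell of the antipodal-point partition containing $\hy$ and compare the split and unsplit contributions there. The only cosmetic difference is that you close the one-cell comparison via concavity of $d(y,\cdot)=\tfrac12-d(\hy,\cdot)$ and Jensen's inequality (with the elementary case analysis as backup), whereas the paper verifies the same inequality by a direct algebraic computation reducing to $d(y,q)+\lambda\geq d(y,p)$.
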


\begin{proof}
The cost incurred from the mechanism $\rc(\bx)$ with respect to $y$ is identical to the cost incurred from $\rc(\bx \cup \{y\})$ for all the intervals, except for the interval the point $\hy$ is on. Let $p$ and $q$ denote the antipodal points adjacent to $\hy$ such that $p \preceq \hy \preceq q$. Denote $\delta = d(p,\hy)$ and $\lambda = d(\hy,q)$, and assume without loss of generality that $\delta \leq \lambda$ (see Figure~\ref{fig:anti_wlog}). 

\begin{figure}
\begin{center}
\begin{tikzpicture}[scale=1.5]

\tikzstyle{blackdot}=[circle,draw=black,fill=black,thin,
inner sep=0pt,minimum size=1.5mm]
\tikzstyle{whitedot}=[circle,draw=black,fill=white,thin,
inner sep=0pt,minimum size=1.5mm]
\tikzstyle{blackdot}=[circle,draw=black,fill=black,thin,
inner sep=0pt,minimum size=1.5mm]
\tikzstyle{graydot}=[circle,draw=gray,fill=gray,thin,
inner sep=0pt,minimum size=1.5mm]

\draw (0,0) circle (1cm);

\draw[dashed] (300:1) -- (120:1);

\node at (300:1) [blackdot] {};
\draw (300:1.3) node {\small{$y$}};

\node at (120:1) [whitedot] {};
\draw (120:1.3) node {\small{$\hy$}};

\node at (70:1) [whitedot] {};
\draw (70:1.3) node {\small{$q$}};

\node at (150:1) [whitedot] {};
\draw (150:1.3) node {\small{$p$}};

\draw (95:0.85) node {\small{$\lambda$}};
\draw (135:0.85) node {\small{$\delta$}};

\node at (190:1) [whitedot] {};
\node at (240:1) [whitedot] {};
\node at (260:1) [whitedot] {};
\node at (45:1) [whitedot] {};
\node at (0:1) [whitedot] {};

\end{tikzpicture}
\end{center}
\caption{Illustration of the proof of Lemma~\ref{lem:anti_wlog}.}
\label{fig:anti_wlog}
\end{figure}

It is sufficient to show that the cost incurred when the random chosen point is on the arc $\carc{p,q}$ is lower under $\rc(\bx \cup \{y\})$ than under $\rc(\bx)$. The cost incurred by points on the arc $\carc{p,q}$ under $\rc(\bx \cup \{y\})$ is $\delta \left( d(y,p) + \delta/2 \right) + \lambda \left( d(y,q) + \lambda/2 \right)$.
The cost incurred by points on the arc $\carc{p,q}$ under $\rc(\bx)$ is $(\delta + \lambda) \left( d(y,q) + \halfx{\delta + \lambda} \right)$.
It holds that
\[
(\delta + \lambda) \left( d(y,q) + \halfx{\delta + \lambda} \right) \geq \delta \left( d(y,p) + \halfx{\delta} \right) + \lambda \left( d(y,q) + \halfx{\lambda} \right) \Leftrightarrow d(y,q) + \lambda \geq d(y,p).
\]
The inequality on the right hand side holds since $d(y,q) + \lambda = \half$ while $d(y,p) \leq \half$.
\end{proof}

\begin{lemma}[Semicircle to not semicircle]
\label{lem:semi_to_not_semi}
Assume that $\bx\in G^n$ is such that the agents are on one semicircle, but agent $i$ deviates such that in the new location profile $\bx'$ the agents are not on one semicircle. Then
$$
\cost(\lrm(\bx),x_i)\leq \cost(\rc(\bx'),x_i) ~~ .
$$
\end{lemma}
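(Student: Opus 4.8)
The plan is to compare the deviator's two costs through explicit formulas. Let $l,r$ be the two extreme agent locations of $\bx$ on its semicircle, put $u=d(x_i,l)$, $v=d(x_i,r)$, $\gamma=d(l,r)=u+v\le\frac12$, and set $W=\max(u,v)$. Since $x_i\in\carc{l,r}$ we have $d(x_i,\cen(l,r))=\frac{|u-v|}{2}$, so a direct calculation gives $\cost(\lrm(\bx),x_i)=\frac14\gamma+\frac12\cdot\frac{|u-v|}{2}=\frac{W}{2}$. Hence it suffices to prove $\cost(\rc(\bx'),x_i)\ge\frac{W}{2}$.

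The first step is the lower bound
$$\cost(\rc(\bx'),x_i)\ \ge\ \frac14-\min(a,b)^2,$$
where $\carc{p,q}$ is the arc between two consecutive antipodal points of $\bx'$ that contains $x_i$, and $a=d(x_i,p)$, $b=d(x_i,q)$. Indeed, adding an agent at $\hat{x}_i$ to $\bx'$ (equivalently, adding an antipodal point at $x_i$) leaves every arc of $\bx'$ unchanged except $\carc{p,q}$, which it splits at $x_i$; since $d(x_i,\cen(p,q))=\frac{|a-b|}{2}$, the definition of the RC Mechanism gives $\cost(\rc(\bx'\cup\{\hat{x}_i\}),x_i)=\cost(\rc(\bx'),x_i)+\left(\frac{a^2+b^2}{2}-\frac{|a^2-b^2|}{2}\right)=\cost(\rc(\bx'),x_i)+\min(a,b)^2$. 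Applying Lemma~\ref{lem:anti_wlog} to the profile $\bx'\cup\{\hat{x}_i\}$ (which is not on one semicircle, since $\bx'$ is not) and the point $x_i$ yields $\cost(\rc(\bx'\cup\{\hat{x}_i\}),x_i)\ge\cost(\rc(\bx'\cup\{\hat{x}_i,x_i\}),x_i)$; and in the profile $\bx'\cup\{\hat{x}_i,x_i\}$ both $x_i$ and $\hat{x}_i$ are antipodal points, so the two-semicircle computation in the proof of Lemma~\ref{lem:quarter} (namely Lemma~\ref{lem:interval} applied to each half) gives $\cost(\rc(\bx'\cup\{\hat{x}_i,x_i\}),x_i)=\frac14$. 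Combining these three facts proves the displayed inequality.

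The combinatorial core is the structural bound $\min(a,b)\le\frac12-W$. All $n-1$ non-deviating agents of $\bx'$ lie in $\carc{l,r}$, regardless of whether $i$ was an extreme agent of $\bx$, so every antipodal point of $\bx'$ except $\hat{x}_i'$ lies in the antipodal arc $\carc{\hat{l},\hat{r}}$; moreover one checks that $\bx'$ fails to lie on one semicircle only if $\hat{x}_i'\in\arc{l,r}$. Every antipodal point of $\bx'$ inside $\carc{\hat{l},\hat{r}}$ is at distance $\ge\frac12-W$ from $x_i$, since it is the antipode of a point of $\carc{l,r}$ and every point of $\carc{l,r}$ is within distance $W$ of $x_i$; therefore $\carc{p,q}$ cannot have both endpoints in $\carc{\hat{l},\hat{r}}$ (such an arc of $\bx'$ would have length $\ge 1-\gamma\ge\frac12$, impossible since $\bx'$ is not on a semicircle), so one of $p,q$ equals $\hat{x}_i'$. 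A short case analysis on which side of $x_i$ the point $\hat{x}_i'$ lies on, and on whether the near or the far extreme of $\bx$ sits on the opposite side, shows that either the other endpoint is the antipode of the far extreme, at distance exactly $\frac12-W$, or else $d(x_i,\hat{x}_i')<\gamma-W\le\frac12-W$; in both cases $\min(a,b)\le\frac12-W$. Combining with the first step, $\cost(\rc(\bx'),x_i)\ge\frac14-\left(\frac12-W\right)^2=W(1-W)\ge\frac{W}{2}=\cost(\lrm(\bx),x_i)$, using $W\le\gamma\le\frac12$, which is the desired inequality.

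I expect the main difficulty to be the bookkeeping in the third step — identifying the two endpoints of the arc of $\bx'$ through $x_i$ (one of them $\hat{x}_i'$, the other the antipode of $l$ or of $r$) while separating the sub-cases according to whether $x_i$ is an extreme agent of $\bx$, which side of $x_i$ the point $\hat{x}_i'$ falls on, and the degenerate possibilities $\hat{x}_i'=x_i$ and $\gamma=\frac12$ (where $\carc{l,r}$ and $\carc{\hat{l},\hat{r}}$ meet). Once this is in place the argument closes as above, and the first two steps are short given the earlier lemmas.
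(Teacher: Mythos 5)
Your proof is correct and follows essentially the same route as the paper's: both arguments compute $\cost(\lrm(\bx),x_i)$ exactly (your $\frac{W}{2}$ is the paper's $\frac14-\frac{\alpha}{2}-\frac{\beta}{2}$ with $W=\frac12-\alpha-\beta$), and both lower-bound $\cost(\rc(\bx'),x_i)$ by $\frac14-\left(\frac12-W\right)^2=W(1-W)\ge\frac{W}{2}$ using Lemma~\ref{lem:anti_wlog} together with the Lemma~\ref{lem:interval} two-semicircle computation. The only difference is bookkeeping in the middle: you add an agent at $\hat{x}_i$ and bound $\min(a,b)$ by locating the antipodal points adjacent to $x_i$ directly, whereas the paper adds an agent at $x_i$ and invokes Lemma~\ref{lem:partition} to push $\hat{x}_i'$ to the worst-case position $\hx_i^*$ at distance $\alpha+\beta=\frac12-W$ — the resulting quantities are identical.
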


\begin{proof}
Let $x_1\succeq x_2\cdots \succeq x_n$ be the location of the $n$ agents, and denote $l=x_1$ (the leftmost agent) and $r=x_n$. Let $i\in N$, and assume without loss of generality that $d(x_i,r)\leq d(x_i,l)$, or equivalently $d(x_i,\h{l})\leq d(x_i,\h{r})$. Let $\alpha = d(r, \h{l})$, and $\beta=d(r,x_i)$. 
Finally, let $x_i'$ denote the new location of agent $i$ (see Figure~\ref{fig:semi_to_notsemi}).

We first calculate the cost of $x_i$ in the location profile $\bx$ under the LRM mechanism. We have that $d(x_i,l)=\half - \alpha - \beta$, $d(r,l)=\half - \alpha$, and 
$$
d(x_i,\cen(l,r))=\quartx{1} - \halfx{\alpha}-\beta ~~ .
$$ 
Then 
\begin{equation}
\cost(\lrm(\bx),x_i) = \quartx{\beta} + \quartx{\half - \alpha - \beta} + \halfx{\quartx{1} - \halfx{\alpha}-\beta} = \quartx{1} - \halfx{\alpha} - \halfx{\beta} ~~ .
\end{equation}

\begin{figure}
\begin{center}
\begin{tikzpicture}[scale=1.5]

\tikzstyle{blackdot}=[circle,draw=black,fill=black,thin,
inner sep=0pt,minimum size=1.5mm]
\tikzstyle{whitedot}=[circle,draw=black,fill=white,thin,
inner sep=0pt,minimum size=1.5mm]
\tikzstyle{blackdot}=[circle,draw=black,fill=black,thin,
inner sep=0pt,minimum size=1.5mm]
\tikzstyle{graydot}=[circle,draw=gray,fill=gray,thin,
inner sep=0pt,minimum size=1.5mm]

\draw (0,0) circle (1cm);

\node (x1) at (-1,0) [blackdot] {};
\draw +(-1.3,0) node {\small{$l$}};
\node (hx1) at (1,0) [whitedot] {};
\draw +(1.3,0) node {\small{$\hat{l}$}};
\draw[dashed] (x1) -- (hx1);

\node (x2) at (330:1cm) [blackdot] {};
\draw (330:1.3cm) node {\small{$r$}};
\node (hx2) at (150:1cm) [whitedot] {};
\draw (150:1.3cm) node {\small{$\hat{r}$}};
\draw[dashed] (x2) -- (hx2); 

\draw [dashed] (60:1) -- (240:1);
\node at (60:1cm) [blackdot] {};
\draw (60:1.3cm) node {\small{$x_i^*$}};

\draw [dashed] (300:1) -- (120:1);

\node at (300:1cm) [graydot] {};
\draw (300:1.3cm) node {\small{$x_i$}};

\node at (120:1) [whitedot] {};
\draw (120:1.3) node {\small{$\hx_i$}};

\node at (240:1) [whitedot] {};
\draw (240:1.3cm) node {\small{$\hx_i^*$}};

\draw (345:0.85cm) node {\small{$\alpha$}};
\draw (315:0.85cm) node {\small{$\beta$}};

\draw (270:0.85) node {\small{$\alpha+\beta$}};

\end{tikzpicture}
\end{center}
\caption{Illustration of the proof of Lemma~\ref{lem:semi_to_not_semi}.}
\label{fig:semi_to_notsemi}
\end{figure}
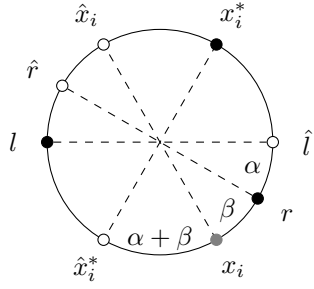

We now wish to give a lower bound on $\cost(\rc(\bx'),x_i)$. First, by Lemma~\ref{lem:anti_wlog} we have that 
$$
\cost(\rc(\bx'),x_i) \geq \cost(\rc(\bx'\cup\{x_i\}),x_i) ~~ .
$$
A subtle remark is that the above argument allows us to take $\h{r}$ into account even if $x_i=r$ and agent $i$ deviated to $x_i'$, a fact that allows us to avoid distinguishing this extreme case. 

Next, we would like to fix an ``optimal'' location for $\hx_i'$. First notice that since $\bx'$ is not on one semicircle, it must hold that $x_i'\in \arc{\h{r},\h{l}}$, that is, $\hx_i'\in\arc{r,l}$. Now, we ask: what is the location of $\hx_i'$ that yields the lowest cost with respect to $x_i$, when the RC Mechanism chooses a point in $G\setminus \arc{\h{l},\h{r}}$ (the long arc between $\h{l}$ and $\h{r}$)? There are no antipodal points, other than $\hx_i'$, in $G\setminus \arc{\h{l},\h{r}}$. Furthermore, we can treat this arc as an interval with respect to distances from $x_i$, since the short arcs $\carc{x_i,\h{l}}$ and $\carc{x_i,\h{r}}$ are contained in $G\setminus \arc{\h{l},\h{r}}$. Therefore, we can apply Lemma~\ref{lem:partition} (by normalizing the length of $G\setminus \arc{\h{l},\h{r}}$). In particular, let $\hx_i^*\succeq x_i$ such that 
$$
d(\hx_i^*,x_i)=d(x_i,\h{l})=\alpha+\beta ~~ ,
$$
and let $\bx^*=\langle x_i^*,x_{-i}\rangle$; see Figure~\ref{fig:semi_to_notsemi} for an illustration. From (the proof of) Lemma~\ref{lem:partition} it follows that 
$$
\cost(\rc(\bx'\cup\{x_i\}),x_i) \geq \cost(\rc(\bx^*\cup\{x_i\}),x_i) ~~ .
$$
Hence, it is sufficient to prove that 
\begin{equation}
\label{eq:lower_bound_xopt}
\cost(\rc(\xnewopt\cup\{x_i\}),x_i) \geq \quartx{1} - \halfx{\alpha} - \halfx{\beta} ~~ .
\end{equation}

We calculate the cost of the mechanism $\rc(\xnewopt \cup \{x_i\})$ with respect to agent $i$. Let $y$ be the point chosen by the mechanism. By the choice of $\xopt$, if $y$ is on the arc $\carc{\h{l},\hx_i^*}$, the cost is zero. If $y \in \carc{\h{l},\hx_i}$, we apply Lemma~\ref{lem:interval} with $y_1=x_i$, $y_{m+1}=\hx_i$, $y_j = \hx_{j-1}$ for $j=2, \ldots, m+1$, and $d_1=\alpha+\beta$. We get:
\[
\halfx{(\alpha+\beta)^2} + \sum_{j=2}^{m} \(d_j \cdot\(\alpha + \beta + \sum_{k=2}^{j-1} d_k + \frac{d_j}{2}  \)  \) = \frac{1}{8} ~~ .
\]
Thus,
\begin{equation}
\sum_{j=2}^{m} \(d_j \cdot\(\alpha + \beta + \sum_{k=2}^{j-1} d_k + \frac{d_j}{2}  \)  \) = \frac{1}{8} - \halfx{(\alpha+\beta)^2} ~~ .
\label{eq:cost_semi}
\end{equation}
Similarly, if $y \in \carc{\hx_i,\hx^*_i}$, we apply Lemma~\ref{lem:interval} with $y_1=x_i$, $y_2=\hx^*_i$, $y_{m+1}=\hx_i$, $y_j = \hx_{n-j+3}$ for $j=3,\ldots,m+1$, and $d_1=\alpha+\beta$. We get the same expected cost as in Equation~\eqref{eq:cost_semi}.

Taken together, 
$$
\cost(\rc(\xnewopt \cup \{x_i\}),x_i) = 2 \left(\frac{1}{8} - \halfx{(\alpha+\beta)^2}\right) = \quartx{1} - (\alpha + \beta)^2 ~~ .
$$ 
It holds that
\[
\quartx{1} - (\alpha + \beta)^2 \geq \quartx{1} - \halfx{\alpha} - \halfx{\beta} \Leftrightarrow \alpha + \beta \leq \half,
\]
but the last inequality follows directly from the fact that $x_i\in\carc{r,l}$. This establishes Equation~\eqref{eq:lower_bound_xopt}, and thus completes the proof of Lemma~\ref{lem:semi_to_not_semi}.
\end{proof}

\begin{lemma}[Not semicircle to not semicircle]
\label{lem:not_semi_to_not_semi}
Assume that $\bx\in G^n$ is such that the agents are not on one semicircle, and agent $i$ deviates such that in the new location profile $\bx'$ the agents are not on one semicircle. Then
$$
\cost(\rc(\bx),x_i)\leq \cost(\rc(\bx'),x_i) ~~ .
$$
\end{lemma}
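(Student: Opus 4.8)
The plan is to deduce the statement from two facts already in hand, Lemma~\ref{lem:anti_wlog} and Lemma~\ref{lem:partition}, by routing through the single auxiliary profile obtained from the honest profile by inserting one extra agent at the deviated location. Observe that $\bx\cup\{x_i'\}$ (all true locations, plus an extra agent at $x_i'$) is literally the same multiset of locations as $\bx'\cup\{x_i\}$; since the RC Mechanism depends only on the set of antipodal points, it therefore suffices to establish the chain
\[
\cost(\rc(\bx'),x_i)\;\geq\;\cost(\rc(\bx'\cup\{x_i\}),x_i)\;=\;\cost(\rc(\bx\cup\{x_i'\}),x_i)\;\geq\;\cost(\rc(\bx),x_i).
\]
The first inequality is exactly Lemma~\ref{lem:anti_wlog} applied to $\bx'$ (which is not on one semicircle) and the point $x_i$: adding the point $x_i$ to the profile cannot increase the RC-cost measured with respect to $x_i$. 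The middle equality is the observation above. Hence the whole content is the last inequality: adding a single extra agent, at $x_i'$, to the honest profile $\bx$ cannot \emph{decrease} the RC-cost with respect to $x_i$.

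To prove that last inequality, consider the partition of the circle into arcs determined by the antipodal points of $\bx$, and recall (as in the proof of Lemma~\ref{lem:quarter}) that since $\bx$ is not on one semicircle every such arc has length $<1/2$ and may be treated as an interval for the purpose of computing distances from $x_i$. The crucial structural point is that $\hx_i$ is one of those antipodal points — agent $i$ sits at $x_i$ in $\bx$ — hence a \emph{boundary} point of the partition, so no arc contains $\hx_i$ in its interior. Consequently, along any single arc of the partition the function $d(x_i,\cdot)$ is either monotone (this is the case for every arc lying in one of the two semicircles cut out by $\{x_i,\hx_i\}$) or ``V-shaped'' (this happens only for the unique arc containing $x_i$ in its interior), but never ``$\Lambda$-shaped''. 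Now adding an agent at $x_i'$ adds the single antipodal point $\hx_i'$, which subdivides exactly one arc $\carc{p,q}$ of the partition into $\carc{p,\hx_i'}$ and $\carc{\hx_i',q}$; every other arc, with its length (= probability) and its center, is untouched, so its contribution to the RC-cost of $x_i$ is unchanged. For the subdivided arc we compare returning $\cen(p,q)$ with probability $d(p,q)$ against returning the center of each half with probability equal to its length: if $d(x_i,\cdot)$ is affine along $\carc{p,q}$ a one-line computation shows the two contributions are equal, while if $x_i$ is interior to $\carc{p,q}$ this is precisely Lemma~\ref{lem:partition} (normalize $\carc{p,q}$ to $[0,1]$, with the split point $\hx_i'$ in the role of $y$), which gives that the subdivided contribution is at least the original. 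In either case the total RC-cost with respect to $x_i$ weakly increases, which is the last inequality.

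The main obstacle is exactly the sign of this last comparison: subdividing an arc on which $d(x_i,\cdot)$ were $\Lambda$-shaped, with its peak in the interior, would strictly \emph{decrease} the contribution and the argument would fail. What saves us is that such an arc would have to contain $\hx_i$ in its interior, whereas $\hx_i$ is guaranteed to be an antipodal point of the \emph{honest} profile $\bx$ and hence a partition endpoint — this is the same ``$\hx_i$ is among the antipodal points'' phenomenon exploited in Lemmas~\ref{lem:quarter} and~\ref{lem:anti_wlog}, now used in the opposite direction. Two degenerate configurations need only a word: if some agent of $\bx$ is antipodal to $x_i$ (so $x_i$ is itself a partition endpoint) then $d(x_i,\cdot)$ is monotone on every arc and the last inequality holds with equality; and if $\hx_i'$ coincides with an already-present antipodal point then the partition does not change at all. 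Neither case causes trouble, so the proof is complete.
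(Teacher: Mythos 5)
Your proof is correct and takes essentially the same route as the paper's: Lemma~\ref{lem:anti_wlog} gives $\cost(\rc(\bx'),x_i)\geq\cost(\rc(\bx'\cup\{x_i\}),x_i)$, the multisets $\bx'\cup\{x_i\}$ and $\bx\cup\{x_i'\}$ coincide, and the remaining inequality is obtained by noting that inserting $\hx_i'$ leaves every arc's contribution unchanged except possibly the one containing $x_i$, where Lemma~\ref{lem:partition} applies. Your explicit justification of why no arc can be $\Lambda$-shaped (because $\hx_i$ is itself a partition endpoint of the honest profile) and your treatment of the degenerate configurations make precise points the paper leaves implicit, but the underlying argument is the same.
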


\begin{proof}
The proof of the lemma follows quite directly from the previous lemmata. Indeed, by Lemma~\ref{lem:anti_wlog} we have that 
$$
\cost(\rc(\bx'\cup\{x_i\}),x_i)\leq \cost(\rc(\bx'),x_i) ~~ .
$$
Let $p,q\in G$ be the two antipodal points adjacent to $x_i$. By Lemma~\ref{lem:interval}, for any $\hx_i'\in G\setminus \carc{p,q}$, 
$$
\cost(\rc(\bx),x_i)=\cost(\rc(\bx'\cup\{x_i\}),x_i) ~~ .
$$
Hence, it sufficient to handle the case where $\hx_i'\in\carc{p,q}$. Notice that $x_i\in\carc{p,q}$, that is, $x_i$ is in the \emph{short} arc between $p$ and $q$, by our assumption that the agents in $\bx$ are not on one semicircle. Therefore, we can apply Lemma~\ref{lem:partition} to this arc (by normalizing its length, and replacing $x$ in the lemma by $x_i$ and $y$ by $\hx_i'$). It follows that the optimal location for $\hx_i'$ is the edges of the arc $\carc{p,q}$, which directly means that 
$$
\cost(\rc(\bx),x_i)\leq\cost(\rc(\bx'\cup\{x_i\}),x_i) ~~ . 
$$
\end{proof}

\section{Proof of Theorem~\ref{thm:mm_rand_lb}}
\label{app:mm_rand_lb}

Let $m,k\in\mathbb{N}$, whose value remains to be determined. The construction of the graph $G$ is recursive, and depends on $m$ and $k$. We start with an edge of length 1, which connects the vertices $l^0$ (for ``left'') and $r^0$ (for ``right''). The vertex $l^0$ is connected to $m$ vertices via edges of length 1; these vertices are called \emph{left vertices on level 1}. Each left vertex on level 1 is connected to $m$ vertices via $m$ edges of length 2; these vertices are called \emph{left vertices on level 2}. In general, each left vertex on level $d$ is connected to $m$ left vertices on level $d+1$ via edges of length $2^d$. The maximum level is $k$, that is, the left vertices on level $k$ are leaves. The construction is symmetric with respect to the right vertices, i.e., $r^0$ is connected to $m$ right vertices on level 1 via edges of length 1, and so on.

Now, let $f:G^n\rightarrow \Delta(G)$ be a randomized SP mechanism. Assume for ease of exposition that $2m^k$ divides $n$, and consider a location profile $\bx^0$ where there are $n/2$ agents at $l^0$ and $n/2$ agents at $r^0$; this profile is illustrated in Figure~\ref{sub:a}. Clearly, since the distance between $l^0$ and $r^0$ is 1, we have that $\exp[d(f(\bx^0),l^0)]\geq 1/2$, or $\exp[d(f(\bx^0),r^0)]\geq 1/2$; assume without loss of generality that the former statement is true.

\begin{figure}[t]
\centering
\subfigure[Location profile $\bx^0$.]
{
\label{sub:a}

\begin{tikzpicture}[scale=1]
\tikzstyle{dot}=[circle,draw=black,fill=white,thin,
inner sep=0pt,minimum size=5mm]

\node (l0) at (0,0) [dot] {\tiny{$\frac{n}{2}$}};
\node at (0.2,-0.4) {\small{$l^0$}};
\node (r0) at (1,0) [dot] {\tiny{$\frac{n}{2}$}};
\node at (0.8,-0.4) {\small{$r^0$}};
\path (l0) edge node [above] {\tiny{1}} (r0);

\node (l1) at (-0.7,0.7) [dot] {};

\node (l11) at (0,2) [dot] {};
\node (l12) at (-0.7,2.4) [dot] {};
\node (l13) at (-1.4,2) [dot] {};

\path (l1) edge node [right] {\tiny{2}} (l11);
\path (l1) edge node [right] {\tiny{2}} (l12);
\path (l1) edge node [right] {\tiny{2}} (l13);

\node (l2) at (-1,0) [dot] {};

\node (l21) at (-2.3,0.7) [dot] {};
\node (l22) at (-2.7,0) [dot] {};
\node (l23) at (-2.3,-0.7) [dot] {};

\path (l2) edge node [above] {\tiny{2}} (l21);
\path (l2) edge node [above] {\tiny{2}} (l22);
\path (l2) edge node [above] {\tiny{2}} (l23);

\node (l3) at (-0.7,-0.7) [dot] {};

\node (l31) at (0,-2) [dot] {};
\node (l32) at (-0.7,-2.4) [dot] {};
\node (l33) at (-1.4,-2) [dot] {};

\path (l3) edge node [right] {\tiny{2}} (l31);
\path (l3) edge node [right] {\tiny{2}} (l32);
\path (l3) edge node [right] {\tiny{2}} (l33);

\path (l0) edge node [above] {\tiny{1}} (l1);
\path (l0) edge node [above] {\tiny{1}} (l2);
\path (l0) edge node [above] {\tiny{1}} (l3);

\node (r1) at (1.7,0.7) [dot] {};

\node (r11) at (2.4,2) [dot] {};
\node (r12) at (1.7,2.4) [dot] {};
\node (r13) at (1,2) [dot] {};

\path (r1) edge node [left] {\tiny{2}} (r11);
\path (r1) edge node [left] {\tiny{2}} (r12);
\path (r1) edge node [left] {\tiny{2}} (r13);

\node (r2) at (2,0) [dot] {};

\node (r21) at (3.3,0.7) [dot] {};
\node (r22) at (3.7,0) [dot] {};
\node (r23) at (3.3,-0.7) [dot] {};

\path (r2) edge node [above] {\tiny{2}} (r21);
\path (r2) edge node [above] {\tiny{2}} (r22);
\path (r2) edge node [above] {\tiny{2}} (r23);

\node (r3) at (1.7,-0.7) [dot] {};

\node (r31) at (2.4,-2) [dot] {};
\node (r32) at (1.7,-2.4) [dot] {};
\node (r33) at (1,-2) [dot] {};

\path (r3) edge node [left] {\tiny{2}} (r31);
\path (r3) edge node [left] {\tiny{2}} (r32);
\path (r3) edge node [left] {\tiny{2}} (r33);

\path (r0) edge node [above] {\tiny{1}} (r1);
\path (r0) edge node [above] {\tiny{1}} (r2);
\path (r0) edge node [above] {\tiny{1}} (r3);

\end{tikzpicture}
}
\hspace{1cm}
\subfigure[Location profile $\bx^1$.]
{

\label{sub:b}

\begin{tikzpicture}[scale=1]
\tikzstyle{dot}=[circle,draw=black,fill=white,thin,
inner sep=0pt,minimum size=5mm]

\node (l0) at (0,0) [dot] {};
\node at (0.2,-0.4) {\small{$l^0$}};
\node (r0) at (1,0) [dot] {\tiny{$\frac{n}{2}$}};
\node at (0.8,-0.4) {\small{$r^0$}};
\path (l0) edge node [above] {\tiny{1}} (r0);

\node (l1) at (-0.7,0.7) [dot] {\tiny{$\frac{n}{2m}$}};

\node (l11) at (0,2) [dot] {};
\node (l12) at (-0.7,2.4) [dot] {};
\node (l13) at (-1.4,2) [dot] {};

\path (l1) edge node [right] {\tiny{2}} (l11);
\path (l1) edge node [right] {\tiny{2}} (l12);
\path (l1) edge node [right] {\tiny{2}} (l13);

\node (l2) at (-1,0) [dot] {\tiny{$\frac{n}{2m}$}};

\node (l21) at (-2.3,0.7) [dot] {};
\node (l22) at (-2.7,0) [dot] {};
\node (l23) at (-2.3,-0.7) [dot] {};

\path (l2) edge node [above] {\tiny{2}} (l21);
\path (l2) edge node [above] {\tiny{2}} (l22);
\path (l2) edge node [above] {\tiny{2}} (l23);

\node (l3) at (-0.7,-0.7) [dot] {\tiny{$\frac{n}{2m}$}};

\node (l31) at (0,-2) [dot] {};
\node (l32) at (-0.7,-2.4) [dot] {};
\node (l33) at (-1.4,-2) [dot] {};

\path (l3) edge node [right] {\tiny{2}} (l31);
\path (l3) edge node [right] {\tiny{2}} (l32);
\path (l3) edge node [right] {\tiny{2}} (l33);

\path (l0) edge node [above] {\tiny{1}} (l1);
\path (l0) edge node [above] {\tiny{1}} (l2);
\path (l0) edge node [above] {\tiny{1}} (l3);

\node (r1) at (1.7,0.7) [dot] {};

\node (r11) at (2.4,2) [dot] {};
\node (r12) at (1.7,2.4) [dot] {};
\node (r13) at (1,2) [dot] {};

\path (r1) edge node [left] {\tiny{2}} (r11);
\path (r1) edge node [left] {\tiny{2}} (r12);
\path (r1) edge node [left] {\tiny{2}} (r13);

\node (r2) at (2,0) [dot] {};

\node (r21) at (3.3,0.7) [dot] {};
\node (r22) at (3.7,0) [dot] {};
\node (r23) at (3.3,-0.7) [dot] {};

\path (r2) edge node [above] {\tiny{2}} (r21);
\path (r2) edge node [above] {\tiny{2}} (r22);
\path (r2) edge node [above] {\tiny{2}} (r23);

\node (r3) at (1.7,-0.7) [dot] {};

\node (r31) at (2.4,-2) [dot] {};
\node (r32) at (1.7,-2.4) [dot] {};
\node (r33) at (1,-2) [dot] {};

\path (r3) edge node [left] {\tiny{2}} (r31);
\path (r3) edge node [left] {\tiny{2}} (r32);
\path (r3) edge node [left] {\tiny{2}} (r33);

\path (r0) edge node [above] {\tiny{1}} (r1);
\path (r0) edge node [above] {\tiny{1}} (r2);
\path (r0) edge node [above] {\tiny{1}} (r3);

\end{tikzpicture}

}
\caption{An Illustration of the proof of Theorem~\ref{thm:mm_rand_lb}, for $m=3$ and $k=2$. A number inside a node indicates the number of agents that are located at this node.}
\label{fig:mm_rand_lb}
\end{figure}
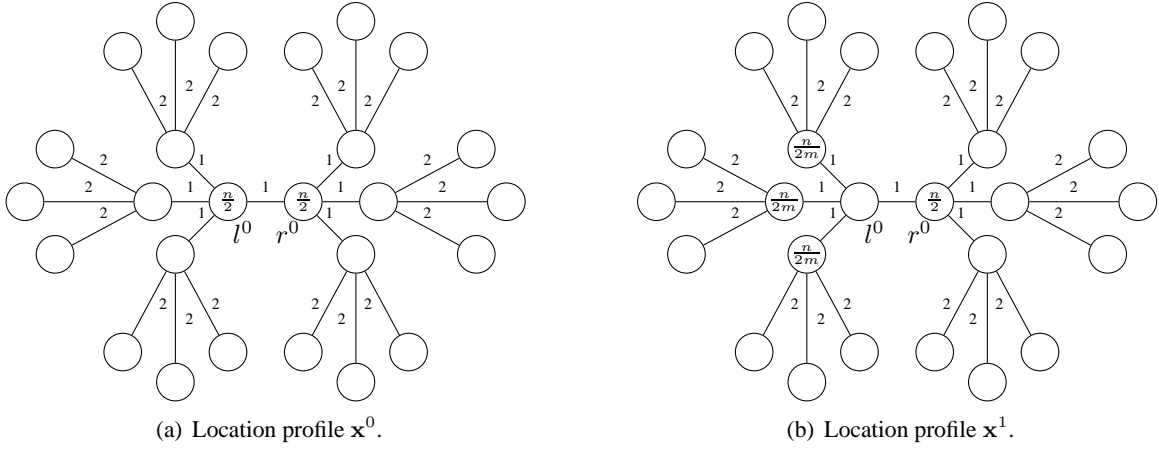

Next, consider the location profile $\bx^1$, where we have $n/2m$ agents in each left vertex on level 1, and $n/2$ agents in $r^0$; this profile is illustrated is Figure~\ref{sub:b}. We claim that it still holds that $\exp[d(f(\bx^0),l^0)]\geq 1/2$. Indeed, $\bx^1$ can be obtained from $\bx^0$ by moving the agents, one by one, from $l^0$ to the left vertices on level 1. Due to strategyproofness, the expected distance from $l^0$ cannot decrease after each deviation. Since the expected distance was initially at least 1/2, this is true after the $n/2$ agents have all deviated.

Denote the left vertices on level 1 by $l^1_1,\ldots,l^1_m$. For any point $y\in G$ such that $d(y,l^0)\geq 1/2$, we have that
$$
\sum_{j=1}^m d(y,l^1_j) \geq m + (m-2) \cdot \frac{1}{2} = \frac{3m-2}{2} ~~ .
$$
Since this is true point-wise, it is also true that the sum of expected distances between $f(\bx^1)$ and the left vertices on level 1 is at least $\frac{3m-2}{2}$, that is,
$$
\sum_{j=1}^m \exp[d(f(\bx^1),l^1_j)] \geq \frac{3m-2}{2} ~~ .
$$
By averaging over these $m$ vertices we conclude that there exists a left vertex on level 1, denoted $l^1$, such that $\exp[d(f(\bx^1),l^1)]\geq 3/2-1/m$.

We subsequently consider the location profile $\bx^2$ that is obtained from $\bx^1$ by moving the $n/(2m)$ agents from $l^1$ to its $m$ neighbors on level 2, such that each left vertex on level 2 that is adjacent to $l^1$ has $n/(2m^2)$ agents. By similar arguments as before, we get that $\exp[d(f(\bx^2),l^1)]\geq 3/2-1/m$, and, therefore, that there exists a left vertex on level 2 that is adjacent to $l^1$, call it $l^2$, such that
$$
\exp[d(f(\bx^2),l^2)]\geq \frac{2m + (m-2)\left(\frac{3}{2}-\frac{1}{m}\right)}{m}\geq \frac{7}{2} - \frac{4}{m} ~~ .
$$

We inductively build location profiles $\bx^3,\bx^4,\ldots,\bx^k$ in this fashion. We then have the following claim.

\begin{lemma}
\label{lem:ind}
There exists a left vertex on level $k$, denoted $l^k$, such that
\begin{equation}
\label{eq:k}
\exp[d(f(\bx^k),l^k)]\geq \frac{2^{k+1}-1}{2} - \frac{2^{k+1}-(k+2)}{m} ~~ .
\end{equation}
\end{lemma}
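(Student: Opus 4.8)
The plan is to prove, by induction on the level $d\in\{0,1,\dots,k\}$, the following strengthening of the statement: the profile $\bx^d$ has $n/(2m^d)$ agents located at some left vertex $l^d$ on level $d$, and
\[
\exp[d(f(\bx^d),l^d)]\ \ge\ b_d\ :=\ \frac{2^{d+1}-1}{2}-\frac{2^{d+1}-(d+2)}{m}\,.
\]
One checks $b_0=1/2$, $b_1=3/2-1/m$, $b_2=7/2-4/m$, so this matches the computations already performed, and the case $d=k$ is exactly \eqref{eq:k}. The base case $d=0$ is the observation made just before the lemma: since $d(l^0,r^0)=1$ and $\bx^0$ places $n/2$ agents at each of $l^0,r^0$, we have $d(y,l^0)+d(y,r^0)\ge 1$ for every $y\in G$ pointwise, so $\exp[d(f(\bx^0),l^0)]\ge 1/2=b_0$ under the assumed WLOG choice, and $l^0$ carries $n/2=n/(2m^0)$ agents.

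For the inductive step I would build $\bx^{d+1}$ from $\bx^d$ by relocating the $n/(2m^d)$ agents at $l^d$ to the $m$ level-$(d+1)$ left vertices $l^{d+1}_1,\dots,l^{d+1}_m$ adjacent to $l^d$, one agent at a time, ending with $n/(2m^{d+1})$ at each (the assumption $2m^k\mid n$ keeps every count an integer). Each single relocation is a unilateral deviation by an agent whose true location is $l^d$, so strategyproofness makes $\exp[d(f(\cdot),l^d)]$ nondecreasing along the chain, giving $\exp[d(f(\bx^{d+1}),l^d)]\ge\exp[d(f(\bx^d),l^d)]\ge b_d$. The geometric core is the pointwise estimate
\[
\sum_{j=1}^m d(y,l^{d+1}_j)\ \ge\ m\cdot 2^d+(m-2)\,d(y,l^d)\qquad\text{for all }y\in G:
\]
the path from $y$ to $l^{d+1}_j$ passes through the cut vertex $l^d$ for all but at most one index $j$ (the one whose pendant subtree, if any, contains $y$), contributing exactly $d(y,l^d)+2^d$ in each such case, while the single remaining term is $\ge 2^d-d(y,l^d)$ by the triangle inequality; summing and using $m\ge 2$ yields the estimate, with equality when $y$ lies on one of the edges $(l^d,l^{d+1}_j)$. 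Taking $\exp[\cdot]$ over $y\sim f(\bx^{d+1})$ and inserting the bound from the previous sentence gives $\sum_{j=1}^m\exp[d(f(\bx^{d+1}),l^{d+1}_j)]\ge m\cdot 2^d+(m-2)b_d$, so by averaging some vertex --- call it $l^{d+1}$, which carries $n/(2m^{d+1})$ agents --- satisfies $\exp[d(f(\bx^{d+1}),l^{d+1})]\ge 2^d+(1-\tfrac2m)b_d$.

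To close the induction it remains to check $2^d+(1-\tfrac2m)b_d\ge b_{d+1}$; expanding the left side gives $2^d+(1-\tfrac2m)b_d=b_{d+1}+\tfrac{2(2^{d+1}-(d+2))}{m^2}$, and $2^{d+1}\ge d+2$ for all $d\ge 0$, so the surplus term is nonnegative and indeed $\exp[d(f(\bx^{d+1}),l^{d+1})]\ge b_{d+1}$. Iterating from $d=0$ up to $d=k-1$ proves the lemma. I expect the only delicate point to be the case analysis behind the pointwise estimate --- keeping track of which child's pendant subtree $y$ may lie in and verifying the bound is tight on the edges $(l^d,l^{d+1}_j)$ --- together with the arithmetic confirming that the advertised closed form $b_d$ genuinely satisfies $b_{d+1}\le 2^d+(1-2/m)b_d$; the strategyproofness chaining and the averaging step are routine.
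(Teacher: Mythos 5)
Your proposal is correct and follows essentially the same route as the paper: induction on the level, strategyproofness applied to a one-agent-at-a-time relocation chain, the cut-vertex/triangle-inequality estimate $\sum_j d(y,l^{d+1}_j)\geq m\cdot 2^d+(m-2)d(y,l^d)$, averaging over the $m$ children, and the same arithmetic verification that the surplus $\frac{2(2^{d+1}-(d+2))}{m^2}$ is nonnegative. If anything, your formulation of the pointwise linear bound (valid for all $y$, then passed through linearity of expectation) is slightly cleaner than the paper's phrasing, which states the estimate only for points with $d(y,l^0)\geq 1/2$ before invoking it in expectation.
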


\begin{proof}
We prove the lemma by induction on the level $d$. For $d=0$, we have that
$$
\frac{2^{0+1}-1}{2} -\frac{2^{0+1}-(0+2)}{m} = \frac{1}{2} ~~ ,
$$
which is indeed the lower bound that we have obtained for $\exp[d(f(\bx^0),l^0)]$.

Let us assume that there exists a left vertex on level $d$, denoted $l^d$, such that
$$
\exp[d(f(\bx^d),l^d)]\geq \frac{2^{d+1}-1}{2} - \frac{2^{d+1}-(d+2)}{m} ~~ .
$$
Constructing $\bx^{d+1}$ along the lines given above, we get that there is a left vertex on level $d+1$, denoted $l^{d+1}$, such that
\begin{align*}
\exp[d(f(\bx^{d+1}),l^{d+1})]& \geq \frac{1}{m}\left[2^d\cdot m + (m-2)\left( \frac{2^{d+1}-1}{2} - \frac{2^{d+1}-(d+2)}{m} \right) \right]\\
& \geq \frac{2^{d+2}-1}{2} - \frac{2^{d+2}-(d+3)}{m} ~~ . \qedhere
\end{align*}
\end{proof}

From Lemma~\ref{lem:ind} we obtain a location profile $\bx^k$ and a left vertex on level $k$, $l^k$, such that Equation~\eqref{eq:k} holds, hence the expected maximum distance is at least the expression at the right hand side of Equation~\eqref{eq:k}. On the other hand, under $\bx^k$ the solution that locates the facility at $l^{k-1}$ has a maximum cost of $2^{k-1}$. The ratio is at least
$$
\frac{\mc(f(\bx^k),\bx^k)}{\mc(l^{k-1},\bx^k)}\geq\frac{\frac{2^{k+1}-1}{2} - \frac{2^{k+1}-(k+2)}{m}}{2^{k-1}} \geq 2 - \frac{1}{2^k} - \frac{4}{m} ~~ .
$$
Note that we have to choose $m$ and $k$ such that $n/(2m^k)\geq 1$, so that we still have at least one agent at each of $m$ left vertices on level $k$ in our construction. By taking $k=\Theta\left(\sqrt{\log n}\right)$, and
$$
m = \Theta\left(n^{1/k}\right) = \Theta\left(n^{\frac{1}{\sqrt{\log n}}}\right) ~~ ,
$$
we satisfy the above constraint and get that the approximation ratio of $f$ is as announced.
\qed
\end{document}